\g@addto@macro\bfseries{\boldmath}
\def\maketitle{
\@author@finish
\title@column\titleblock@produce
\suppressfloats[t]}
\DeclareMathOperator{\poly}{poly}
\newcommand{\Lin}{\operatorname{L}}
\newcommand{\Err}{E} 
\newcommand{\Density}{\operatorname{D}}
\newcommand{\id}{\mathds{1}}
\newcommand{\acc}{{\mathrm{accept}}}
\newcommand{\E}{\mathbb{E}}
\theoremstyle{plain}
\newtheorem{theorem}{Theorem}
\newtheorem{lemma}{Lemma}
\newtheorem{corollary}{Corollary}
\theoremstyle{definition}
\newtheorem{definition}{Definition}
\newtheorem{proposition}{Proposition}
\newtheorem{remark}{Remark}
\renewcommand{\qedsymbol}{$\blacksaare$}
\renewcommand{\qedsymbol}{\unskip\nobreak\quad\qedsymbol}
\renewcommand{\qedsymbol}{$\blacksquare$}
\newcommand{\be}{\begin{equation}\begin{aligned}\hspace{0pt}}
\newcommand{\ee}{\end{aligned}\end{equation}}
\newcommand{\ba}{\begin{eqnarray}}
\newcommand{\ea}{\end{eqnarray}}
\definecolor{burgundy}{rgb}{0.5, 0.0, 0.13}
\definecolor{dblue}{RGB}{14, 34, 102}
\newsavebox{\mstrut}
\newcommand{\bbra}[1]{%
    \sbox{\mstrut}{\(#1\)}%
    \mathinner{\left\langle\kern-0.5\ht\mstrut\left\langle{#1}\right|\mkern-2mu\right|}%
}
\newcommand{\kett}[1]{%
    \sbox{\mstrut}{\(#1\)}%
    \mathinner{\left|\mkern-2mu\left|{#1}\right\rangle\kern-0.5\ht\mstrut\right\rangle}%
}
\crefname{algocf}{Protocol}{Algorithms}
\begin{document}

\title{Constant Overhead Entanglement Distillation via Scrambling}

\author{Andi Gu}\email{andigu@g.harvard.edu}
\affiliation{Department of Physics, Harvard University, Cambridge, MA 02138, USA}

\author{Lorenzo Leone}\email{lorenzo.leone@fu-berlin.de}
\affiliation{Dahlem Center for Complex Quantum Systems, Freie Universit\"{a}t Berlin, 14195 Berlin, Germany}

\author{Kenneth Goodenough}\email{kdgoodenough@gmail.com}
\affiliation{Manning College of Information and Computer Sciences, University of Massachusetts Amherst, Amherst, MA 01002, USA}

\author{Sumeet Khatri}\email{skhatri@vt.edu}
\affiliation{Department of Computer Science, Virginia Tech, Blacksburg, VA 24061, USA}
\affiliation{Virginia Tech Center for Quantum Information Science and Engineering, Blacksburg, VA 24061, USA}

\let\oldaddcontentsline\addcontentsline 
\renewcommand{\addcontentsline}[3]{\oldaddcontentsline{#1}{lot}{#3}} 

\begin{abstract}
High-fidelity quantum entanglement enables key quantum networking capabilities such as secure communication and distributed quantum computing, but long-distance entanglement distribution is limited by noise and loss. Entanglement distillation protocols address this problem by extracting high-fidelity Bell pairs from multiple noisy ones. The primary objective is minimizing the resource overhead: the number of noisy input pairs needed to distill each high-fidelity output pair. While protocols achieving optimal overhead are known in theory, they often require complex decoding operations that make practical implementation challenging. We circumvent this challenge by introducing protocols that use quantum scrambling --- the spreading of quantum information under chaotic dynamics --- through random Clifford operations. Based on this scrambling mechanism, our protocol maintains asymptotically \emph{constant} overhead, independent of the desired output error rate $\bar{\varepsilon}$, and can be implemented with shallow quantum circuits of depth $O(\poly \log \log \bar{\varepsilon}^{-1})$ and memory $O(\poly \log \bar{\varepsilon}^{-1})$. Our protocol remains effective even with noisy quantum gates. 
By incorporating error correction, our protocol achieves state-of-the-art performance: starting with pairs of 10\% initial infidelity, we require only 7 noisy inputs per output pair to distill a single Bell pair with infidelity $\bar{\varepsilon}=10^{-12}$, substantially outperforming existing schemes. We demonstrate the utility of our protocols for quantum repeater networks.
\end{abstract}

\maketitle


Quantum entanglement is a fundamental resource for distributed quantum technologies, enabling secure communication~\cite{BB84,Eke91}, networked computation~\cite{BBC+93,gottesman1999gateteleportation,eisert2000nonlocalgates,nielsen2003MBQC,leung2004MBQC}, and enhanced sensing across quantum devices~\cite{toth2012multipartitemetrology,hyllus2012multipartitemetrology,ZZS18,XZCZ19,guo2020distributedsensing,gottesman2012baseline}. However, distributing entangled states between distant parties is challenging due to noise and loss, which degrade entanglement over long distances~\cite{Kim08}. This has motivated the development of \textit{entanglement distillation} protocols, which extract fewer but high-fidelity entangled states from multiple noisy copies through local operations and classical communication (LOCC).

Foundational works~\cite{BDSW96,BBP96,bennett1996concentrating,DEJMPS96} have shown 
that LOCC can increase fidelity, revealing deep connections between entanglement distillation and quantum error correction: error correction protects against noise accumulating over time, while distillation protects against noise distributed between spatially-separated parties. Subsequent theoretical developments approached the problem from multiple angles: concatenated quantum error-correcting codes~\cite{murao1998multiparticlepurification,DEJMPS96,horodecki1999reduction,chau2011distillationQLDPC,roque2024efficient,pattison2024constoverheaddistillation,shi2024stabilizer}, high-rate quantum low-density parity-check (qLDPC) codes for constant overhead distillation~\cite{ataides2025constant}, permutations of Bell-basis elements~\cite{dehaene2003distillationpermutation,maneva2002purification,BM05,KAJ19,jansen2022enumeratingclifforddistillation,addala2023optimizedpurification,goodenough2023ntokdistillation}, and fundamental limits through quantum Shannon theory~\cite{devetak2004father,devetak2005private,devetak2005distillation,hayden2008decoupling,abeyesinghe2009mother,buscemi2010distilling,leditzky2018useful,FWTD19,regula2023probabilistic,siddhu2024entanglementsharing,abdelhadi2024adaptive}. 

In this work, we introduce and analyze a family of entanglement distillation protocols based on random bilocal Clifford operations. 
Unlike existing protocols based on quantum error-correcting codes, all performance metrics, including output fidelity, acceptance probability, and overhead, have exact, closed-form analytical expressions in terms of only two hyperparameters: $n$ and $k$, the number of input and output Bell pairs, respectively. This eliminates the need for burdensome numerical simulations, which simplifies optimization and analysis of our protocol in practical settings. Our protocols achieve remarkable practical performance: they maintain constant-overhead distillation, meaning the number of noisy input Bell pairs required to distill one Bell pair of infidelity $\bar{\varepsilon}$ is asymptotically bounded by a constant, \emph{for every target infidelity $\bar{\varepsilon}>0$}. This is accomplished through concatenation, allowing systematic fidelity improvement through multiple rounds. 
For target infidelities as low as $\bar{\varepsilon}=10^{-12}$, we achieve state-of-the-art resource overheads compared to recent work~\cite{fowler2010surface,ramette2023fault,sinclair2024fault,pattison2024constoverheaddistillation}; see Table~\ref{tab:comparisons} for a summary of and comparison to prior work. A key advantage is that our protocol achieves the same performance regardless of noise form, working equally well for arbitrarily correlated and independent identically distributed (IID) noise. Finally, our protocols can be implemented with very shallow circuits, in depth $O(\poly \log \log \bar{\varepsilon}^{-1})$, and their flexibility allows adaptation to various hardware constraints, such as limited local memory.

The effectiveness of our protocol stems from scrambling: how quantum information, initially localized to a few qubits, rapidly spreads across an entire quantum system. Scrambling through random unitaries mimics the natural dynamics of chaotic quantum systems, where information rapidly spreads across all available degrees of freedom. In our protocol, random Clifford operations serve as efficient scramblers, rapidly spreading errors across the system in a way that makes them detectable through simple local measurements, without requiring complex decoding procedures (see \cref{fig:schematic}). 

Random operations have a rich history in both classical and quantum information theory. Shannon first showed that random codes could achieve the capacity of classical communication channels~\cite{shannon1948mathematical}.  
Early works on entanglement distillation extended this idea to the quantum domain using random classical codes (e.g., the hashing protocol~\cite{BDSW96}) and random unitaries~\cite{hayden2008decoupling}. These approaches face a common challenge: decoding is widely believed to be computationally intractable~\cite{berlekamp1978intractability,vardy1997intractability,hsieh2011NPhardquantumdecoding,kuo2012hardnessquantumdecoding,iyer2015hardnessquantumdecoding}, to such an extent that this hardness forms the basis of several cryptographic schemes~\cite{mceliece1978public,alekhnovich2003more}, rendering these protocols difficult to implement in practice. Our protocol sidesteps this issue entirely by showing that decoding is unnecessary; simply \emph{detecting} errors with these random codes via local measurements suffices to achieve state-of-the-art scaling.

\begin{figure}
    \centering
    \includegraphics[width=0.80\columnwidth]{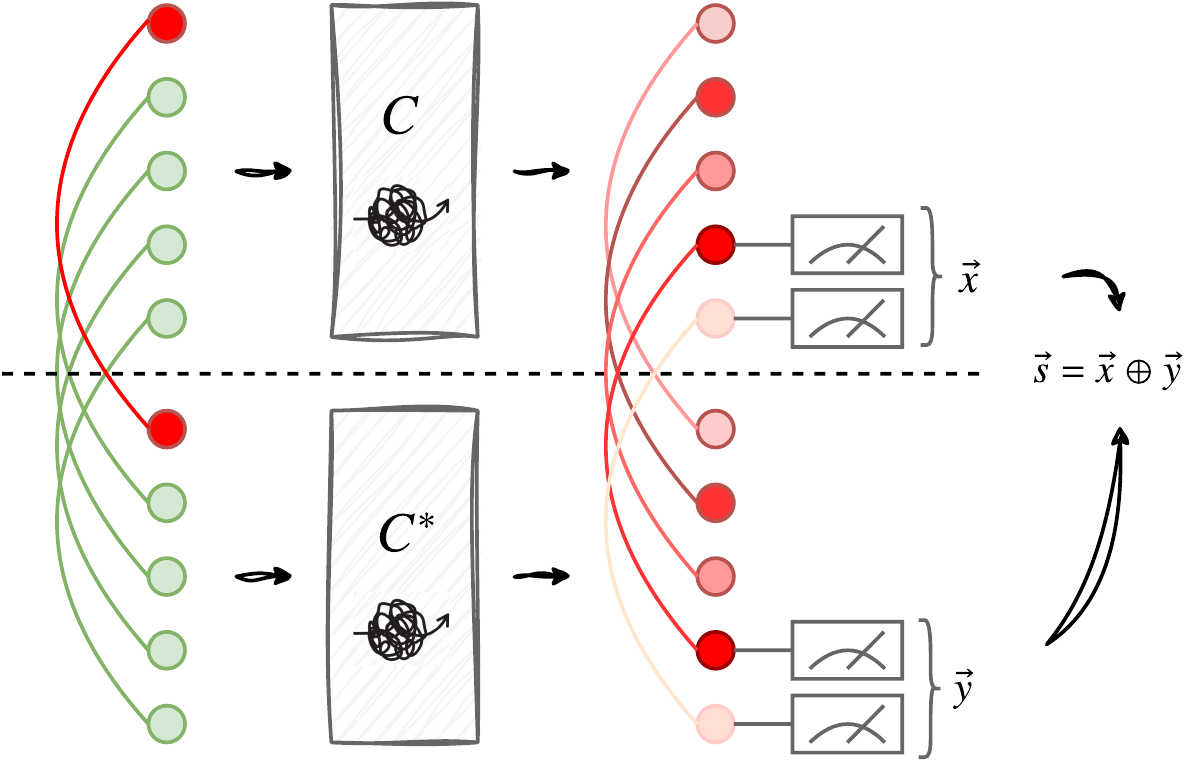}
    \caption{An initial error (indicated in red) gets scrambled into an easily detectable global error under a scrambling unitary $C$, allowing a small number of measurements to detect the presence of the error via the syndrome $\vec{s}$. In the error-detection (`passive') setting, any nonzero syndrome $\vec{s}$ (i.e., disagreement of the measurement outcomes $\vec{x}$ and $\vec{y}$) means the protocol must be rerun, while in the error-correction (`active') setting, $\vec{s}$ is used to infer, then correct, the error.}
    \label{fig:schematic}
\end{figure}

\paragraph*{Random bilocal Clifford protocols.}

In an entanglement distillation protocol, two parties, Alice and Bob, share $n$ noisy entangled qubit pairs. 
Through LOCC, 
they aim to distill $k < n$ pairs with improved fidelity. 
The key performance metric is the overhead $\mathcal{O}\coloneqq n/k$, the number of noisy input pairs needed per high-fidelity output pair. 

Our protocol, presented in \cref{alg:random_bilocal_Clifford},
proceeds as follows. Alice and Bob initially share a $2n$-qubit state $\rho_{A^nB^n}$. This input state can be arbitrary --- it need not be a tensor product of two-qubit noisy Bell pairs, allowing for correlated noise across the qubit pairs. They select a Clifford unitary $C$ uniformly at random, then Alice applies $C$ to her qubits while Bob applies the complex conjugate $C^*$ to his qubits.
This bilocal structure is the central mechanism for error detection through scrambling. When a Pauli error $P$ acts on Bob's half of $n$ Bell pairs, producing the state $\ket{P} \coloneqq (\id_A^{\otimes n} \otimes P)\ket{\Phi}_{AB}^{\otimes n}$, the bilocal operation transforms it as $(C \otimes C^*)\ket{P} = \ket{P'}$, where $P' \coloneqq C^\dagger P C$. Since $C$ is a random Clifford, $P'$ will be high-weight (i.e., global) with overwhelming probability. This identity 
shows how random Clifford operations scramble local Pauli errors into global ones. These global errors become detectable through local measurements, because any $X$ or $Y$ error on a Bell pair leads to Alice and Bob having opposite measurement outcomes. To leverage this error-detection mechanism, Alice and Bob measure their last $m=n-k$ qubits in the computational basis and communicate their measurement outcomes to each other. We denote Alice's measurement outcome as the bitstring $\vec{x}$ and Bob's as $\vec{y}$. The protocol then branches into two possible modes of operation.

\begin{algorithm}[t]
\caption{Entanglement distillation with random bilocal Cliffords}\label{alg:random_bilocal_Clifford}
\KwData{$n\in\{2,3,\dotsc\}$, $k\in\{1,\dotsc,n\}$, $2n$-qubit state $\rho_{A^nB^n}$.}

\KwResult{$2k$-qubit state $\bar{\rho}_{A^kB^k}$.}

accept $\gets$ \texttt{False}

\While{$\lnot$ \textnormal{accept}}{
    
    Alice and Bob select an $n$-qubit Clifford unitary $C$ uniformly at random~\cite{cleve2016near}.
    
    Alice applies $C$ to her qubits, Bob applies the complex conjugate $C^*$ to his. (In a circuit, $C^*$ can be implemented by replacing every phase gate $S$ in $C$ with $S^\dagger$.)
    
    Both measure their first $m \coloneqq n-k$ qubits in the computational basis. Bob sends his measurement outcomes $\vec{y} \in \qty{0,1}^{m}$
    to Alice. 
    
    Alice uses her measurement outcomes, $\vec{x} \in \qty{0,1}^m$, to compute the \emph{syndrome} $\vec{s} = \vec{x} \oplus \vec{y}$.
    
    \eIf(\Comment*[h]{Error-detection}){\textnormal{in `passive' mode}}{
    accept $\gets (\vec{s} == \vec{0})$
    }(\Comment*[h]{`Active' (error-correction)}){
            Alice attempts decoding with the syndrome $\vec{s}$. 
            If it fails, this round has failed; otherwise, it outputs a $k$-qubit Pauli correction that Alice applies to her qubits, and accept $\gets \texttt{True}$.
    }
}
\end{algorithm}

In the \textit{passive} (error detection) setting, the protocol accepts if and only if Alice and Bob's measurement outcomes match exactly, meaning that $\vec{x} = \vec{y}$, so that any detected error leads to rejection. 
In the \textit{active} (error correction) mode, Alice attempts to use the syndrome $\vec{s} = \vec{x} \oplus \vec{y}$ to correct certain errors. If error correction succeeds, Alice applies an appropriate recovery operation and accepts the state; otherwise, the protocol rejects and starts over. 

The performance of our protocol is characterized by the probability of acceptance, $p_{\acc}$, and the joint probability $p_{\acc \land \Phi}$ of both accepting and obtaining the target state $\Phi_{AB}^{\otimes k}$. Their ratio $p_{\acc\land\Phi}/p_{\acc}$ is the output fidelity. 
The expected resource overhead --- the average number of input Bell pairs needed to produce one output Bell pair --- is given by $n/(kp_{\acc})$, because we consume $n$ input pairs per attempt and need $1/p_{\acc}$ attempts on average to succeed in producing $k$ output pairs. 


\begin{figure}
    \centering
    \includegraphics[width=0.95\columnwidth]{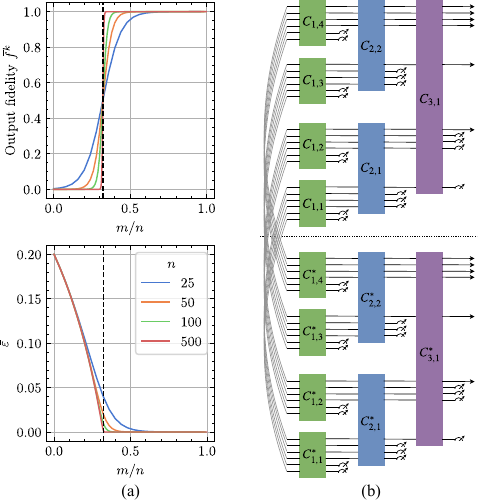}
    \caption{(a) Performance of~\cref{alg:random_bilocal_Clifford} in the passive setting in terms of the fidelity $\bar{f}^k$ of the output state (top) and the single-pair infidelity $\bar{\varepsilon}$ (bottom). The initial single-pair fidelity is set to $f=1-\varepsilon=0.8$. The transition at the fraction $m/n=-\log_2 f$ of measured qubits is shown with a dashed line. (b) A three-layer concatenated protocol, consisting of a $6 \to 4$ protocol (green), an $8 \to 5$ protocol (blue), followed by a $10 \to 6$ protocol (purple).}
    \label{fig:basic_results}
\end{figure}

\paragraph*{Passive setting.}
In the End Matter, we exactly calculate $p_{\acc}$ and $p_{\acc \land \Phi}$ for the passive setting, leading to the following bounds:
\begin{subequations}
\begin{gather}
    f^n \leq p_{\acc} \leq f^n + 2^{-m} (1-f^n),\qq{and} \label{eq:pacc1} \\
\bar{f}^k \geq 1 - 2^{-m}(f^{-n}-1),\label{eq:new-fid}
\end{gather}    
\end{subequations}
where $f \coloneqq \left(\Tr[\Phi_{AB}^{\otimes n}\rho_{A^nB^n}]\right)^{1/n}=1-\varepsilon$ is the initial single-pair fidelity with respect to $n$ perfect Bell pairs $\Phi_{AB}^{\otimes n}$ ($\varepsilon$ is the initial single-pair infidelity) and the single-pair output fidelity is $\bar{f}=1-\bar{\varepsilon}=\left(p_{\acc\land\Phi}/p_{\acc}\right)^{1/k}$~\footnote{The definition of single-pair fidelity comes from the notion of logical error rate in quantum 
error correction, where it makes sense to define a logical error rate per 
logical qubit of an $[[n,k,d]]$ quantum error-correction code. Specifically, if $p_L$ is the overall 
logical error rate of the $[[n,k,d]]$ code, then the per-qubit logical error rate 
is formulated as $1-(1-p_L)^{1/k}$ in both theoretical and experimental works~\cite{roffe2023biastailoredquantum,berent2023softwaretools,wang2025demonstration}.}. Note that if the noise acts independently on each qubit pair (i.e., $\rho_{A^n B^n}$ is a tensor product of $n$ noisy Bell pairs), then $f$ is simply the geometric mean $f=(f_1f_2\dotsb f_n)^{1/n}$ of the fidelities $f_i$ of each pair.

The ability to improve fidelity depends critically on the fraction of measured qubits, $m/n$. As illustrated in \cref{fig:basic_results}(a), there is a transition in output fidelity at $m/n=-\log_2f$: beyond this critical fraction, as $n$ increases, the output fidelity can be increased arbitrarily close to $1$, while below it the output fidelity decays to $0$ (see the Supplemental Material 
for a proof). In contrast, the output infidelity $\bar{\varepsilon}$ improves compared to the input infidelity $\varepsilon$ for \emph{all} values of $m$. While increasing the number $m$ of measured qubits leads to smaller output infidelity, more Bell pairs must be sacrificed through measurement, and the probability of acceptance $p_{\acc}$~\eqref{eq:pacc1} decreases, which increases the number of times the protocol must be rerun and therefore increases the overhead. This suggests operating at moderate values of $m$ that achieve meaningful fidelity improvements while maintaining reasonable acceptance probabilities, and therefore minimizing the overhead.

By repeatedly \textit{concatenating} distillation rounds with carefully moderated values of $m$, we can reach arbitrarily low target infidelities while keeping the total resource requirements controlled. An example of a concatenated protocol is shown in \cref{fig:basic_results}(b). Each of the boxes therein is run in a repeat-until-success fashion. Since each of the Cliffords is independently sampled uniformly at random, for a fixed layer $i$, while each of the $C_{i,j}$ acts on the same number of qubits, the Clifford unitary $C_{i,j}$ itself is in general different for different $j$. If one of the distillation rounds (i.e., one of the boxes) fails, all its dependencies must be rerun: for instance, if one instance of the $C_2$ protocol fails, then the $C_1$ protocol must be rerun (until each succeeds) twice.

The key challenge is choosing appropriate parameters $n$ and $m$ at each layer of concatenation. These parameters must satisfy two properties: first, the output infidelity $\bar{\varepsilon}$ must be appreciably smaller than the input infidelity $\varepsilon$, allowing each successive layer to further purify the state; second, the resource overhead of each layer must be bounded, such that the resulting overhead of the entire protocol remains constant. In practice, we optimize $m$ at each concatenation layer to minimize the expected overhead in that layer. Doing so allows us to achieve overall constant overhead, as described in our main result below.

\begin{theorem}
\label{thm:bilocal_Clifford_concat}
Let $\varepsilon_0 < 0.5$ be an initial infidelity and $\bar{\varepsilon}>0$ be any target infidelity. The concatenated form of the bilocal Clifford protocol (\cref{alg:random_bilocal_Clifford}) succeeds with probability at least $1 - \delta$ and requires:
\begin{enumerate}[itemsep=0.1ex]
    \item an overhead of $\mathcal{O} \leq O(\log \delta^{-1})$;
    \item at most $O((\log \bar{\varepsilon}^{-1})^{3/2} \log \delta^{-1})$ input Bell pairs;
    \item $(\log_2 \bar{\varepsilon}^{-1})^{3/2}$ qubits of memory for each Alice and Bob;
    \item $L \leq O(\log_2^*(\bar{\varepsilon}^{-1}))$ layers of concatenation, where $\log_2^*$ is the iterated logarithm; and
    \item $O(\poly \log \bar{\varepsilon}^{-1})$ gates, which can be organized to run in $O(\poly \log \log \bar{\varepsilon}^{-1})$ depth.
\end{enumerate}
\end{theorem}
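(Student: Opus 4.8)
The plan is to reduce \cref{thm:bilocal_Clifford_concat} to a single-layer recursion and then exhibit an explicit parameter schedule for the concatenation that meets all five bullets simultaneously. From the single-round estimates \eqref{eq:pacc1} and \eqref{eq:new-fid} (i.e.\ \cref{lem:probs,cor:simple-f}), a layer taking $n$ pairs of infidelity $\varepsilon$ to $k=n-m$ pairs obeys $p_{\acc}\ge f^{n}$ and, via $\bar\varepsilon\le 1-\bigl(1-2^{-m}(f^{-n}-1)\bigr)^{1/k}\approx \tfrac1k 2^{-m}(f^{-n}-1)$. Imposing $n\varepsilon=O(1)$ makes $f^{-n}-1=O(n\varepsilon)$ and $p_{\acc}=\Omega(1)$, collapsing the per-layer action to the clean infidelity map $\varepsilon\mapsto\bar\varepsilon\lesssim \tfrac{n}{k}\,2^{-m}\varepsilon$ together with a per-layer overhead $n/(k\,p_{\acc})=\Theta(1)$. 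I would then fix a chain of layers with parameters $(n_\ell,m_\ell)$, $k_\ell=n_\ell-m_\ell$, and infidelities $\varepsilon_0>\varepsilon_1>\dots>\varepsilon_L\le\bar\varepsilon$, so that the five bullets become statements about this sequence.

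The controlling quantity is $t_\ell\coloneqq\log_2\varepsilon_\ell^{-1}$: the map gives $t_{\ell+1}\approx t_\ell+m_\ell$, while the acceptance constraint $n_\ell\varepsilon_\ell=a_\ell$ forces $n_\ell=\Theta(a_\ell 2^{t_\ell})$. Measuring a fraction $\mu_\ell\coloneqq m_\ell/n_\ell$ therefore yields $t_{\ell+1}\approx t_\ell+\mu_\ell a_\ell\,\Theta(2^{t_\ell})$, which is iterated-exponential (tower) growth of $t_\ell$ provided $\mu_\ell a_\ell$ does not decay too fast. I would choose $\mu_\ell$ and $a_\ell$ as slowly vanishing sequences — small enough that $\sum_\ell\mu_\ell$ and $\sum_\ell a_\ell$ converge, but large enough to keep $t_{\ell+1}\gtrsim 2^{(1-o(1))t_\ell}$. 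The tower growth then bounds the number of layers by $L\le O(\log_2^*\bar\varepsilon^{-1})$, giving bullet~4.

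With the schedule fixed, the remaining bullets are direct computations. The total overhead is $\prod_\ell n_\ell/(k_\ell p_{\acc,\ell})=\prod_\ell(1-\mu_\ell)^{-1}e^{O(a_\ell)}\le\exp\!\bigl(O(\sum_\ell\mu_\ell+\sum_\ell a_\ell)\bigr)=O(1)$ for one idealized run; promoting the overall success probability to $1-\delta$ costs $O(\log\delta^{-1})$ repetitions, producing the overhead bound of bullet~1 and the $\log\delta^{-1}$ factor of bullet~2. The widths $n_\ell=\Theta(a_\ell 2^{t_\ell})$ are increasing, so memory and pair count are set by the final layer; solving $t_L=\log_2\bar\varepsilon^{-1}$ backward through the recursion gives peak width $n_{L-1}=\Theta(\log_2\bar\varepsilon^{-1}/\mu_{L-1})$, and the balanced choice $\mu_{L-1}=\Theta\bigl((\log_2\bar\varepsilon^{-1})^{-1/2}\bigr)$ yields $n_{L-1}=\Theta\bigl((\log_2\bar\varepsilon^{-1})^{3/2}\bigr)$, which is bullet~3. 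Since the protocol outputs an entire block of $\Theta(n_{L-1})$ pairs at $O(1)$ overhead, the total consumed input count is $O\bigl((\log\bar\varepsilon^{-1})^{3/2}\log\delta^{-1}\bigr)$, bullet~2. The dominant gate cost is a uniformly random Clifford on the widest register, which via a low-depth unitary $2$-design construction uses $\poly(n_{L-1})=\poly\log\bar\varepsilon^{-1}$ gates in depth $\poly\log n_{L-1}=\poly\log\log\bar\varepsilon^{-1}$, giving bullet~5.

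The crux — and the main obstacle — is the schedule design: choosing $(n_\ell,m_\ell)$ to meet three competing demands at once. The infidelity must tower-collapse in $\log^*$ layers (favoring large measured fractions $\mu_\ell$); the overhead product $\prod_\ell(1-\mu_\ell)^{-1}e^{O(n_\ell\varepsilon_\ell)}$ must stay bounded as $\bar\varepsilon\to0$ (forcing $\sum_\ell\mu_\ell$ and $\sum_\ell n_\ell\varepsilon_\ell$ to converge, hence $\mu_\ell,n_\ell\varepsilon_\ell\to0$); and the peak width must be controlled (the acceptance cap $n_\ell\varepsilon_\ell=O(1)$ ties $n_\ell$ to $1/\varepsilon_\ell$, so the last layer's $\mu_{L-1}$ trades width against overhead excess). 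Verifying that a single schedule threads all three needles, and that the resulting peak width is exactly $\Theta\bigl((\log_2\bar\varepsilon^{-1})^{3/2}\bigr)$, is where the real work lies. The secondary obstacle is upgrading the per-layer \emph{expected} overheads into a genuine $1-\delta$ high-probability guarantee across the entire restart tree of the concatenation, which is precisely what generates the $\log\delta^{-1}$ dependence in bullets~1 and~2.
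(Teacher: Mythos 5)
Your proposal is correct and takes essentially the same route as the paper: the paper instantiates your abstract schedule with the explicit choice $n_\ell=\varepsilon_\ell^{-1/2}$ and $m_\ell=\varepsilon_\ell^{-1/3}$ (i.e., $a_\ell=n_\ell\varepsilon_\ell=\varepsilon_\ell^{1/2}$ and $\mu_\ell=\varepsilon_\ell^{1/6}$, both summable since $\varepsilon_{\ell+1}=2^{-\varepsilon_\ell^{-1/3}}\leq\varepsilon_\ell/2$), which yields per-layer overhead $\exp(2\varepsilon_\ell^{1/6})$ with a convergent product, the $\log_2^*$ layer count from the tower recursion, final-layer width $\varepsilon_{L-1}^{-1/2}=(\log_2\bar{\varepsilon}^{-1})^{3/2}$, and the $\log\delta^{-1}$ factor via a Markov-inequality restart argument. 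The ``needle-threading'' you flag as the remaining work is precisely what the paper's single-layer parameter proposition verifies, and all five bullets then follow exactly as you outline.
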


Our numerical results (\cref{fig:results}) demonstrate the performance advantage over existing schemes in regimes of practical interest, such as initial infidelities $\varepsilon_0\approx 10^{-1}$.


\begin{figure}
    \centering
    \includegraphics[width=0.80\columnwidth]{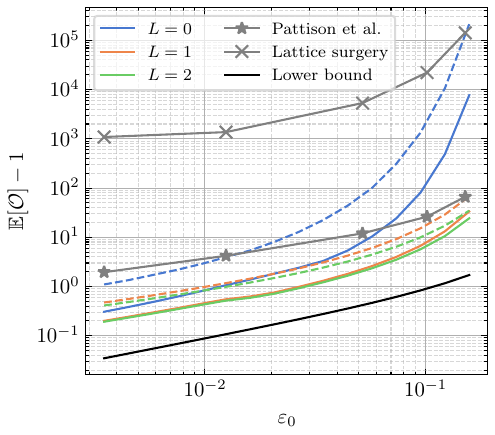}
    \caption{Overhead of our entanglement distillation protocols, with $L$ layers of concatenation, as a function of the initial single-pair infidelity $\varepsilon_0$. Dashed lines indicate the passive setting of our \cref{alg:random_bilocal_Clifford}. Solid lines indicate our protocol with the first layer having active error correction (with a maximum $\Err=3 \times 10^6$), assuming IID depolarizing noise. 
    The parameters $n$ and $m$ at each layer of the protocol are optimized to achieve the lowest expected overhead, subject to the constraint that the output infidelity $\bar{\varepsilon}$ is below $10^{-12}$. 
    We compare with \cite{pattison2024constoverheaddistillation} and protocols using lattice surgery~\cite{fowler2010surface,ramette2023fault,sinclair2024fault}. The solid black line indicates a lower bound for the best achievable overhead given by the Rains bound on distillable entanglement~\cite{rains1999bound,Rains01}. 
    }\label{fig:results}
\end{figure}

So far, we have assumed Alice and Bob's local operations are noiseless. 
With noisy local operations, the number of possible concatenation layers will be limited.
One approach to handling noisy local operations is to encode each half of every Bell pair into a quantum error-correcting code. A particularly suitable code is the 2D color code~\cite{bombin2006topological}, which admits transversal Clifford operations~\footnote{This encoding process typically proceeds through state injection, where a physical state is incrementally grown into a logical state through repeated application of Clifford operations. While this injection process is not fault-tolerant and introduces additional errors at a rate $\varepsilon_I$, these errors combine with the channel noise $\varepsilon$ to produce a new effective channel with error rate $\sim \varepsilon + \varepsilon_I$. Importantly, the injection error rate $\varepsilon_I$ typically scales with the local gate noise~\cite{zhang2024facilitating,li2015magic,fowler2009high,raussendorf2006fault,horsman2012surface}, which is generally much smaller than the channel noise $\varepsilon$ (typical numbers are $10^{-3} \leq \varepsilon_I \leq 10^{-2}$, while $\varepsilon \sim 10^{-1}$).}. Thus, when local quantum error correction is available, the conclusions of our analysis remain valid even under realistic noisy operations. When full error correction may not be feasible, we must directly confront the impact of noisy gates. We thus analyze random bilocal Clifford circuits with a finite number of noisy random gates (see the End Matter). We show that even with noisy local operations, our protocol can achieve output fidelities limited only by the local gate noise, which is typically several orders of magnitude better than the channel noise affecting the input Bell pairs.


\paragraph*{Active setting.}
Instead of discarding trials in which errors are detected, we can attempt to actively correct certain errors using maximum-likelihood decoding. Let $\{q_\ell\}_{\ell=1}^{4^n}$ be Pauli-error probabilities in decreasing order~\footnote{By Pauli twirling --- applying the same random $n$-qubit Pauli on both halves, any channel is equivalent to a mixture of Pauli errors $P_\ell$ with probabilities $q_\ell$~\cite{BDSW96}.}. An \emph{$\Err$-active} strategy attempts to correct only the $\Err+1$ most probable errors. We precompute a lookup table that maps the observed syndrome $\vec{s}$ to its most likely Pauli $P_\ell$ and, when $\vec{s}$ occurs, apply the recovery $C P_\ell C^\dagger$; if $\vec{s}$ is absent from the table, we declare failure and restart~\footnote{This lookup-based correction is likely optimal for random Clifford codes, as decoding random codes is widely believed to be computationally intractable~\cite{berlekamp1978intractability,mceliece1978public,alekhnovich2003more}.}. The case $\Err=0$ recovers the passive protocol, where we discard anything that has nonzero syndrome. In the general case, by restricting to the top $\Err+1$ errors, the classical preprocessing and storage cost of this decoding is $O(\Err)$. This approach is effective when probability mass is concentrated on a small set of (typically low-weight) Paulis, as in IID depolarizing noise, so the table can be built by enumerating increasing-weight errors. Since it succeeds whenever the true error lies among the tabulated ones, its benefit grows with the cumulative mass $\sum_{\ell\le \Err+1} q_\ell$. In practice, it is most useful only at the first concatenation layer: subsequent random Clifford rounds scramble the error distribution, reducing the concentration that makes maximum-likelihood corrections reliable.

\begin{theorem}
\label{thm:active-perform}
Under an $\Err$-active correction strategy for \cref{alg:random_bilocal_Clifford}, the fidelity of the output state, conditioned on acceptance, is bounded from below as follows:
\begin{equation}
    (1-\bar{\varepsilon})^k \geq 1 - 2^{-m} \cdot (\Err+1) (q^{-1}-1),
\end{equation}
where $q=f^n+\sum_{\ell=2}^{\Err+1}q_{\ell}$. With $\Err=0$, we recover \cref{eq:new-fid}.
\end{theorem}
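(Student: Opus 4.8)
The plan is to work entirely in the stabilizer/symplectic picture and reduce everything to a single counting problem over random Clifford conjugation. First, by the Pauli-twirl reduction already invoked in the text I may assume the channel applies a fixed Pauli $P_\ell$ with probability $q_\ell$, the $q_\ell$ ordered decreasingly with $q_1=f^n$ the identity weight. Applying the bilocal Clifford and using the transpose trick, a physical error $P_\ell$ becomes the scrambled error $P_\ell' \coloneqq C^\dagger P_\ell C$; since the Clifford group is a unitary $2$-design, for every nontrivial $P_\ell$ the operator $P_\ell'$ is uniformly distributed over the $4^n-1$ nontrivial $n$-qubit Paulis. I would then translate the protocol into two symplectic conditions on $P'$: writing $P'$ in its $(x\mid z)$ symplectic form, the measured syndrome is exactly $\vec{s}=x|_{\text{last }m}$ (Alice and Bob disagree on a qubit iff $P'$ carries an $X$ or $Y$ there), and the post-correction output equals $\Phi_{AB}^{\otimes k}$ iff the residual Pauli is trivial on the first $k$ pairs.

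With this dictionary I would bound the two probabilities that control the fidelity, using $(1-\bar\varepsilon)^k=\bar f^k=p_{\acc\land\Phi}/p_{\acc}$, equivalently $1-\bar f^k=p_{\acc\land\neg\Phi}/p_{\acc}$. For the denominator, whenever the true error is one of the $\Err+1$ correctable errors its syndrome lies in the lookup table, so the trial is accepted; hence $p_{\acc}\ge\sum_{\ell=1}^{\Err+1}q_\ell=q$ for every $C$. For the numerator, an \emph{undetected} error (accept-but-wrong) requires a syndrome collision: the true error $P_j$ must share its syndrome with a correctable $P_\ell$ selected by the maximum-likelihood decoder, while $C^\dagger(P_\ell P_j^\dagger)C$ acts nontrivially on the first $k$ pairs. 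Because $C^\dagger(P_\ell P_j^\dagger)C$ is uniform over nontrivial Paulis whenever $P_\ell\neq P_j$, the probability of such a bad collision equals the fraction of nontrivial Paulis whose $x$-part vanishes on the last $m$ qubits yet which are nontrivial on the first $k$; counting these gives $2^m(4^k-1)/(4^n-1)\le 2^{-m}$. Union-bounding over the at most $\Err+1$ correctable syndromes, and using that the decoder resolves the correctable errors among themselves, the dominant contribution comes from the uncorrectable true errors (total weight $1-q$), yielding $p_{\acc\land\neg\Phi}\le 2^{-m}(\Err+1)(1-q)$.

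Combining the two estimates gives $1-\bar f^k\le 2^{-m}(\Err+1)(1-q)/q=2^{-m}(\Err+1)(q^{-1}-1)$, which is the claimed bound; as a consistency check, setting $\Err=0$ leaves only the identity as a correctable error, $q=f^n$, and reproduces \cref{eq:new-fid}. I expect the single genuinely delicate step to be the combinatorial bookkeeping in the numerator. Verifying the per-collision probability $\le 2^{-m}$ is a direct Pauli count, but organizing the union bound so that the prefactor is exactly $(\Err+1)(1-q)$ — rather than the cruder $(\Err+1)-q$ that a naive union bound produces — requires care with the decoder: a less likely correctable error can in principle be mis-decoded to a more likely one whose syndrome it collides with, and one must argue that these correctable-correctable collisions are subleading (or are absorbed by the construction of the lookup table), so that undetected errors are charged essentially only to the uncorrectable weight $1-q$. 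The remaining manipulations (the $2$-design average and the acceptance lower bound) are then routine.
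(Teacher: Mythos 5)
Your scaffolding is sound and matches the paper's machinery: the Pauli-twirl reduction, the fact that $C^\dagger P C$ is uniform over the nontrivial Paulis, the identification of the syndrome with the $X$-part on the measured qubits, the per-collision count $2^m(4^k-1)/(4^n-1)\le 2^{-m}$ (a slight sharpening of \cref{lem:syndrome-prob}), and the lower bound $p_{\acc}\ge q$ all reproduce the content of \cref{lem:syndrome-prob,lem:accept-prob}. The genuine gap is exactly the step you flag at the end and then defer: the claim that undetected errors can be charged ``essentially only to the uncorrectable weight $1-q$'', i.e.\ $p_{\acc\land\neg\Phi}\le 2^{-m}(\Err+1)(1-q)$. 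This is not a deferrable piece of bookkeeping; as an intermediate claim it is false. When a correctable error $P_j$ (with $2\le j\le \Err+1$) has its scrambled syndrome collide with that of a \emph{more likely} correctable error $P_\ell$ ($\ell<j$), the lookup table returns $P_\ell$, the protocol accepts, and the residual $C^\dagger P_\ell P_j C$ is nontrivial on the output with probability $1-O(4^{-k})$; this is an undetected error occurring with probability $\approx q_j 2^{-m}$. Summing over $j$ contributes on the order of $2^{-m}\sum_{j=2}^{\Err+1}(j-1)q_j\le 2^{-m}\Err\,(q-f^n)$, a term charged to the \emph{correctable} weight $q-f^n$, not to $1-q$, and nothing forces $\Err\,(q-f^n)\lesssim(\Err+1)(1-q)$. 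Worse, in the structured-noise regime where active correction is actually useful, $q-f^n$ is large while $1-q$ is tiny: with $\Err=1$, $f^n=0.65$, $q_2=0.3$, and the remaining mass $0.05$ on uncorrectable errors, the correctable-correctable collisions alone give $1-\bar{f}^k\approx 0.3\cdot 2^{-m}/q$, exceeding the bound $2^{-m}\cdot 2\cdot(q^{-1}-1)\approx 0.105\cdot 2^{-m}$ you are trying to prove. So the route ``numerator $\le 2^{-m}(\Err+1)(1-q)$, denominator $\ge q$'' cannot be completed as written.

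The paper's own proof does not make this leap at the lemma level: \cref{lem:accept-prob} retains the collision term explicitly, proving $p_{\acc\land\Phi}\ge q-2^{-m}\Err\,(q-f^n)$ alongside $p_{\acc}\le q+2^{-m}(\Err+1)(1-q)$, so the bound that honestly follows from that machinery is
\begin{equation*}
    (1-\bar{\varepsilon})^k \;\ge\; 1-\frac{2^{-m}}{q}\left[\,\Err\,(q-f^n)+(\Err+1)(1-q)\right],
\end{equation*}
which carries precisely the term you hoped to argue away. (Your instinct that this is ``the single genuinely delicate step'' is well founded: the final chain of inequalities in the paper's proof of \cref{thm:active-perform_supp} also silently drops this term, a simplification that is only valid when $\Err=0$ or $q=f^n$, so the concern you raise applies there as well; the resolution is to keep the term, not to absorb it into the decoder.) To repair your write-up, prove the two lemma-level bounds including the correctable-correctable contribution and state the fidelity bound with that term present; the $\Err=0$ specialization, which recovers \cref{eq:new-fid}, is unaffected since the extra term vanishes there.
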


Although active correction introduces an extra factor of $\Err+1$ in the fidelity bound (reducing output fidelity compared to the passive case), it greatly improves acceptance probability by correcting likely errors instead of discarding them. For structured noise models dominated by low-weight errors, acceptance can improve from $p_{\acc} \sim f^n$ to $p_{\acc} \sim f^n + \sum_{\ell=2}^{E+1} q_\ell$. The power of active correction thus lies in this trade-off: a modest decrease in output fidelity is more than compensated for by a marked increase in acceptance probability, leading to lower overhead. In \cref{fig:results}, we demonstrate this advantage, showing that optimized active correction protocols achieve significantly better overhead than both passive protocols and prior work~\cite{pattison2024constoverheaddistillation,fowler2010surface,ramette2023fault,sinclair2024fault}

\paragraph*{Application: quantum repeaters.}

A key challenge in quantum technologies is distributing high-quality entanglement over long distances. Photons carrying quantum information get lost or corrupted exponentially in distance~\cite{VLMN09}.
Quantum repeaters address this by breaking long distances into shorter segments, with repeater nodes that purify and relay entanglement between segments~\cite{VLMN09,AME11}. Our entanglement distillation protocols are particularly well-suited for quantum repeaters, as demonstrated by two applications that make use of high-fidelity entanglement.

The first application is long-baseline quantum-optical-enhanced interferometry~\cite{gottesman2012baseline}, where combining light from distant telescopes enables higher resolution imaging. 
The second application is quantum key distribution (QKD)~\cite{BB84,Eke91}, which generates provably secure cryptographic keys between distant parties. 
As we show in \cref{fig:applications_baseline}, we find that our protocols outperform existing practical schemes~\cite{pattison2024constoverheaddistillation,BBP96} and approach the performance of the 
hashing protocol. We refer to the End Matter for detailed performance analysis and metrics.

\begin{figure}
    \centering
\includegraphics[width=0.85\columnwidth]{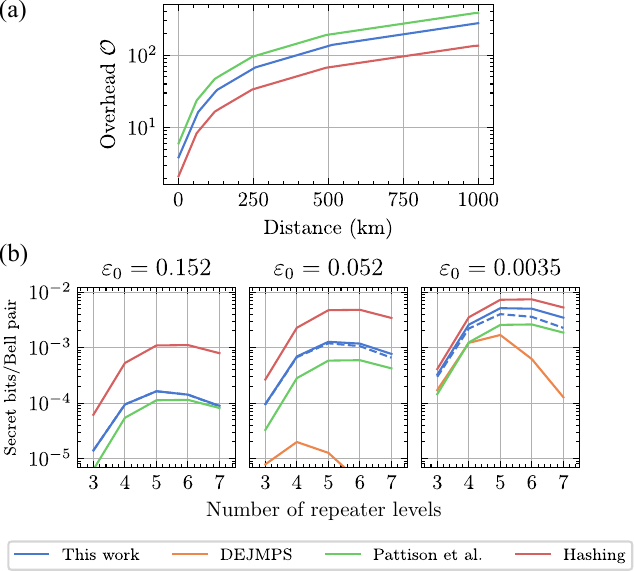}
\caption{Random bilocal Clifford protocols applied to: (a) overheads for long-baseline interferometry; and (b) secret-key rates for QKD. The initial infidelity in (a) is $\varepsilon_0=0.0035$. The target infidelity in both cases is $\bar{\varepsilon}=10^{-9}$. In (b), each level halves the communication distance via intermediate repeater stations. The dashed line indicates our protocol with $n \leq 12$.
}\label{fig:applications_baseline}
\end{figure}

\paragraph*{Summary and outlook.}

We have introduced a family of entanglement distillation protocols based on random bilocal Clifford operations. The key insight is that random operations, traditionally viewed as primarily theoretical tools, can achieve state-of-the-art practical performance through pure error detection. This sidesteps the complexity barrier that has historically limited the practical utility of random coding approaches.


Our protocol achieves constant overhead distillation through careful parameter selection at each concatenation layer, requiring only $O(\log_2^* \bar{\varepsilon}^{-1})$ layers to reach a target infidelity $\bar{\varepsilon}$. The implementation requirements are modest: we require shallow quantum circuits of depth $O(\poly \log \log \bar{\varepsilon}^{-1})$, and only $O(\poly \log \bar{\varepsilon}^{-1})$ qubits of local memory.

It would be interesting to generalize 
the random Clifford framework to multipartite settings. The connection between random operations and error detection might also find applications in other quantum protocols where decoding complexity has been a limiting factor. 
We also expect our results, particularly the analysis of finite-depth implementations with noisy gates, to inform the design and implementation of fault-tolerant interconnects for distributed quantum computing~\cite{awschalom2021interconnects,ramette2023fault,pattison2024constoverheaddistillation,sinclair2024fault}. 

A key open question is whether the resource requirements of our protocol, particularly the circuit depth and memory overhead, are optimal. Although qLDPC-based schemes require only constant depth, it is unclear whether this favorable scaling extends to error-detection-based protocols. Indeed, the $O(\poly \log \log \bar{\varepsilon}^{-1})$ depth scaling seems difficult to improve upon, given the need for information to propagate between qubits, but formal lower bounds remain elusive. 

While we analyzed the protocol using exact unitary 2-designs from random Clifford circuits, its key ingredient is the rapid spreading of Pauli errors, which also arises in generic chaotic dynamics that approximate such designs~\cite{roberts2017chaos,cotler2017chaos,hosur2016chaos,landsman2019verified}. This suggests our distillation scheme may work in naturally fast-scrambling systems --- such as trapped ions~\cite{bohnet2016quantum} or Rydberg atoms~\cite{bernien2017probing} --- potentially reducing the need for precise digital control.

\bigskip

\begin{acknowledgments}
A.G. thanks Pablo Bonilla for several insightful discussions on error correction and distillation. A.G. acknowledges support from the IBM PhD Fellowship. L.L. is funded through the Munich Quantum Valley project (MQV-K8) by Bayerisches Staatsministerium für Wissenschaft und Kunst. S.K. acknowledges financial support from the US Department of Energy, Office of Science, Advanced Scientific Computing Research program, under award number DE-SC0025430.
\end{acknowledgments}

\section*{Data availability}

The data that support the findings of this article are openly available~\cite{data}.


\bibliography{refs}

\begin{thebibliography}{134}%
\makeatletter
\providecommand \@ifxundefined [1]{%
 \@ifx{#1\undefined}
}%
\providecommand \@ifnum [1]{%
 \ifnum #1\expandafter \@firstoftwo
 \else \expandafter \@secondoftwo
 \fi
}%
\providecommand \@ifx [1]{%
 \ifx #1\expandafter \@firstoftwo
 \else \expandafter \@secondoftwo
 \fi
}%
\providecommand \natexlab [1]{#1}%
\providecommand \enquote  [1]{``#1''}%
\providecommand \bibnamefont  [1]{#1}%
\providecommand \bibfnamefont [1]{#1}%
\providecommand \citenamefont [1]{#1}%
\providecommand \href@noop [0]{\@secondoftwo}%
\providecommand \href [0]{\begingroup \@sanitize@url \@href}%
\providecommand \@href[1]{\@@startlink{#1}\@@href}%
\providecommand \@@href[1]{\endgroup#1\@@endlink}%
\providecommand \@sanitize@url [0]{\catcode `\\12\catcode `\$12\catcode `\&12\catcode `\#12\catcode `\^12\catcode `\_12\catcode `\%12\relax}%
\providecommand \@@startlink[1]{}%
\providecommand \@@endlink[0]{}%
\providecommand \url  [0]{\begingroup\@sanitize@url \@url }%
\providecommand \@url [1]{\endgroup\@href {#1}{\urlprefix }}%
\providecommand \urlprefix  [0]{URL }%
\providecommand \Eprint [0]{\href }%
\providecommand \doibase [0]{https://doi.org/}%
\providecommand \selectlanguage [0]{\@gobble}%
\providecommand \bibinfo  [0]{\@secondoftwo}%
\providecommand \bibfield  [0]{\@secondoftwo}%
\providecommand \translation [1]{[#1]}%
\providecommand \BibitemOpen [0]{}%
\providecommand \bibitemStop [0]{}%
\providecommand \bibitemNoStop [0]{.\EOS\space}%
\providecommand \EOS [0]{\spacefactor3000\relax}%
\providecommand \BibitemShut  [1]{\csname bibitem#1\endcsname}%
\let\auto@bib@innerbib\@empty
\bibitem [{\citenamefont {Bennett}\ and\ \citenamefont {Brassard}(1984)}]{BB84}%
  \BibitemOpen
  \bibfield  {author} {\bibinfo {author} {\bibfnamefont {C.~H.}\ \bibnamefont {Bennett}}\ and\ \bibinfo {author} {\bibfnamefont {G.}~\bibnamefont {Brassard}},\ }\bibfield  {title} {\bibinfo {title} {{Quantum cryptography: Public key distribution and coin tossing}},\ }in\ \href {https://doi.org/10.1016/j.tcs.2014.05.025} {\emph {\bibinfo {booktitle} {{International Conference on Computer System and Signal Processing, IEEE}}}}\ (\bibinfo {year} {1984})\ pp.\ \bibinfo {pages} {175--179}\BibitemShut {NoStop}%
\bibitem [{\citenamefont {Ekert}(1991)}]{Eke91}%
  \BibitemOpen
  \bibfield  {author} {\bibinfo {author} {\bibfnamefont {A.~K.}\ \bibnamefont {Ekert}},\ }\bibfield  {title} {\bibinfo {title} {{Quantum cryptography based on Bell's theorem}},\ }\href {https://doi.org/10.1103/PhysRevLett.67.661} {\bibfield  {journal} {\bibinfo  {journal} {Physical Review Letters}\ }\textbf {\bibinfo {volume} {67}},\ \bibinfo {pages} {661} (\bibinfo {year} {1991})}\BibitemShut {NoStop}%
\bibitem [{\citenamefont {Bennett}\ \emph {et~al.}(1993)\citenamefont {Bennett}, \citenamefont {Brassard}, \citenamefont {Cr\'epeau}, \citenamefont {Jozsa}, \citenamefont {Peres},\ and\ \citenamefont {Wootters}}]{BBC+93}%
  \BibitemOpen
  \bibfield  {author} {\bibinfo {author} {\bibfnamefont {C.~H.}\ \bibnamefont {Bennett}}, \bibinfo {author} {\bibfnamefont {G.}~\bibnamefont {Brassard}}, \bibinfo {author} {\bibfnamefont {C.}~\bibnamefont {Cr\'epeau}}, \bibinfo {author} {\bibfnamefont {R.}~\bibnamefont {Jozsa}}, \bibinfo {author} {\bibfnamefont {A.}~\bibnamefont {Peres}},\ and\ \bibinfo {author} {\bibfnamefont {W.~K.}\ \bibnamefont {Wootters}},\ }\bibfield  {title} {\bibinfo {title} {{Teleporting an unknown quantum state via dual classical and Einstein-Podolsky-Rosen channels}},\ }\href {https://doi.org/10.1103/PhysRevLett.70.1895} {\bibfield  {journal} {\bibinfo  {journal} {Physical Review Letters}\ }\textbf {\bibinfo {volume} {70}},\ \bibinfo {pages} {1895} (\bibinfo {year} {1993})}\BibitemShut {NoStop}%
\bibitem [{\citenamefont {Gottesman}\ and\ \citenamefont {Chuang}(1999)}]{gottesman1999gateteleportation}%
  \BibitemOpen
  \bibfield  {author} {\bibinfo {author} {\bibfnamefont {D.}~\bibnamefont {Gottesman}}\ and\ \bibinfo {author} {\bibfnamefont {I.~L.}\ \bibnamefont {Chuang}},\ }\bibfield  {title} {\bibinfo {title} {Demonstrating the viability of universal quantum computation using teleportation and single-qubit operations},\ }\href {https://doi.org/10.1038/46503} {\bibfield  {journal} {\bibinfo  {journal} {Nature}\ }\textbf {\bibinfo {volume} {402}},\ \bibinfo {pages} {390–393} (\bibinfo {year} {1999})}\BibitemShut {NoStop}%
\bibitem [{\citenamefont {Dankert}\ \emph {et~al.}(2009{\natexlab{a}})\citenamefont {Dankert}, \citenamefont {Cleve}, \citenamefont {Emerson},\ and\ \citenamefont {Livine}}]{eisert2000nonlocalgates}%
  \BibitemOpen
  \bibfield  {author} {\bibinfo {author} {\bibfnamefont {C.}~\bibnamefont {Dankert}}, \bibinfo {author} {\bibfnamefont {R.}~\bibnamefont {Cleve}}, \bibinfo {author} {\bibfnamefont {J.}~\bibnamefont {Emerson}},\ and\ \bibinfo {author} {\bibfnamefont {E.}~\bibnamefont {Livine}},\ }\bibfield  {title} {\bibinfo {title} {Exact and approximate unitary 2-designs and their application to fidelity estimation},\ }\href {https://doi.org/10.1103/PhysRevA.80.012304} {\bibfield  {journal} {\bibinfo  {journal} {Phys. Rev. A}\ }\textbf {\bibinfo {volume} {80}},\ \bibinfo {pages} {012304} (\bibinfo {year} {2009}{\natexlab{a}})}\BibitemShut {NoStop}%
\bibitem [{\citenamefont {Nielsen}(2003)}]{nielsen2003MBQC}%
  \BibitemOpen
  \bibfield  {author} {\bibinfo {author} {\bibfnamefont {M.~A.}\ \bibnamefont {Nielsen}},\ }\bibfield  {title} {\bibinfo {title} {Quantum computation by measurement and quantum memory},\ }\href {https://doi.org/10.1016/s0375-9601(02)01803-0} {\bibfield  {journal} {\bibinfo  {journal} {Physics Letters A}\ }\textbf {\bibinfo {volume} {308}},\ \bibinfo {pages} {96–100} (\bibinfo {year} {2003})}\BibitemShut {NoStop}%
\bibitem [{\citenamefont {Leung}(2004)}]{leung2004MBQC}%
  \BibitemOpen
  \bibfield  {author} {\bibinfo {author} {\bibfnamefont {D.~W.}\ \bibnamefont {Leung}},\ }\bibfield  {title} {\bibinfo {title} {Quantum computation by measurements},\ }\href {https://doi.org/10.1142/S0219749904000055} {\bibfield  {journal} {\bibinfo  {journal} {International Journal of Quantum Information}\ }\textbf {\bibinfo {volume} {02}},\ \bibinfo {pages} {33} (\bibinfo {year} {2004})}\BibitemShut {NoStop}%
\bibitem [{\citenamefont {T\'oth}(2012)}]{toth2012multipartitemetrology}%
  \BibitemOpen
  \bibfield  {author} {\bibinfo {author} {\bibfnamefont {G.}~\bibnamefont {T\'oth}},\ }\bibfield  {title} {\bibinfo {title} {Multipartite entanglement and high-precision metrology},\ }\href {https://doi.org/10.1103/PhysRevA.85.022322} {\bibfield  {journal} {\bibinfo  {journal} {Physical Review A}\ }\textbf {\bibinfo {volume} {85}},\ \bibinfo {pages} {022322} (\bibinfo {year} {2012})}\BibitemShut {NoStop}%
\bibitem [{\citenamefont {Hyllus}\ \emph {et~al.}(2012)\citenamefont {Hyllus}, \citenamefont {Laskowski}, \citenamefont {Krischek}, \citenamefont {Schwemmer}, \citenamefont {Wieczorek}, \citenamefont {Weinfurter}, \citenamefont {Pezz\'e},\ and\ \citenamefont {Smerzi}}]{hyllus2012multipartitemetrology}%
  \BibitemOpen
  \bibfield  {author} {\bibinfo {author} {\bibfnamefont {P.}~\bibnamefont {Hyllus}}, \bibinfo {author} {\bibfnamefont {W.}~\bibnamefont {Laskowski}}, \bibinfo {author} {\bibfnamefont {R.}~\bibnamefont {Krischek}}, \bibinfo {author} {\bibfnamefont {C.}~\bibnamefont {Schwemmer}}, \bibinfo {author} {\bibfnamefont {W.}~\bibnamefont {Wieczorek}}, \bibinfo {author} {\bibfnamefont {H.}~\bibnamefont {Weinfurter}}, \bibinfo {author} {\bibfnamefont {L.}~\bibnamefont {Pezz\'e}},\ and\ \bibinfo {author} {\bibfnamefont {A.}~\bibnamefont {Smerzi}},\ }\bibfield  {title} {\bibinfo {title} {Fisher information and multiparticle entanglement},\ }\href {https://doi.org/10.1103/PhysRevA.85.022321} {\bibfield  {journal} {\bibinfo  {journal} {Physical Review A}\ }\textbf {\bibinfo {volume} {85}},\ \bibinfo {pages} {022321} (\bibinfo {year} {2012})}\BibitemShut {NoStop}%
\bibitem [{\citenamefont {Zhuang}\ \emph {et~al.}(2018)\citenamefont {Zhuang}, \citenamefont {Zhang},\ and\ \citenamefont {Shapiro}}]{ZZS18}%
  \BibitemOpen
  \bibfield  {author} {\bibinfo {author} {\bibfnamefont {Q.}~\bibnamefont {Zhuang}}, \bibinfo {author} {\bibfnamefont {Z.}~\bibnamefont {Zhang}},\ and\ \bibinfo {author} {\bibfnamefont {J.~H.}\ \bibnamefont {Shapiro}},\ }\bibfield  {title} {\bibinfo {title} {{Distributed quantum sensing using continuous-variable multipartite entanglement}},\ }\href {https://doi.org/10.1103/PhysRevA.97.032329} {\bibfield  {journal} {\bibinfo  {journal} {Physical Review A}\ }\textbf {\bibinfo {volume} {97}},\ \bibinfo {pages} {032329} (\bibinfo {year} {2018})}\BibitemShut {NoStop}%
\bibitem [{\citenamefont {Xia}\ \emph {et~al.}(2019)\citenamefont {Xia}, \citenamefont {Zhuang}, \citenamefont {Clark},\ and\ \citenamefont {Zhang}}]{XZCZ19}%
  \BibitemOpen
  \bibfield  {author} {\bibinfo {author} {\bibfnamefont {Y.}~\bibnamefont {Xia}}, \bibinfo {author} {\bibfnamefont {Q.}~\bibnamefont {Zhuang}}, \bibinfo {author} {\bibfnamefont {W.}~\bibnamefont {Clark}},\ and\ \bibinfo {author} {\bibfnamefont {Z.}~\bibnamefont {Zhang}},\ }\bibfield  {title} {\bibinfo {title} {{Repeater-enhanced distributed quantum sensing based on continuous-variable multipartite entanglement}},\ }\href {https://doi.org/10.1103/PhysRevA.99.012328} {\bibfield  {journal} {\bibinfo  {journal} {Physical Review A}\ }\textbf {\bibinfo {volume} {99}},\ \bibinfo {pages} {012328} (\bibinfo {year} {2019})}\BibitemShut {NoStop}%
\bibitem [{\citenamefont {Guo}\ \emph {et~al.}(2019)\citenamefont {Guo}, \citenamefont {Breum}, \citenamefont {Borregaard}, \citenamefont {Izumi}, \citenamefont {Larsen}, \citenamefont {Gehring}, \citenamefont {Christandl}, \citenamefont {Neergaard-Nielsen},\ and\ \citenamefont {Andersen}}]{guo2020distributedsensing}%
  \BibitemOpen
  \bibfield  {author} {\bibinfo {author} {\bibfnamefont {X.}~\bibnamefont {Guo}}, \bibinfo {author} {\bibfnamefont {C.~R.}\ \bibnamefont {Breum}}, \bibinfo {author} {\bibfnamefont {J.}~\bibnamefont {Borregaard}}, \bibinfo {author} {\bibfnamefont {S.}~\bibnamefont {Izumi}}, \bibinfo {author} {\bibfnamefont {M.~V.}\ \bibnamefont {Larsen}}, \bibinfo {author} {\bibfnamefont {T.}~\bibnamefont {Gehring}}, \bibinfo {author} {\bibfnamefont {M.}~\bibnamefont {Christandl}}, \bibinfo {author} {\bibfnamefont {J.~S.}\ \bibnamefont {Neergaard-Nielsen}},\ and\ \bibinfo {author} {\bibfnamefont {U.~L.}\ \bibnamefont {Andersen}},\ }\bibfield  {title} {\bibinfo {title} {Distributed quantum sensing in a continuous-variable entangled network},\ }\href {https://doi.org/10.1038/s41567-019-0743-x} {\bibfield  {journal} {\bibinfo  {journal} {Nature Physics}\ }\textbf {\bibinfo {volume} {16}},\ \bibinfo {pages} {281–284} (\bibinfo {year} {2019})}\BibitemShut {NoStop}%
\bibitem [{\citenamefont {Gottesman}\ \emph {et~al.}(2012)\citenamefont {Gottesman}, \citenamefont {Jennewein},\ and\ \citenamefont {Croke}}]{gottesman2012baseline}%
  \BibitemOpen
  \bibfield  {author} {\bibinfo {author} {\bibfnamefont {D.}~\bibnamefont {Gottesman}}, \bibinfo {author} {\bibfnamefont {T.}~\bibnamefont {Jennewein}},\ and\ \bibinfo {author} {\bibfnamefont {S.}~\bibnamefont {Croke}},\ }\bibfield  {title} {\bibinfo {title} {{Longer-Baseline Telescopes Using Quantum Repeaters}},\ }\href {https://doi.org/10.1103/PhysRevLett.109.070503} {\bibfield  {journal} {\bibinfo  {journal} {Physical Review Letters}\ }\textbf {\bibinfo {volume} {109}},\ \bibinfo {pages} {070503} (\bibinfo {year} {2012})}\BibitemShut {NoStop}%
\bibitem [{\citenamefont {Kimble}(2008)}]{Kim08}%
  \BibitemOpen
  \bibfield  {author} {\bibinfo {author} {\bibfnamefont {H.~J.}\ \bibnamefont {Kimble}},\ }\bibfield  {title} {\bibinfo {title} {{The quantum internet}},\ }\href {https://doi.org/10.1038/nature07127} {\bibfield  {journal} {\bibinfo  {journal} {Nature}\ }\textbf {\bibinfo {volume} {453}} (\bibinfo {year} {2008})}\BibitemShut {NoStop}%
\bibitem [{\citenamefont {Bennett}\ \emph {et~al.}(1996{\natexlab{a}})\citenamefont {Bennett}, \citenamefont {DiVincenzo}, \citenamefont {Smolin},\ and\ \citenamefont {Wootters}}]{BDSW96}%
  \BibitemOpen
  \bibfield  {author} {\bibinfo {author} {\bibfnamefont {C.~H.}\ \bibnamefont {Bennett}}, \bibinfo {author} {\bibfnamefont {D.~P.}\ \bibnamefont {DiVincenzo}}, \bibinfo {author} {\bibfnamefont {J.~A.}\ \bibnamefont {Smolin}},\ and\ \bibinfo {author} {\bibfnamefont {W.~K.}\ \bibnamefont {Wootters}},\ }\bibfield  {title} {\bibinfo {title} {{Mixed-state entanglement and quantum error correction}},\ }\href {https://doi.org/10.1103/PhysRevA.54.3824} {\bibfield  {journal} {\bibinfo  {journal} {Physical Review A}\ }\textbf {\bibinfo {volume} {54}},\ \bibinfo {pages} {3824} (\bibinfo {year} {1996}{\natexlab{a}})}\BibitemShut {NoStop}%
\bibitem [{\citenamefont {Bennett}\ \emph {et~al.}(1996{\natexlab{b}})\citenamefont {Bennett}, \citenamefont {Brassard}, \citenamefont {Popescu}, \citenamefont {Schumacher}, \citenamefont {Smolin},\ and\ \citenamefont {Wootters}}]{BBP96}%
  \BibitemOpen
  \bibfield  {author} {\bibinfo {author} {\bibfnamefont {C.~H.}\ \bibnamefont {Bennett}}, \bibinfo {author} {\bibfnamefont {G.}~\bibnamefont {Brassard}}, \bibinfo {author} {\bibfnamefont {S.}~\bibnamefont {Popescu}}, \bibinfo {author} {\bibfnamefont {B.}~\bibnamefont {Schumacher}}, \bibinfo {author} {\bibfnamefont {J.~A.}\ \bibnamefont {Smolin}},\ and\ \bibinfo {author} {\bibfnamefont {W.~K.}\ \bibnamefont {Wootters}},\ }\bibfield  {title} {\bibinfo {title} {{Purification of Noisy Entanglement and Faithful Teleportation via Noisy Channels}},\ }\href {https://doi.org/10.1103/PhysRevLett.76.722} {\bibfield  {journal} {\bibinfo  {journal} {Physical Review Letters}\ }\textbf {\bibinfo {volume} {76}},\ \bibinfo {pages} {722} (\bibinfo {year} {1996}{\natexlab{b}})}\BibitemShut {NoStop}%
\bibitem [{\citenamefont {Bennett}\ \emph {et~al.}(1996{\natexlab{c}})\citenamefont {Bennett}, \citenamefont {Bernstein}, \citenamefont {Popescu},\ and\ \citenamefont {Schumacher}}]{bennett1996concentrating}%
  \BibitemOpen
  \bibfield  {author} {\bibinfo {author} {\bibfnamefont {C.~H.}\ \bibnamefont {Bennett}}, \bibinfo {author} {\bibfnamefont {H.~J.}\ \bibnamefont {Bernstein}}, \bibinfo {author} {\bibfnamefont {S.}~\bibnamefont {Popescu}},\ and\ \bibinfo {author} {\bibfnamefont {B.}~\bibnamefont {Schumacher}},\ }\bibfield  {title} {\bibinfo {title} {{Concentrating partial entanglement by local operations}},\ }\href {https://doi.org/10.1103/PhysRevA.53.2046} {\bibfield  {journal} {\bibinfo  {journal} {Physical Review A}\ }\textbf {\bibinfo {volume} {53}},\ \bibinfo {pages} {2046} (\bibinfo {year} {1996}{\natexlab{c}})}\BibitemShut {NoStop}%
\bibitem [{\citenamefont {Deutsch}\ \emph {et~al.}(1996)\citenamefont {Deutsch}, \citenamefont {Ekert}, \citenamefont {Jozsa}, \citenamefont {Macchiavello}, \citenamefont {Popescu},\ and\ \citenamefont {Sanpera}}]{DEJMPS96}%
  \BibitemOpen
  \bibfield  {author} {\bibinfo {author} {\bibfnamefont {D.}~\bibnamefont {Deutsch}}, \bibinfo {author} {\bibfnamefont {A.}~\bibnamefont {Ekert}}, \bibinfo {author} {\bibfnamefont {R.}~\bibnamefont {Jozsa}}, \bibinfo {author} {\bibfnamefont {C.}~\bibnamefont {Macchiavello}}, \bibinfo {author} {\bibfnamefont {S.}~\bibnamefont {Popescu}},\ and\ \bibinfo {author} {\bibfnamefont {A.}~\bibnamefont {Sanpera}},\ }\bibfield  {title} {\bibinfo {title} {{Quantum Privacy Amplification and the Security of Quantum Cryptography over Noisy Channels}},\ }\href {https://doi.org/10.1103/PhysRevLett.77.2818} {\bibfield  {journal} {\bibinfo  {journal} {Physical Review Letters}\ }\textbf {\bibinfo {volume} {77}},\ \bibinfo {pages} {2818} (\bibinfo {year} {1996})}\BibitemShut {NoStop}%
\bibitem [{\citenamefont {Murao}\ \emph {et~al.}(1998)\citenamefont {Murao}, \citenamefont {Plenio}, \citenamefont {Popescu}, \citenamefont {Vedral},\ and\ \citenamefont {Knight}}]{murao1998multiparticlepurification}%
  \BibitemOpen
  \bibfield  {author} {\bibinfo {author} {\bibfnamefont {M.}~\bibnamefont {Murao}}, \bibinfo {author} {\bibfnamefont {M.~B.}\ \bibnamefont {Plenio}}, \bibinfo {author} {\bibfnamefont {S.}~\bibnamefont {Popescu}}, \bibinfo {author} {\bibfnamefont {V.}~\bibnamefont {Vedral}},\ and\ \bibinfo {author} {\bibfnamefont {P.~L.}\ \bibnamefont {Knight}},\ }\bibfield  {title} {\bibinfo {title} {{Multiparticle entanglement purification protocols}},\ }\href {https://doi.org/10.1103/PhysRevA.57.R4075} {\bibfield  {journal} {\bibinfo  {journal} {Physical Review A}\ }\textbf {\bibinfo {volume} {57}},\ \bibinfo {pages} {R4075} (\bibinfo {year} {1998})}\BibitemShut {NoStop}%
\bibitem [{\citenamefont {Horodecki}\ and\ \citenamefont {Horodecki}(1999)}]{horodecki1999reduction}%
  \BibitemOpen
  \bibfield  {author} {\bibinfo {author} {\bibfnamefont {M.}~\bibnamefont {Horodecki}}\ and\ \bibinfo {author} {\bibfnamefont {P.}~\bibnamefont {Horodecki}},\ }\bibfield  {title} {\bibinfo {title} {{Reduction criterion of separability and limits for a class of distillation protocols}},\ }\href {https://doi.org/10.1103/PhysRevA.59.4206} {\bibfield  {journal} {\bibinfo  {journal} {Physical Review A}\ }\textbf {\bibinfo {volume} {59}},\ \bibinfo {pages} {4206} (\bibinfo {year} {1999})}\BibitemShut {NoStop}%
\bibitem [{\citenamefont {Chau}\ and\ \citenamefont {Ho}(2011)}]{chau2011distillationQLDPC}%
  \BibitemOpen
  \bibfield  {author} {\bibinfo {author} {\bibfnamefont {H.~F.}\ \bibnamefont {Chau}}\ and\ \bibinfo {author} {\bibfnamefont {K.~H.}\ \bibnamefont {Ho}},\ }\bibfield  {title} {\bibinfo {title} {Practical entanglement distillation scheme using recurrence method and quantum low density parity check codes},\ }\href {https://doi.org/10.1007/s11128-010-0190-1} {\bibfield  {journal} {\bibinfo  {journal} {Quantum Information Processing}\ }\textbf {\bibinfo {volume} {10}},\ \bibinfo {pages} {213} (\bibinfo {year} {2011})}\BibitemShut {NoStop}%
\bibitem [{\citenamefont {Roque}\ \emph {et~al.}(2024)\citenamefont {Roque}, \citenamefont {Cruz}, \citenamefont {Monteiro},\ and\ \citenamefont {Coutinho}}]{roque2024efficient}%
  \BibitemOpen
  \bibfield  {author} {\bibinfo {author} {\bibfnamefont {A.}~\bibnamefont {Roque}}, \bibinfo {author} {\bibfnamefont {D.}~\bibnamefont {Cruz}}, \bibinfo {author} {\bibfnamefont {F.~A.}\ \bibnamefont {Monteiro}},\ and\ \bibinfo {author} {\bibfnamefont {B.~C.}\ \bibnamefont {Coutinho}},\ }\bibfield  {title} {\bibinfo {title} {Efficient entanglement purification based on noise guessing decoding},\ }\href {https://doi.org/10.22331/q-2024-09-19-1476} {\bibfield  {journal} {\bibinfo  {journal} {{Quantum}}\ }\textbf {\bibinfo {volume} {8}},\ \bibinfo {pages} {1476} (\bibinfo {year} {2024})}\BibitemShut {NoStop}%
\bibitem [{\citenamefont {Pattison}\ \emph {et~al.}(2024)\citenamefont {Pattison}, \citenamefont {Baranes}, \citenamefont {Ataides}, \citenamefont {Lukin},\ and\ \citenamefont {Zhou}}]{pattison2024constoverheaddistillation}%
  \BibitemOpen
  \bibfield  {author} {\bibinfo {author} {\bibfnamefont {C.~A.}\ \bibnamefont {Pattison}}, \bibinfo {author} {\bibfnamefont {G.}~\bibnamefont {Baranes}}, \bibinfo {author} {\bibfnamefont {J.~P.~B.}\ \bibnamefont {Ataides}}, \bibinfo {author} {\bibfnamefont {M.~D.}\ \bibnamefont {Lukin}},\ and\ \bibinfo {author} {\bibfnamefont {H.}~\bibnamefont {Zhou}},\ }\href@noop {} {\bibinfo {title} {Fast quantum interconnects via constant-rate entanglement distillation}} (\bibinfo {year} {2024}),\ \Eprint {https://arxiv.org/abs/2408.15936} {arXiv:2408.15936 [quant-ph]} \BibitemShut {NoStop}%
\bibitem [{\citenamefont {Shi}\ \emph {et~al.}(2024)\citenamefont {Shi}, \citenamefont {Patil},\ and\ \citenamefont {Guha}}]{shi2024stabilizer}%
  \BibitemOpen
  \bibfield  {author} {\bibinfo {author} {\bibfnamefont {Y.}~\bibnamefont {Shi}}, \bibinfo {author} {\bibfnamefont {A.}~\bibnamefont {Patil}},\ and\ \bibinfo {author} {\bibfnamefont {S.}~\bibnamefont {Guha}},\ }\href@noop {} {\bibinfo {title} {Stabilizer entanglement distillation and efficient fault-tolerant encoder}} (\bibinfo {year} {2024}),\ \Eprint {https://arxiv.org/abs/2408.06299} {arXiv:2408.06299 [quant-ph]} \BibitemShut {NoStop}%
\bibitem [{\citenamefont {Ataides}\ \emph {et~al.}(2025)\citenamefont {Ataides}, \citenamefont {Zhou}, \citenamefont {Xu}, \citenamefont {Baranes}, \citenamefont {Li}, \citenamefont {Lukin},\ and\ \citenamefont {Jiang}}]{ataides2025constant}%
  \BibitemOpen
  \bibfield  {author} {\bibinfo {author} {\bibfnamefont {J.~P.~B.}\ \bibnamefont {Ataides}}, \bibinfo {author} {\bibfnamefont {H.}~\bibnamefont {Zhou}}, \bibinfo {author} {\bibfnamefont {Q.}~\bibnamefont {Xu}}, \bibinfo {author} {\bibfnamefont {G.}~\bibnamefont {Baranes}}, \bibinfo {author} {\bibfnamefont {B.}~\bibnamefont {Li}}, \bibinfo {author} {\bibfnamefont {M.~D.}\ \bibnamefont {Lukin}},\ and\ \bibinfo {author} {\bibfnamefont {L.}~\bibnamefont {Jiang}},\ }\href@noop {} {\bibinfo {title} {Constant-overhead fault-tolerant bell-pair distillation using high-rate codes}} (\bibinfo {year} {2025}),\ \Eprint {https://arxiv.org/abs/2502.09542} {arXiv:2502.09542 [quant-ph]} \BibitemShut {NoStop}%
\bibitem [{\citenamefont {Dehaene}\ \emph {et~al.}(2003)\citenamefont {Dehaene}, \citenamefont {Van~den Nest}, \citenamefont {De~Moor},\ and\ \citenamefont {Verstraete}}]{dehaene2003distillationpermutation}%
  \BibitemOpen
  \bibfield  {author} {\bibinfo {author} {\bibfnamefont {J.}~\bibnamefont {Dehaene}}, \bibinfo {author} {\bibfnamefont {M.}~\bibnamefont {Van~den Nest}}, \bibinfo {author} {\bibfnamefont {B.}~\bibnamefont {De~Moor}},\ and\ \bibinfo {author} {\bibfnamefont {F.}~\bibnamefont {Verstraete}},\ }\bibfield  {title} {\bibinfo {title} {{Local permutations of products of Bell states and entanglement distillation}},\ }\href {https://doi.org/10.1103/PhysRevA.67.022310} {\bibfield  {journal} {\bibinfo  {journal} {Physical Review A}\ }\textbf {\bibinfo {volume} {67}},\ \bibinfo {pages} {022310} (\bibinfo {year} {2003})}\BibitemShut {NoStop}%
\bibitem [{\citenamefont {Maneva}\ and\ \citenamefont {Smolin}(2002)}]{maneva2002purification}%
  \BibitemOpen
  \bibfield  {author} {\bibinfo {author} {\bibfnamefont {E.~N.}\ \bibnamefont {Maneva}}\ and\ \bibinfo {author} {\bibfnamefont {J.~A.}\ \bibnamefont {Smolin}},\ }\bibfield  {title} {\bibinfo {title} {Improved two-party and multi-party purification protocols},\ }in\ \href {https://doi.org/10.1090/conm/305/05220} {\emph {\bibinfo {booktitle} {Quantum computation and information ({W}ashington, {DC}, 2000)}}},\ \bibinfo {series} {Contemp. Math.}, Vol.\ \bibinfo {volume} {305}\ (\bibinfo  {publisher} {Amer. Math. Soc., Providence, RI},\ \bibinfo {year} {2002})\ pp.\ \bibinfo {pages} {203--212}\BibitemShut {NoStop}%
\bibitem [{\citenamefont {Bombin}\ and\ \citenamefont {Martin-Delgado}(2005)}]{BM05}%
  \BibitemOpen
  \bibfield  {author} {\bibinfo {author} {\bibfnamefont {H.}~\bibnamefont {Bombin}}\ and\ \bibinfo {author} {\bibfnamefont {M.~A.}\ \bibnamefont {Martin-Delgado}},\ }\bibfield  {title} {\bibinfo {title} {{Entanglement distillation protocols and number theory}},\ }\href {https://doi.org/10.1103/PhysRevA.72.032313} {\bibfield  {journal} {\bibinfo  {journal} {Physical Review A}\ }\textbf {\bibinfo {volume} {72}},\ \bibinfo {pages} {032313} (\bibinfo {year} {2005})}\BibitemShut {NoStop}%
\bibitem [{\citenamefont {Krastanov}\ \emph {et~al.}(2019)\citenamefont {Krastanov}, \citenamefont {Albert},\ and\ \citenamefont {Jiang}}]{KAJ19}%
  \BibitemOpen
  \bibfield  {author} {\bibinfo {author} {\bibfnamefont {S.}~\bibnamefont {Krastanov}}, \bibinfo {author} {\bibfnamefont {V.~V.}\ \bibnamefont {Albert}},\ and\ \bibinfo {author} {\bibfnamefont {L.}~\bibnamefont {Jiang}},\ }\bibfield  {title} {\bibinfo {title} {{Optimized Entanglement Purification}},\ }\href {https://doi.org/10.22331/q-2019-02-18-123} {\bibfield  {journal} {\bibinfo  {journal} {Quantum}\ }\textbf {\bibinfo {volume} {3}},\ \bibinfo {pages} {123} (\bibinfo {year} {2019})}\BibitemShut {NoStop}%
\bibitem [{\citenamefont {Jansen}\ \emph {et~al.}(2022)\citenamefont {Jansen}, \citenamefont {Goodenough}, \citenamefont {de~Bone}, \citenamefont {Gijswijt},\ and\ \citenamefont {Elkouss}}]{jansen2022enumeratingclifforddistillation}%
  \BibitemOpen
  \bibfield  {author} {\bibinfo {author} {\bibfnamefont {S.}~\bibnamefont {Jansen}}, \bibinfo {author} {\bibfnamefont {K.}~\bibnamefont {Goodenough}}, \bibinfo {author} {\bibfnamefont {S.}~\bibnamefont {de~Bone}}, \bibinfo {author} {\bibfnamefont {D.}~\bibnamefont {Gijswijt}},\ and\ \bibinfo {author} {\bibfnamefont {D.}~\bibnamefont {Elkouss}},\ }\bibfield  {title} {\bibinfo {title} {{Enumerating all bilocal Clifford distillation protocols through symmetry reduction}},\ }\href {https://doi.org/10.22331/q-2022-05-19-715} {\bibfield  {journal} {\bibinfo  {journal} {Quantum}\ }\textbf {\bibinfo {volume} {6}},\ \bibinfo {pages} {715} (\bibinfo {year} {2022})}\BibitemShut {NoStop}%
\bibitem [{\citenamefont {Addala}\ \emph {et~al.}(2023)\citenamefont {Addala}, \citenamefont {Ge},\ and\ \citenamefont {Krastanov}}]{addala2023optimizedpurification}%
  \BibitemOpen
  \bibfield  {author} {\bibinfo {author} {\bibfnamefont {V.~L.}\ \bibnamefont {Addala}}, \bibinfo {author} {\bibfnamefont {S.}~\bibnamefont {Ge}},\ and\ \bibinfo {author} {\bibfnamefont {S.}~\bibnamefont {Krastanov}},\ }\href@noop {} {\bibinfo {title} {{Faster-than-Clifford Simulations of Entanglement Purification Circuits and Their Full-stack Optimization}}} (\bibinfo {year} {2023}),\ \Eprint {https://arxiv.org/abs/2307.06354} {arXiv:2307.06354 [quant-ph]} \BibitemShut {NoStop}%
\bibitem [{\citenamefont {Goodenough}\ \emph {et~al.}(2024)\citenamefont {Goodenough}, \citenamefont {de~Bone}, \citenamefont {Addala}, \citenamefont {Krastanov}, \citenamefont {Jansen}, \citenamefont {Gijswijt},\ and\ \citenamefont {Elkouss}}]{goodenough2023ntokdistillation}%
  \BibitemOpen
  \bibfield  {author} {\bibinfo {author} {\bibfnamefont {K.}~\bibnamefont {Goodenough}}, \bibinfo {author} {\bibfnamefont {S.}~\bibnamefont {de~Bone}}, \bibinfo {author} {\bibfnamefont {V.}~\bibnamefont {Addala}}, \bibinfo {author} {\bibfnamefont {S.}~\bibnamefont {Krastanov}}, \bibinfo {author} {\bibfnamefont {S.}~\bibnamefont {Jansen}}, \bibinfo {author} {\bibfnamefont {D.}~\bibnamefont {Gijswijt}},\ and\ \bibinfo {author} {\bibfnamefont {D.}~\bibnamefont {Elkouss}},\ }\bibfield  {title} {\bibinfo {title} {{Near-Term $n$ to $k$ Distillation Protocols Using Graph Codes}},\ }\href {https://doi.org/10.1109/jsac.2024.3380094} {\bibfield  {journal} {\bibinfo  {journal} {IEEE Journal on Selected Areas in Communications}\ }\textbf {\bibinfo {volume} {42}},\ \bibinfo {pages} {1830–1849} (\bibinfo {year} {2024})}\BibitemShut {NoStop}%
\bibitem [{\citenamefont {Devetak}\ \emph {et~al.}(2004)\citenamefont {Devetak}, \citenamefont {Harrow},\ and\ \citenamefont {Winter}}]{devetak2004father}%
  \BibitemOpen
  \bibfield  {author} {\bibinfo {author} {\bibfnamefont {I.}~\bibnamefont {Devetak}}, \bibinfo {author} {\bibfnamefont {A.~W.}\ \bibnamefont {Harrow}},\ and\ \bibinfo {author} {\bibfnamefont {A.}~\bibnamefont {Winter}},\ }\bibfield  {title} {\bibinfo {title} {{A Family of Quantum Protocols}},\ }\href {https://doi.org/10.1103/PhysRevLett.93.230504} {\bibfield  {journal} {\bibinfo  {journal} {Physical Review Letters}\ }\textbf {\bibinfo {volume} {93}},\ \bibinfo {pages} {230504} (\bibinfo {year} {2004})}\BibitemShut {NoStop}%
\bibitem [{\citenamefont {Devetak}(2005)}]{devetak2005private}%
  \BibitemOpen
  \bibfield  {author} {\bibinfo {author} {\bibfnamefont {I.}~\bibnamefont {Devetak}},\ }\bibfield  {title} {\bibinfo {title} {The private classical capacity and quantum capacity of a quantum channel},\ }\href {https://doi.org/10.1109/TIT.2004.839515} {\bibfield  {journal} {\bibinfo  {journal} {IEEE Transactions on Information Theory}\ }\textbf {\bibinfo {volume} {51}},\ \bibinfo {pages} {44} (\bibinfo {year} {2005})}\BibitemShut {NoStop}%
\bibitem [{\citenamefont {Devetak}\ and\ \citenamefont {Winter}(2005)}]{devetak2005distillation}%
  \BibitemOpen
  \bibfield  {author} {\bibinfo {author} {\bibfnamefont {I.}~\bibnamefont {Devetak}}\ and\ \bibinfo {author} {\bibfnamefont {A.}~\bibnamefont {Winter}},\ }\bibfield  {title} {\bibinfo {title} {Distillation of secret key and entanglement from quantum states},\ }\href {https://doi.org/10.1098/rspa.2004.1372} {\bibfield  {journal} {\bibinfo  {journal} {Proceedings of the Royal Society A: Mathematical, Physical and Engineering Sciences}\ }\textbf {\bibinfo {volume} {461}},\ \bibinfo {pages} {207–235} (\bibinfo {year} {2005})}\BibitemShut {NoStop}%
\bibitem [{\citenamefont {Hayden}\ \emph {et~al.}(2008)\citenamefont {Hayden}, \citenamefont {Horodecki}, \citenamefont {Winter},\ and\ \citenamefont {Yard}}]{hayden2008decoupling}%
  \BibitemOpen
  \bibfield  {author} {\bibinfo {author} {\bibfnamefont {P.}~\bibnamefont {Hayden}}, \bibinfo {author} {\bibfnamefont {M.}~\bibnamefont {Horodecki}}, \bibinfo {author} {\bibfnamefont {A.}~\bibnamefont {Winter}},\ and\ \bibinfo {author} {\bibfnamefont {J.}~\bibnamefont {Yard}},\ }\bibfield  {title} {\bibinfo {title} {A decoupling approach to the quantum capacity},\ }\href {https://doi.org/10.1142/s1230161208000043} {\bibfield  {journal} {\bibinfo  {journal} {Open Systems \& Information Dynamics}\ }\textbf {\bibinfo {volume} {15}},\ \bibinfo {pages} {7–19} (\bibinfo {year} {2008})}\BibitemShut {NoStop}%
\bibitem [{\citenamefont {Abeyesinghe}\ \emph {et~al.}(2009)\citenamefont {Abeyesinghe}, \citenamefont {Devetak}, \citenamefont {Hayden},\ and\ \citenamefont {Winter}}]{abeyesinghe2009mother}%
  \BibitemOpen
  \bibfield  {author} {\bibinfo {author} {\bibfnamefont {A.}~\bibnamefont {Abeyesinghe}}, \bibinfo {author} {\bibfnamefont {I.}~\bibnamefont {Devetak}}, \bibinfo {author} {\bibfnamefont {P.}~\bibnamefont {Hayden}},\ and\ \bibinfo {author} {\bibfnamefont {A.}~\bibnamefont {Winter}},\ }\bibfield  {title} {\bibinfo {title} {The mother of all protocols: restructuring quantum information’s family tree},\ }\href {https://doi.org/10.1098/rspa.2009.0202} {\bibfield  {journal} {\bibinfo  {journal} {Proceedings of the Royal Society A: Mathematical, Physical and Engineering Sciences}\ }\textbf {\bibinfo {volume} {465}},\ \bibinfo {pages} {2537–2563} (\bibinfo {year} {2009})}\BibitemShut {NoStop}%
\bibitem [{\citenamefont {Buscemi}\ and\ \citenamefont {Datta}(2010)}]{buscemi2010distilling}%
  \BibitemOpen
  \bibfield  {author} {\bibinfo {author} {\bibfnamefont {F.}~\bibnamefont {Buscemi}}\ and\ \bibinfo {author} {\bibfnamefont {N.}~\bibnamefont {Datta}},\ }\bibfield  {title} {\bibinfo {title} {Distilling entanglement from arbitrary resources},\ }\href {https://doi.org/10.1063/1.3483717} {\bibfield  {journal} {\bibinfo  {journal} {Journal of Mathematical Physics}\ }\textbf {\bibinfo {volume} {51}},\ \bibinfo {pages} {102201} (\bibinfo {year} {2010})}\BibitemShut {NoStop}%
\bibitem [{\citenamefont {Leditzky}\ \emph {et~al.}(2018)\citenamefont {Leditzky}, \citenamefont {Datta},\ and\ \citenamefont {Smith}}]{leditzky2018useful}%
  \BibitemOpen
  \bibfield  {author} {\bibinfo {author} {\bibfnamefont {F.}~\bibnamefont {Leditzky}}, \bibinfo {author} {\bibfnamefont {N.}~\bibnamefont {Datta}},\ and\ \bibinfo {author} {\bibfnamefont {G.}~\bibnamefont {Smith}},\ }\bibfield  {title} {\bibinfo {title} {Useful states and entanglement distillation},\ }\href {https://doi.org/10.1109/tit.2017.2776907} {\bibfield  {journal} {\bibinfo  {journal} {IEEE Transactions on Information Theory}\ }\textbf {\bibinfo {volume} {64}},\ \bibinfo {pages} {4689–4708} (\bibinfo {year} {2018})}\BibitemShut {NoStop}%
\bibitem [{\citenamefont {Fang}\ \emph {et~al.}(2019)\citenamefont {Fang}, \citenamefont {Wang}, \citenamefont {Tomamichel},\ and\ \citenamefont {Duan}}]{FWTD19}%
  \BibitemOpen
  \bibfield  {author} {\bibinfo {author} {\bibfnamefont {K.}~\bibnamefont {Fang}}, \bibinfo {author} {\bibfnamefont {X.}~\bibnamefont {Wang}}, \bibinfo {author} {\bibfnamefont {M.}~\bibnamefont {Tomamichel}},\ and\ \bibinfo {author} {\bibfnamefont {R.}~\bibnamefont {Duan}},\ }\bibfield  {title} {\bibinfo {title} {Non-asymptotic entanglement distillation},\ }\href {https://doi.org/10.1109/tit.2019.2914688} {\bibfield  {journal} {\bibinfo  {journal} {IEEE Transactions on Information Theory}\ }\textbf {\bibinfo {volume} {65}},\ \bibinfo {pages} {6454–6465} (\bibinfo {year} {2019})}\BibitemShut {NoStop}%
\bibitem [{\citenamefont {Regula}\ \emph {et~al.}(2023)\citenamefont {Regula}, \citenamefont {Lami},\ and\ \citenamefont {Wilde}}]{regula2023probabilistic}%
  \BibitemOpen
  \bibfield  {author} {\bibinfo {author} {\bibfnamefont {B.}~\bibnamefont {Regula}}, \bibinfo {author} {\bibfnamefont {L.}~\bibnamefont {Lami}},\ and\ \bibinfo {author} {\bibfnamefont {M.~M.}\ \bibnamefont {Wilde}},\ }\bibfield  {title} {\bibinfo {title} {Overcoming entropic limitations on asymptotic state transformations through probabilistic protocols},\ }\href {https://doi.org/10.1103/PhysRevA.107.042401} {\bibfield  {journal} {\bibinfo  {journal} {Physical Review A}\ }\textbf {\bibinfo {volume} {107}},\ \bibinfo {pages} {042401} (\bibinfo {year} {2023})}\BibitemShut {NoStop}%
\bibitem [{\citenamefont {Siddhu}\ \emph {et~al.}(2024)\citenamefont {Siddhu}, \citenamefont {Abdelhadi}, \citenamefont {Jochym-O'Connor},\ and\ \citenamefont {Smolin}}]{siddhu2024entanglementsharing}%
  \BibitemOpen
  \bibfield  {author} {\bibinfo {author} {\bibfnamefont {V.}~\bibnamefont {Siddhu}}, \bibinfo {author} {\bibfnamefont {D.}~\bibnamefont {Abdelhadi}}, \bibinfo {author} {\bibfnamefont {T.}~\bibnamefont {Jochym-O'Connor}},\ and\ \bibinfo {author} {\bibfnamefont {J.}~\bibnamefont {Smolin}},\ }\bibfield  {title} {\bibinfo {title} {{Entanglement Sharing Across a Damping-Dephasing Channel}},\ }in\ \href {https://doi.org/10.1109/ISIT57864.2024.10619242} {\emph {\bibinfo {booktitle} {{2024 IEEE International Symposium on Information Theory (ISIT)}}}}\ (\bibinfo {year} {2024})\ pp.\ \bibinfo {pages} {1432--1437}\BibitemShut {NoStop}%
\bibitem [{\citenamefont {Abdelhadi}\ \emph {et~al.}(2024)\citenamefont {Abdelhadi}, \citenamefont {Jochym-O'Connor}, \citenamefont {Siddhu},\ and\ \citenamefont {Smolin}}]{abdelhadi2024adaptive}%
  \BibitemOpen
  \bibfield  {author} {\bibinfo {author} {\bibfnamefont {D.}~\bibnamefont {Abdelhadi}}, \bibinfo {author} {\bibfnamefont {T.}~\bibnamefont {Jochym-O'Connor}}, \bibinfo {author} {\bibfnamefont {V.}~\bibnamefont {Siddhu}},\ and\ \bibinfo {author} {\bibfnamefont {J.}~\bibnamefont {Smolin}},\ }\href@noop {} {\bibinfo {title} {{Adaptive Channel Reshaping for Improved Entanglement Distillation}}} (\bibinfo {year} {2024}),\ \Eprint {https://arxiv.org/abs/2410.22295} {arXiv:2410.22295 [quant-ph]} \BibitemShut {NoStop}%
\bibitem [{\citenamefont {Fowler}\ \emph {et~al.}(2010)\citenamefont {Fowler}, \citenamefont {Wang}, \citenamefont {Hill}, \citenamefont {Ladd}, \citenamefont {Van~Meter},\ and\ \citenamefont {Hollenberg}}]{fowler2010surface}%
  \BibitemOpen
  \bibfield  {author} {\bibinfo {author} {\bibfnamefont {A.~G.}\ \bibnamefont {Fowler}}, \bibinfo {author} {\bibfnamefont {D.~S.}\ \bibnamefont {Wang}}, \bibinfo {author} {\bibfnamefont {C.~D.}\ \bibnamefont {Hill}}, \bibinfo {author} {\bibfnamefont {T.~D.}\ \bibnamefont {Ladd}}, \bibinfo {author} {\bibfnamefont {R.}~\bibnamefont {Van~Meter}},\ and\ \bibinfo {author} {\bibfnamefont {L.~C.~L.}\ \bibnamefont {Hollenberg}},\ }\bibfield  {title} {\bibinfo {title} {{Surface Code Quantum Communication}},\ }\href {https://doi.org/10.1103/PhysRevLett.104.180503} {\bibfield  {journal} {\bibinfo  {journal} {Physical Review Letters}\ }\textbf {\bibinfo {volume} {104}},\ \bibinfo {pages} {180503} (\bibinfo {year} {2010})}\BibitemShut {NoStop}%
\bibitem [{\citenamefont {Ramette}\ \emph {et~al.}(2024)\citenamefont {Ramette}, \citenamefont {Sinclair}, \citenamefont {Breuckmann},\ and\ \citenamefont {Vuletić}}]{ramette2023fault}%
  \BibitemOpen
  \bibfield  {author} {\bibinfo {author} {\bibfnamefont {J.}~\bibnamefont {Ramette}}, \bibinfo {author} {\bibfnamefont {J.}~\bibnamefont {Sinclair}}, \bibinfo {author} {\bibfnamefont {N.~P.}\ \bibnamefont {Breuckmann}},\ and\ \bibinfo {author} {\bibfnamefont {V.}~\bibnamefont {Vuletić}},\ }\bibfield  {title} {\bibinfo {title} {Fault-tolerant connection of error-corrected qubits with noisy links},\ }\href {https://doi.org/10.1038/s41534-024-00855-4} {\bibfield  {journal} {\bibinfo  {journal} {npj Quantum Information}\ }\textbf {\bibinfo {volume} {10}},\ \bibinfo {pages} {58} (\bibinfo {year} {2024})}\BibitemShut {NoStop}%
\bibitem [{\citenamefont {Sinclair}\ \emph {et~al.}(2024)\citenamefont {Sinclair}, \citenamefont {Ramette}, \citenamefont {Grinkemeyer}, \citenamefont {Bluvstein}, \citenamefont {Lukin},\ and\ \citenamefont {Vuletić}}]{sinclair2024fault}%
  \BibitemOpen
  \bibfield  {author} {\bibinfo {author} {\bibfnamefont {J.}~\bibnamefont {Sinclair}}, \bibinfo {author} {\bibfnamefont {J.}~\bibnamefont {Ramette}}, \bibinfo {author} {\bibfnamefont {B.}~\bibnamefont {Grinkemeyer}}, \bibinfo {author} {\bibfnamefont {D.}~\bibnamefont {Bluvstein}}, \bibinfo {author} {\bibfnamefont {M.}~\bibnamefont {Lukin}},\ and\ \bibinfo {author} {\bibfnamefont {V.}~\bibnamefont {Vuletić}},\ }\href@noop {} {\bibinfo {title} {Fault-tolerant optical interconnects for neutral-atom arrays}} (\bibinfo {year} {2024}),\ \Eprint {https://arxiv.org/abs/2408.08955} {arXiv:2408.08955 [quant-ph]} \BibitemShut {NoStop}%
\bibitem [{\citenamefont {Shannon}(1948)}]{shannon1948mathematical}%
  \BibitemOpen
  \bibfield  {author} {\bibinfo {author} {\bibfnamefont {C.~E.}\ \bibnamefont {Shannon}},\ }\bibfield  {title} {\bibinfo {title} {A {Mathematical} {Theory} of {Communication}},\ }\href {https://doi.org/10.1002/j.1538-7305.1948.tb01338.x} {\bibfield  {journal} {\bibinfo  {journal} {Bell System Technical Journal}\ }\textbf {\bibinfo {volume} {27}},\ \bibinfo {pages} {379} (\bibinfo {year} {1948})}\BibitemShut {NoStop}%
\bibitem [{\citenamefont {Berlekamp}\ \emph {et~al.}(1978)\citenamefont {Berlekamp}, \citenamefont {McEliece},\ and\ \citenamefont {van Tilborg}}]{berlekamp1978intractability}%
  \BibitemOpen
  \bibfield  {author} {\bibinfo {author} {\bibfnamefont {E.}~\bibnamefont {Berlekamp}}, \bibinfo {author} {\bibfnamefont {R.}~\bibnamefont {McEliece}},\ and\ \bibinfo {author} {\bibfnamefont {H.}~\bibnamefont {van Tilborg}},\ }\bibfield  {title} {\bibinfo {title} {On the inherent intractability of certain coding problems},\ }\href {https://doi.org/10.1109/TIT.1978.1055873} {\bibfield  {journal} {\bibinfo  {journal} {IEEE Transactions on Information Theory}\ }\textbf {\bibinfo {volume} {24}},\ \bibinfo {pages} {384} (\bibinfo {year} {1978})}\BibitemShut {NoStop}%
\bibitem [{\citenamefont {Vardy}(1997)}]{vardy1997intractability}%
  \BibitemOpen
  \bibfield  {author} {\bibinfo {author} {\bibfnamefont {A.}~\bibnamefont {Vardy}},\ }\bibfield  {title} {\bibinfo {title} {The intractability of computing the minimum distance of a code},\ }\href {https://doi.org/10.1109/18.641542} {\bibfield  {journal} {\bibinfo  {journal} {IEEE Transactions on Information Theory}\ }\textbf {\bibinfo {volume} {43}},\ \bibinfo {pages} {1757} (\bibinfo {year} {1997})}\BibitemShut {NoStop}%
\bibitem [{\citenamefont {Hsieh}\ and\ \citenamefont {Le~Gall}(2011)}]{hsieh2011NPhardquantumdecoding}%
  \BibitemOpen
  \bibfield  {author} {\bibinfo {author} {\bibfnamefont {M.-H.}\ \bibnamefont {Hsieh}}\ and\ \bibinfo {author} {\bibfnamefont {F.}~\bibnamefont {Le~Gall}},\ }\bibfield  {title} {\bibinfo {title} {{NP-hardness of decoding quantum error-correction codes}},\ }\href {https://doi.org/10.1103/PhysRevA.83.052331} {\bibfield  {journal} {\bibinfo  {journal} {Physical Review A}\ }\textbf {\bibinfo {volume} {83}},\ \bibinfo {pages} {052331} (\bibinfo {year} {2011})}\BibitemShut {NoStop}%
\bibitem [{\citenamefont {Kuo}\ and\ \citenamefont {Lu}(2012)}]{kuo2012hardnessquantumdecoding}%
  \BibitemOpen
  \bibfield  {author} {\bibinfo {author} {\bibfnamefont {K.-Y.}\ \bibnamefont {Kuo}}\ and\ \bibinfo {author} {\bibfnamefont {C.-C.}\ \bibnamefont {Lu}},\ }\bibfield  {title} {\bibinfo {title} {On the hardness of decoding quantum stabilizer codes under the depolarizing channel},\ }in\ \href {https://ieeexplore.ieee.org/document/6400919} {\emph {\bibinfo {booktitle} {{2012 International Symposium on Information Theory and its Applications}}}}\ (\bibinfo {year} {2012})\ pp.\ \bibinfo {pages} {208--211}\BibitemShut {NoStop}%
\bibitem [{\citenamefont {Iyer}\ and\ \citenamefont {Poulin}(2015)}]{iyer2015hardnessquantumdecoding}%
  \BibitemOpen
  \bibfield  {author} {\bibinfo {author} {\bibfnamefont {P.}~\bibnamefont {Iyer}}\ and\ \bibinfo {author} {\bibfnamefont {D.}~\bibnamefont {Poulin}},\ }\bibfield  {title} {\bibinfo {title} {{Hardness of Decoding Quantum Stabilizer Codes}},\ }\href {https://doi.org/10.1109/TIT.2015.2422294} {\bibfield  {journal} {\bibinfo  {journal} {IEEE Transactions on Information Theory}\ }\textbf {\bibinfo {volume} {61}},\ \bibinfo {pages} {5209} (\bibinfo {year} {2015})}\BibitemShut {NoStop}%
\bibitem [{\citenamefont {McEliece}(1978)}]{mceliece1978public}%
  \BibitemOpen
  \bibfield  {author} {\bibinfo {author} {\bibfnamefont {R.~J.}\ \bibnamefont {McEliece}},\ }\href {https://ipnpr.jpl.nasa.gov/progress_report/42-44/44N.PDF} {\emph {\bibinfo {title} {{A Public-Key Cryptosystem Based on Algebraic Coding Theory}}}},\ \bibinfo {type} {Tech. Rep.}\ \bibinfo {number} {DSN Progress Report 42--44}\ (\bibinfo  {institution} {Jet Propulsion Laboratory, California Institute of Technology},\ \bibinfo {year} {1978})\BibitemShut {NoStop}%
\bibitem [{\citenamefont {Alekhnovich}(2003)}]{alekhnovich2003more}%
  \BibitemOpen
  \bibfield  {author} {\bibinfo {author} {\bibfnamefont {M.}~\bibnamefont {Alekhnovich}},\ }\bibfield  {title} {\bibinfo {title} {More on average case vs approximation complexity},\ }in\ \href {https://doi.org/10.1109/SFCS.2003.1238204} {\emph {\bibinfo {booktitle} {44th Annual IEEE Symposium on Foundations of Computer Science, 2003. Proceedings.}}}\ (\bibinfo {year} {2003})\ pp.\ \bibinfo {pages} {298--307}\BibitemShut {NoStop}%
\bibitem [{\citenamefont {Cleve}\ \emph {et~al.}(2016)\citenamefont {Cleve}, \citenamefont {Leung}, \citenamefont {Liu},\ and\ \citenamefont {Wang}}]{cleve2016near}%
  \BibitemOpen
  \bibfield  {author} {\bibinfo {author} {\bibfnamefont {R.}~\bibnamefont {Cleve}}, \bibinfo {author} {\bibfnamefont {D.}~\bibnamefont {Leung}}, \bibinfo {author} {\bibfnamefont {L.}~\bibnamefont {Liu}},\ and\ \bibinfo {author} {\bibfnamefont {C.}~\bibnamefont {Wang}},\ }\bibfield  {title} {\bibinfo {title} {Near-linear constructions of exact unitary 2-designs},\ }\href {https://doi.org/10.5555/3179473.3179474} {\bibfield  {journal} {\bibinfo  {journal} {Quantum Information and Computation}\ }\textbf {\bibinfo {volume} {16}},\ \bibinfo {pages} {721–756} (\bibinfo {year} {2016})}\BibitemShut {NoStop}%
\bibitem [{Note1()}]{Note1}%
  \BibitemOpen
  \bibinfo {note} {The definition of single-pair fidelity comes from the notion of logical error rate in quantum error correction, where it makes sense to define a logical error rate per logical qubit of an $[[n,k,d]]$ quantum error-correction code. Specifically, if $p_L$ is the overall logical error rate of the $[[n,k,d]]$ code, then the per-qubit logical error rate is formulated as $1-(1-p_L)^{1/k}$ in both theoretical and experimental works~\cite {roffe2023biastailoredquantum,berent2023softwaretools,wang2025demonstration}.}\BibitemShut {Stop}%
\bibitem [{\citenamefont {Rains}(1999)}]{rains1999bound}%
  \BibitemOpen
  \bibfield  {author} {\bibinfo {author} {\bibfnamefont {E.~M.}\ \bibnamefont {Rains}},\ }\bibfield  {title} {\bibinfo {title} {Bound on distillable entanglement},\ }\href {https://doi.org/10.1103/PhysRevA.60.179} {\bibfield  {journal} {\bibinfo  {journal} {Physical Review A}\ }\textbf {\bibinfo {volume} {60}},\ \bibinfo {pages} {179} (\bibinfo {year} {1999})}\BibitemShut {NoStop}%
\bibitem [{\citenamefont {{Rains}}(2001)}]{Rains01}%
  \BibitemOpen
  \bibfield  {author} {\bibinfo {author} {\bibfnamefont {E.~M.}\ \bibnamefont {{Rains}}},\ }\bibfield  {title} {\bibinfo {title} {{A semidefinite program for distillable entanglement}},\ }\href {https://doi.org/10.1109/18.959270} {\bibfield  {journal} {\bibinfo  {journal} {IEEE Transactions on Information Theory}\ }\textbf {\bibinfo {volume} {47}},\ \bibinfo {pages} {2921} (\bibinfo {year} {2001})}\BibitemShut {NoStop}%
\bibitem [{\citenamefont {Bombin}\ and\ \citenamefont {Martin-Delgado}(2006)}]{bombin2006topological}%
  \BibitemOpen
  \bibfield  {author} {\bibinfo {author} {\bibfnamefont {H.}~\bibnamefont {Bombin}}\ and\ \bibinfo {author} {\bibfnamefont {M.~A.}\ \bibnamefont {Martin-Delgado}},\ }\bibfield  {title} {\bibinfo {title} {{Topological Quantum Distillation}},\ }\href {https://doi.org/10.1103/PhysRevLett.97.180501} {\bibfield  {journal} {\bibinfo  {journal} {Physical Review Letters}\ }\textbf {\bibinfo {volume} {97}},\ \bibinfo {pages} {180501} (\bibinfo {year} {2006})}\BibitemShut {NoStop}%
\bibitem [{Note2()}]{Note2}%
  \BibitemOpen
  \bibinfo {note} {This encoding process typically proceeds through state injection, where a physical state is incrementally grown into a logical state through repeated application of Clifford operations. While this injection process is not fault-tolerant and introduces additional errors at a rate $\varepsilon _I$, these errors combine with the channel noise $\varepsilon $ to produce a new effective channel with error rate $\sim \varepsilon + \varepsilon _I$. Importantly, the injection error rate $\varepsilon _I$ typically scales with the local gate noise~\cite {zhang2024facilitating,li2015magic,fowler2009high,raussendorf2006fault,horsman2012surface}, which is generally much smaller than the channel noise $\varepsilon $ (typical numbers are $10^{-3} \leq \varepsilon _I \leq 10^{-2}$, while $\varepsilon \sim 10^{-1}$).}\BibitemShut {Stop}%
\bibitem [{Note3()}]{Note3}%
  \BibitemOpen
  \bibinfo {note} {By Pauli twirling --- applying the same random $n$-qubit Pauli on both halves, any channel is equivalent to a mixture of Pauli errors $P_\ell $ with probabilities $q_\ell $~\cite {BDSW96}.}\BibitemShut {Stop}%
\bibitem [{Note4()}]{Note4}%
  \BibitemOpen
  \bibinfo {note} {This lookup-based correction is likely optimal for random Clifford codes, as decoding random codes is widely believed to be computationally intractable~\cite {berlekamp1978intractability,mceliece1978public,alekhnovich2003more}.}\BibitemShut {Stop}%
\bibitem [{\citenamefont {Van~Meter}\ \emph {et~al.}(2009)\citenamefont {Van~Meter}, \citenamefont {Ladd}, \citenamefont {Munro},\ and\ \citenamefont {Nemoto}}]{VLMN09}%
  \BibitemOpen
  \bibfield  {author} {\bibinfo {author} {\bibfnamefont {R.}~\bibnamefont {Van~Meter}}, \bibinfo {author} {\bibfnamefont {T.}~\bibnamefont {Ladd}}, \bibinfo {author} {\bibfnamefont {W.}~\bibnamefont {Munro}},\ and\ \bibinfo {author} {\bibfnamefont {K.}~\bibnamefont {Nemoto}},\ }\bibfield  {title} {\bibinfo {title} {System design for a long-line quantum repeater},\ }\href {https://doi.org/10.1109/tnet.2008.927260} {\bibfield  {journal} {\bibinfo  {journal} {IEEE/ACM Transactions on Networking}\ }\textbf {\bibinfo {volume} {17}},\ \bibinfo {pages} {1002–1013} (\bibinfo {year} {2009})}\BibitemShut {NoStop}%
\bibitem [{\citenamefont {Aparicio}\ \emph {et~al.}(2011)\citenamefont {Aparicio}, \citenamefont {Van~Meter},\ and\ \citenamefont {Esaki}}]{AME11}%
  \BibitemOpen
  \bibfield  {author} {\bibinfo {author} {\bibfnamefont {L.}~\bibnamefont {Aparicio}}, \bibinfo {author} {\bibfnamefont {R.}~\bibnamefont {Van~Meter}},\ and\ \bibinfo {author} {\bibfnamefont {H.}~\bibnamefont {Esaki}},\ }\bibfield  {title} {\bibinfo {title} {{Protocol Design for Quantum Repeater Networks}},\ }in\ \href {https://doi.org/10.1145/2089016.2089029} {\emph {\bibinfo {booktitle} {Proceedings of the 7th Asian Internet Engineering Conference}}},\ \bibinfo {series and number} {AINTEC '11}\ (\bibinfo  {publisher} {Association for Computing Machinery},\ \bibinfo {address} {New York, NY, USA},\ \bibinfo {year} {2011})\ pp.\ \bibinfo {pages} {73--80}\BibitemShut {NoStop}%
\bibitem [{\citenamefont {Awschalom}\ \emph {et~al.}(2021)\citenamefont {Awschalom}, \citenamefont {Berggren}, \citenamefont {Bernien}, \citenamefont {Bhave}, \citenamefont {Carr}, \citenamefont {Davids}, \citenamefont {Economou}, \citenamefont {Englund}, \citenamefont {Faraon}, \citenamefont {Fejer}, \citenamefont {Guha}, \citenamefont {Gustafsson}, \citenamefont {Hu}, \citenamefont {Jiang}, \citenamefont {Kim}, \citenamefont {Korzh}, \citenamefont {Kumar}, \citenamefont {Kwiat}, \citenamefont {Lon\ifmmode~\check{c}\else \v{c}\fi{}ar}, \citenamefont {Lukin}, \citenamefont {Miller}, \citenamefont {Monroe}, \citenamefont {Nam}, \citenamefont {Narang}, \citenamefont {Orcutt}, \citenamefont {Raymer}, \citenamefont {Safavi-Naeini}, \citenamefont {Spiropulu}, \citenamefont {Srinivasan}, \citenamefont {Sun}, \citenamefont {Vu\ifmmode \check{c}\else \v{c}\fi{}kovi\ifmmode~\acute{c}\else \'{c}\fi{}}, \citenamefont {Waks}, \citenamefont {Walsworth}, \citenamefont {Weiner},\ and\ \citenamefont
  {Zhang}}]{awschalom2021interconnects}%
  \BibitemOpen
  \bibfield  {author} {\bibinfo {author} {\bibfnamefont {D.}~\bibnamefont {Awschalom}}, \bibinfo {author} {\bibfnamefont {K.~K.}\ \bibnamefont {Berggren}}, \bibinfo {author} {\bibfnamefont {H.}~\bibnamefont {Bernien}}, \bibinfo {author} {\bibfnamefont {S.}~\bibnamefont {Bhave}}, \bibinfo {author} {\bibfnamefont {L.~D.}\ \bibnamefont {Carr}}, \bibinfo {author} {\bibfnamefont {P.}~\bibnamefont {Davids}}, \bibinfo {author} {\bibfnamefont {S.~E.}\ \bibnamefont {Economou}}, \bibinfo {author} {\bibfnamefont {D.}~\bibnamefont {Englund}}, \bibinfo {author} {\bibfnamefont {A.}~\bibnamefont {Faraon}}, \bibinfo {author} {\bibfnamefont {M.}~\bibnamefont {Fejer}}, \bibinfo {author} {\bibfnamefont {S.}~\bibnamefont {Guha}}, \bibinfo {author} {\bibfnamefont {M.~V.}\ \bibnamefont {Gustafsson}}, \bibinfo {author} {\bibfnamefont {E.}~\bibnamefont {Hu}}, \bibinfo {author} {\bibfnamefont {L.}~\bibnamefont {Jiang}}, \bibinfo {author} {\bibfnamefont {J.}~\bibnamefont {Kim}}, \bibinfo {author} {\bibfnamefont {B.}~\bibnamefont
  {Korzh}}, \bibinfo {author} {\bibfnamefont {P.}~\bibnamefont {Kumar}}, \bibinfo {author} {\bibfnamefont {P.~G.}\ \bibnamefont {Kwiat}}, \bibinfo {author} {\bibfnamefont {M.}~\bibnamefont {Lon\ifmmode~\check{c}\else \v{c}\fi{}ar}}, \bibinfo {author} {\bibfnamefont {M.~D.}\ \bibnamefont {Lukin}}, \bibinfo {author} {\bibfnamefont {D.~A.}\ \bibnamefont {Miller}}, \bibinfo {author} {\bibfnamefont {C.}~\bibnamefont {Monroe}}, \bibinfo {author} {\bibfnamefont {S.~W.}\ \bibnamefont {Nam}}, \bibinfo {author} {\bibfnamefont {P.}~\bibnamefont {Narang}}, \bibinfo {author} {\bibfnamefont {J.~S.}\ \bibnamefont {Orcutt}}, \bibinfo {author} {\bibfnamefont {M.~G.}\ \bibnamefont {Raymer}}, \bibinfo {author} {\bibfnamefont {A.~H.}\ \bibnamefont {Safavi-Naeini}}, \bibinfo {author} {\bibfnamefont {M.}~\bibnamefont {Spiropulu}}, \bibinfo {author} {\bibfnamefont {K.}~\bibnamefont {Srinivasan}}, \bibinfo {author} {\bibfnamefont {S.}~\bibnamefont {Sun}}, \bibinfo {author} {\bibfnamefont {J.}~\bibnamefont {Vu\ifmmode \check{c}\else
  \v{c}\fi{}kovi\ifmmode~\acute{c}\else \'{c}\fi{}}}, \bibinfo {author} {\bibfnamefont {E.}~\bibnamefont {Waks}}, \bibinfo {author} {\bibfnamefont {R.}~\bibnamefont {Walsworth}}, \bibinfo {author} {\bibfnamefont {A.~M.}\ \bibnamefont {Weiner}},\ and\ \bibinfo {author} {\bibfnamefont {Z.}~\bibnamefont {Zhang}},\ }\bibfield  {title} {\bibinfo {title} {{Development of Quantum Interconnects (QuICs) for Next-Generation Information Technologies}},\ }\href {https://doi.org/10.1103/PRXQuantum.2.017002} {\bibfield  {journal} {\bibinfo  {journal} {PRX Quantum}\ }\textbf {\bibinfo {volume} {2}},\ \bibinfo {pages} {017002} (\bibinfo {year} {2021})}\BibitemShut {NoStop}%
\bibitem [{\citenamefont {Roberts}\ and\ \citenamefont {Yoshida}(2017)}]{roberts2017chaos}%
  \BibitemOpen
  \bibfield  {author} {\bibinfo {author} {\bibfnamefont {D.~A.}\ \bibnamefont {Roberts}}\ and\ \bibinfo {author} {\bibfnamefont {B.}~\bibnamefont {Yoshida}},\ }\bibfield  {title} {\bibinfo {title} {Chaos and complexity by design},\ }\href {https://doi.org/10.1007/JHEP04(2017)121} {\bibfield  {journal} {\bibinfo  {journal} {Journal of High Energy Physics}\ }\textbf {\bibinfo {volume} {2017}},\ \bibinfo {pages} {121} (\bibinfo {year} {2017})}\BibitemShut {NoStop}%
\bibitem [{\citenamefont {Cotler}\ \emph {et~al.}(2017)\citenamefont {Cotler}, \citenamefont {Hunter-Jones}, \citenamefont {Liu},\ and\ \citenamefont {Yoshida}}]{cotler2017chaos}%
  \BibitemOpen
  \bibfield  {author} {\bibinfo {author} {\bibfnamefont {J.}~\bibnamefont {Cotler}}, \bibinfo {author} {\bibfnamefont {N.}~\bibnamefont {Hunter-Jones}}, \bibinfo {author} {\bibfnamefont {J.}~\bibnamefont {Liu}},\ and\ \bibinfo {author} {\bibfnamefont {B.}~\bibnamefont {Yoshida}},\ }\bibfield  {title} {\bibinfo {title} {Chaos, complexity, and random matrices},\ }\href {https://doi.org/10.1007/JHEP11(2017)048} {\bibfield  {journal} {\bibinfo  {journal} {Journal of High Energy Physics}\ }\textbf {\bibinfo {volume} {2017}},\ \bibinfo {pages} {48} (\bibinfo {year} {2017})}\BibitemShut {NoStop}%
\bibitem [{\citenamefont {Hosur}\ \emph {et~al.}(2016)\citenamefont {Hosur}, \citenamefont {Qi}, \citenamefont {Roberts},\ and\ \citenamefont {Yoshida}}]{hosur2016chaos}%
  \BibitemOpen
  \bibfield  {author} {\bibinfo {author} {\bibfnamefont {P.}~\bibnamefont {Hosur}}, \bibinfo {author} {\bibfnamefont {X.-L.}\ \bibnamefont {Qi}}, \bibinfo {author} {\bibfnamefont {D.~A.}\ \bibnamefont {Roberts}},\ and\ \bibinfo {author} {\bibfnamefont {B.}~\bibnamefont {Yoshida}},\ }\bibfield  {title} {\bibinfo {title} {Chaos in quantum channels},\ }\href {https://doi.org/10.1007/JHEP02(2016)004} {\bibfield  {journal} {\bibinfo  {journal} {Journal of High Energy Physics}\ }\textbf {\bibinfo {volume} {2016}},\ \bibinfo {pages} {4} (\bibinfo {year} {2016})}\BibitemShut {NoStop}%
\bibitem [{\citenamefont {Landsman}\ \emph {et~al.}(2019)\citenamefont {Landsman}, \citenamefont {Figgatt}, \citenamefont {Schuster}, \citenamefont {Linke}, \citenamefont {Yoshida}, \citenamefont {Yao},\ and\ \citenamefont {Monroe}}]{landsman2019verified}%
  \BibitemOpen
  \bibfield  {author} {\bibinfo {author} {\bibfnamefont {K.~A.}\ \bibnamefont {Landsman}}, \bibinfo {author} {\bibfnamefont {C.}~\bibnamefont {Figgatt}}, \bibinfo {author} {\bibfnamefont {T.}~\bibnamefont {Schuster}}, \bibinfo {author} {\bibfnamefont {N.~M.}\ \bibnamefont {Linke}}, \bibinfo {author} {\bibfnamefont {B.}~\bibnamefont {Yoshida}}, \bibinfo {author} {\bibfnamefont {N.~Y.}\ \bibnamefont {Yao}},\ and\ \bibinfo {author} {\bibfnamefont {C.}~\bibnamefont {Monroe}},\ }\bibfield  {title} {\bibinfo {title} {Verified quantum information scrambling},\ }\href {https://doi.org/10.1038/s41586-019-0952-6} {\bibfield  {journal} {\bibinfo  {journal} {Nature}\ }\textbf {\bibinfo {volume} {567}},\ \bibinfo {pages} {61} (\bibinfo {year} {2019})}\BibitemShut {NoStop}%
\bibitem [{\citenamefont {Bohnet}\ \emph {et~al.}(2016)\citenamefont {Bohnet}, \citenamefont {Sawyer}, \citenamefont {Britton}, \citenamefont {Wall}, \citenamefont {Rey}, \citenamefont {Foss-Feig},\ and\ \citenamefont {Bollinger}}]{bohnet2016quantum}%
  \BibitemOpen
  \bibfield  {author} {\bibinfo {author} {\bibfnamefont {J.~G.}\ \bibnamefont {Bohnet}}, \bibinfo {author} {\bibfnamefont {B.~C.}\ \bibnamefont {Sawyer}}, \bibinfo {author} {\bibfnamefont {J.~W.}\ \bibnamefont {Britton}}, \bibinfo {author} {\bibfnamefont {M.~L.}\ \bibnamefont {Wall}}, \bibinfo {author} {\bibfnamefont {A.~M.}\ \bibnamefont {Rey}}, \bibinfo {author} {\bibfnamefont {M.}~\bibnamefont {Foss-Feig}},\ and\ \bibinfo {author} {\bibfnamefont {J.~J.}\ \bibnamefont {Bollinger}},\ }\bibfield  {title} {\bibinfo {title} {Quantum spin dynamics and entanglement generation with hundreds of trapped ions},\ }\href {https://doi.org/10.1126/science.aad9958} {\bibfield  {journal} {\bibinfo  {journal} {Science}\ }\textbf {\bibinfo {volume} {352}},\ \bibinfo {pages} {1297} (\bibinfo {year} {2016})}\BibitemShut {NoStop}%
\bibitem [{\citenamefont {Bernien}\ \emph {et~al.}(2017)\citenamefont {Bernien}, \citenamefont {Schwartz}, \citenamefont {Keesling}, \citenamefont {Levine}, \citenamefont {Omran}, \citenamefont {Pichler}, \citenamefont {Choi}, \citenamefont {Zibrov}, \citenamefont {Endres}, \citenamefont {Greiner}, \citenamefont {Vuletić},\ and\ \citenamefont {Lukin}}]{bernien2017probing}%
  \BibitemOpen
  \bibfield  {author} {\bibinfo {author} {\bibfnamefont {H.}~\bibnamefont {Bernien}}, \bibinfo {author} {\bibfnamefont {S.}~\bibnamefont {Schwartz}}, \bibinfo {author} {\bibfnamefont {A.}~\bibnamefont {Keesling}}, \bibinfo {author} {\bibfnamefont {H.}~\bibnamefont {Levine}}, \bibinfo {author} {\bibfnamefont {A.}~\bibnamefont {Omran}}, \bibinfo {author} {\bibfnamefont {H.}~\bibnamefont {Pichler}}, \bibinfo {author} {\bibfnamefont {S.}~\bibnamefont {Choi}}, \bibinfo {author} {\bibfnamefont {A.~S.}\ \bibnamefont {Zibrov}}, \bibinfo {author} {\bibfnamefont {M.}~\bibnamefont {Endres}}, \bibinfo {author} {\bibfnamefont {M.}~\bibnamefont {Greiner}}, \bibinfo {author} {\bibfnamefont {V.}~\bibnamefont {Vuletić}},\ and\ \bibinfo {author} {\bibfnamefont {M.~D.}\ \bibnamefont {Lukin}},\ }\bibfield  {title} {\bibinfo {title} {Probing many-body dynamics on a 51-atom quantum simulator},\ }\href {https://doi.org/10.1038/nature24622} {\bibfield  {journal} {\bibinfo  {journal} {Nature}\ }\textbf {\bibinfo {volume} {551}},\
  \bibinfo {pages} {579} (\bibinfo {year} {2017})}\BibitemShut {NoStop}%
\bibitem [{\citenamefont {{Andi Gu}}(2025)}]{data}%
  \BibitemOpen
  \bibfield  {author} {\bibinfo {author} {\bibnamefont {{Andi Gu}}},\ }\href {https://github.com/andigu/bilocal-distillation} {\bibinfo {title} {{Code repository for generating figures in ``Constant Overhead Entanglement Distillation via Scrambling''}}} (\bibinfo {year} {{2025}})\BibitemShut {NoStop}%
\bibitem [{\citenamefont {Roffe}\ \emph {et~al.}(2023)\citenamefont {Roffe}, \citenamefont {Cohen}, \citenamefont {Quintavalle}, \citenamefont {Chandra},\ and\ \citenamefont {Campbell}}]{roffe2023biastailoredquantum}%
  \BibitemOpen
  \bibfield  {author} {\bibinfo {author} {\bibfnamefont {J.}~\bibnamefont {Roffe}}, \bibinfo {author} {\bibfnamefont {L.~Z.}\ \bibnamefont {Cohen}}, \bibinfo {author} {\bibfnamefont {A.~O.}\ \bibnamefont {Quintavalle}}, \bibinfo {author} {\bibfnamefont {D.}~\bibnamefont {Chandra}},\ and\ \bibinfo {author} {\bibfnamefont {E.~T.}\ \bibnamefont {Campbell}},\ }\bibfield  {title} {\bibinfo {title} {Bias-tailored quantum {LDPC} codes},\ }\href {https://doi.org/10.22331/q-2023-05-15-1005} {\bibfield  {journal} {\bibinfo  {journal} {{Quantum}}\ }\textbf {\bibinfo {volume} {7}},\ \bibinfo {pages} {1005} (\bibinfo {year} {2023})}\BibitemShut {NoStop}%
\bibitem [{\citenamefont {Berent}\ \emph {et~al.}(2023)\citenamefont {Berent}, \citenamefont {Burgholzer},\ and\ \citenamefont {Wille}}]{berent2023softwaretools}%
  \BibitemOpen
  \bibfield  {author} {\bibinfo {author} {\bibfnamefont {L.}~\bibnamefont {Berent}}, \bibinfo {author} {\bibfnamefont {L.}~\bibnamefont {Burgholzer}},\ and\ \bibinfo {author} {\bibfnamefont {R.}~\bibnamefont {Wille}},\ }\bibfield  {title} {\bibinfo {title} {{Software Tools for Decoding Quantum Low-Density Parity-Check Codes}},\ }in\ \href {https://doi.org/10.1145/3566097.3567934} {\emph {\bibinfo {booktitle} {{Proceedings of the 28th Asia and South Pacific Design Automation Conference}}}},\ \bibinfo {series and number} {{ASPDAC '23}}\ (\bibinfo  {publisher} {Association for Computing Machinery},\ \bibinfo {address} {New York, NY, USA},\ \bibinfo {year} {2023})\ pp.\ \bibinfo {pages} {709--714}\BibitemShut {NoStop}%
\bibitem [{\citenamefont {Wang}\ \emph {et~al.}(2025)\citenamefont {Wang}, \citenamefont {Lu}, \citenamefont {Zhang}, \citenamefont {Liu}, \citenamefont {Chen}, \citenamefont {Wang}, \citenamefont {Wu}, \citenamefont {Xu}, \citenamefont {Zhu}, \citenamefont {Jin}, \citenamefont {Gao}, \citenamefont {Tan}, \citenamefont {Cui}, \citenamefont {Wang}, \citenamefont {Zou}, \citenamefont {Zhang}, \citenamefont {Li}, \citenamefont {Shen}, \citenamefont {Zhong}, \citenamefont {Bao}, \citenamefont {Zhu}, \citenamefont {Han}, \citenamefont {He}, \citenamefont {Shen}, \citenamefont {Wang}, \citenamefont {Yang}, \citenamefont {Song}, \citenamefont {Deng}, \citenamefont {Dong}, \citenamefont {Sun}, \citenamefont {Li}, \citenamefont {Ye}, \citenamefont {Jiang}, \citenamefont {Ma}, \citenamefont {Shen}, \citenamefont {Zhang}, \citenamefont {Li}, \citenamefont {Guo}, \citenamefont {Wang}, \citenamefont {Song}, \citenamefont {Wang},\ and\ \citenamefont {Deng}}]{wang2025demonstration}%
  \BibitemOpen
  \bibfield  {author} {\bibinfo {author} {\bibfnamefont {K.}~\bibnamefont {Wang}}, \bibinfo {author} {\bibfnamefont {Z.}~\bibnamefont {Lu}}, \bibinfo {author} {\bibfnamefont {C.}~\bibnamefont {Zhang}}, \bibinfo {author} {\bibfnamefont {G.}~\bibnamefont {Liu}}, \bibinfo {author} {\bibfnamefont {J.}~\bibnamefont {Chen}}, \bibinfo {author} {\bibfnamefont {Y.}~\bibnamefont {Wang}}, \bibinfo {author} {\bibfnamefont {Y.}~\bibnamefont {Wu}}, \bibinfo {author} {\bibfnamefont {S.}~\bibnamefont {Xu}}, \bibinfo {author} {\bibfnamefont {X.}~\bibnamefont {Zhu}}, \bibinfo {author} {\bibfnamefont {F.}~\bibnamefont {Jin}}, \bibinfo {author} {\bibfnamefont {Y.}~\bibnamefont {Gao}}, \bibinfo {author} {\bibfnamefont {Z.}~\bibnamefont {Tan}}, \bibinfo {author} {\bibfnamefont {Z.}~\bibnamefont {Cui}}, \bibinfo {author} {\bibfnamefont {N.}~\bibnamefont {Wang}}, \bibinfo {author} {\bibfnamefont {Y.}~\bibnamefont {Zou}}, \bibinfo {author} {\bibfnamefont {A.}~\bibnamefont {Zhang}}, \bibinfo {author} {\bibfnamefont {T.}~\bibnamefont
  {Li}}, \bibinfo {author} {\bibfnamefont {F.}~\bibnamefont {Shen}}, \bibinfo {author} {\bibfnamefont {J.}~\bibnamefont {Zhong}}, \bibinfo {author} {\bibfnamefont {Z.}~\bibnamefont {Bao}}, \bibinfo {author} {\bibfnamefont {Z.}~\bibnamefont {Zhu}}, \bibinfo {author} {\bibfnamefont {Y.}~\bibnamefont {Han}}, \bibinfo {author} {\bibfnamefont {Y.}~\bibnamefont {He}}, \bibinfo {author} {\bibfnamefont {J.}~\bibnamefont {Shen}}, \bibinfo {author} {\bibfnamefont {H.}~\bibnamefont {Wang}}, \bibinfo {author} {\bibfnamefont {J.-N.}\ \bibnamefont {Yang}}, \bibinfo {author} {\bibfnamefont {Z.}~\bibnamefont {Song}}, \bibinfo {author} {\bibfnamefont {J.}~\bibnamefont {Deng}}, \bibinfo {author} {\bibfnamefont {H.}~\bibnamefont {Dong}}, \bibinfo {author} {\bibfnamefont {Z.-Z.}\ \bibnamefont {Sun}}, \bibinfo {author} {\bibfnamefont {W.}~\bibnamefont {Li}}, \bibinfo {author} {\bibfnamefont {Q.}~\bibnamefont {Ye}}, \bibinfo {author} {\bibfnamefont {S.}~\bibnamefont {Jiang}}, \bibinfo {author} {\bibfnamefont {Y.}~\bibnamefont
  {Ma}}, \bibinfo {author} {\bibfnamefont {P.-X.}\ \bibnamefont {Shen}}, \bibinfo {author} {\bibfnamefont {P.}~\bibnamefont {Zhang}}, \bibinfo {author} {\bibfnamefont {H.}~\bibnamefont {Li}}, \bibinfo {author} {\bibfnamefont {Q.}~\bibnamefont {Guo}}, \bibinfo {author} {\bibfnamefont {Z.}~\bibnamefont {Wang}}, \bibinfo {author} {\bibfnamefont {C.}~\bibnamefont {Song}}, \bibinfo {author} {\bibfnamefont {H.}~\bibnamefont {Wang}},\ and\ \bibinfo {author} {\bibfnamefont {D.-L.}\ \bibnamefont {Deng}},\ }\href@noop {} {\bibinfo {title} {{Demonstration of low-overhead quantum error correction codes}}} (\bibinfo {year} {2025}),\ \Eprint {https://arxiv.org/abs/2505.09684} {arXiv:2505.09684 [quant-ph]} \BibitemShut {NoStop}%
\bibitem [{\citenamefont {Zhang}\ \emph {et~al.}(2024)\citenamefont {Zhang}, \citenamefont {Wu},\ and\ \citenamefont {Guo}}]{zhang2024facilitating}%
  \BibitemOpen
  \bibfield  {author} {\bibinfo {author} {\bibfnamefont {J.}~\bibnamefont {Zhang}}, \bibinfo {author} {\bibfnamefont {Y.-C.}\ \bibnamefont {Wu}},\ and\ \bibinfo {author} {\bibfnamefont {G.-P.}\ \bibnamefont {Guo}},\ }\bibfield  {title} {\bibinfo {title} {Facilitating practical fault-tolerant quantum computing based on color codes},\ }\href {https://doi.org/10.1103/PhysRevResearch.6.033086} {\bibfield  {journal} {\bibinfo  {journal} {Physical Review Research}\ }\textbf {\bibinfo {volume} {6}},\ \bibinfo {pages} {033086} (\bibinfo {year} {2024})}\BibitemShut {NoStop}%
\bibitem [{\citenamefont {Li}(2015)}]{li2015magic}%
  \BibitemOpen
  \bibfield  {author} {\bibinfo {author} {\bibfnamefont {Y.}~\bibnamefont {Li}},\ }\bibfield  {title} {\bibinfo {title} {A magic state’s fidelity can be superior to the operations that created it},\ }\href {https://doi.org/10.1088/1367-2630/17/2/023037} {\bibfield  {journal} {\bibinfo  {journal} {New Journal of Physics}\ }\textbf {\bibinfo {volume} {17}},\ \bibinfo {pages} {023037} (\bibinfo {year} {2015})}\BibitemShut {NoStop}%
\bibitem [{\citenamefont {Fowler}\ \emph {et~al.}(2009)\citenamefont {Fowler}, \citenamefont {Stephens},\ and\ \citenamefont {Groszkowski}}]{fowler2009high}%
  \BibitemOpen
  \bibfield  {author} {\bibinfo {author} {\bibfnamefont {A.~G.}\ \bibnamefont {Fowler}}, \bibinfo {author} {\bibfnamefont {A.~M.}\ \bibnamefont {Stephens}},\ and\ \bibinfo {author} {\bibfnamefont {P.}~\bibnamefont {Groszkowski}},\ }\bibfield  {title} {\bibinfo {title} {High-threshold universal quantum computation on the surface code},\ }\href {https://doi.org/10.1103/PhysRevA.80.052312} {\bibfield  {journal} {\bibinfo  {journal} {Physical Review A}\ }\textbf {\bibinfo {volume} {80}},\ \bibinfo {pages} {052312} (\bibinfo {year} {2009})}\BibitemShut {NoStop}%
\bibitem [{\citenamefont {Raussendorf}\ \emph {et~al.}(2006)\citenamefont {Raussendorf}, \citenamefont {Harrington},\ and\ \citenamefont {Goyal}}]{raussendorf2006fault}%
  \BibitemOpen
  \bibfield  {author} {\bibinfo {author} {\bibfnamefont {R.}~\bibnamefont {Raussendorf}}, \bibinfo {author} {\bibfnamefont {J.}~\bibnamefont {Harrington}},\ and\ \bibinfo {author} {\bibfnamefont {K.}~\bibnamefont {Goyal}},\ }\bibfield  {title} {\bibinfo {title} {A fault-tolerant one-way quantum computer},\ }\href {https://doi.org/https://doi.org/10.1016/j.aop.2006.01.012} {\bibfield  {journal} {\bibinfo  {journal} {Annals of Physics}\ }\textbf {\bibinfo {volume} {321}},\ \bibinfo {pages} {2242} (\bibinfo {year} {2006})}\BibitemShut {NoStop}%
\bibitem [{\citenamefont {Horsman}\ \emph {et~al.}(2012)\citenamefont {Horsman}, \citenamefont {Fowler}, \citenamefont {Devitt},\ and\ \citenamefont {Meter}}]{horsman2012surface}%
  \BibitemOpen
  \bibfield  {author} {\bibinfo {author} {\bibfnamefont {D.}~\bibnamefont {Horsman}}, \bibinfo {author} {\bibfnamefont {A.~G.}\ \bibnamefont {Fowler}}, \bibinfo {author} {\bibfnamefont {S.}~\bibnamefont {Devitt}},\ and\ \bibinfo {author} {\bibfnamefont {R.~V.}\ \bibnamefont {Meter}},\ }\bibfield  {title} {\bibinfo {title} {Surface code quantum computing by lattice surgery},\ }\href {https://doi.org/10.1088/1367-2630/14/12/123011} {\bibfield  {journal} {\bibinfo  {journal} {New Journal of Physics}\ }\textbf {\bibinfo {volume} {14}},\ \bibinfo {pages} {123011} (\bibinfo {year} {2012})}\BibitemShut {NoStop}%
\bibitem [{\citenamefont {Dankert}\ \emph {et~al.}(2009{\natexlab{b}})\citenamefont {Dankert}, \citenamefont {Cleve}, \citenamefont {Emerson},\ and\ \citenamefont {Livine}}]{DCEL09}%
  \BibitemOpen
  \bibfield  {author} {\bibinfo {author} {\bibfnamefont {C.}~\bibnamefont {Dankert}}, \bibinfo {author} {\bibfnamefont {R.}~\bibnamefont {Cleve}}, \bibinfo {author} {\bibfnamefont {J.}~\bibnamefont {Emerson}},\ and\ \bibinfo {author} {\bibfnamefont {E.}~\bibnamefont {Livine}},\ }\bibfield  {title} {\bibinfo {title} {{Exact and approximate unitary 2-designs and their application to fidelity estimation}},\ }\href {https://doi.org/10.1103/PhysRevA.80.012304} {\bibfield  {journal} {\bibinfo  {journal} {Physical Review A}\ }\textbf {\bibinfo {volume} {80}},\ \bibinfo {pages} {012304} (\bibinfo {year} {2009}{\natexlab{b}})}\BibitemShut {NoStop}%
\bibitem [{\citenamefont {Gross}\ \emph {et~al.}(2007)\citenamefont {Gross}, \citenamefont {Audenaert},\ and\ \citenamefont {Eisert}}]{DAE07}%
  \BibitemOpen
  \bibfield  {author} {\bibinfo {author} {\bibfnamefont {D.}~\bibnamefont {Gross}}, \bibinfo {author} {\bibfnamefont {K.}~\bibnamefont {Audenaert}},\ and\ \bibinfo {author} {\bibfnamefont {J.}~\bibnamefont {Eisert}},\ }\bibfield  {title} {\bibinfo {title} {{Evenly distributed unitaries: On the structure of unitary designs}},\ }\href {https://doi.org/10.1063/1.2716992} {\bibfield  {journal} {\bibinfo  {journal} {Journal of Mathematical Physics}\ }\textbf {\bibinfo {volume} {48}},\ \bibinfo {pages} {052104} (\bibinfo {year} {2007})}\BibitemShut {NoStop}%
\bibitem [{\citenamefont {Watrous}(2018)}]{Wat18_book}%
  \BibitemOpen
  \bibfield  {author} {\bibinfo {author} {\bibfnamefont {J.}~\bibnamefont {Watrous}},\ }\href {https://doi.org/10.1017/9781316848142} {\emph {\bibinfo {title} {{The Theory of Quantum Information}}}}\ (\bibinfo  {publisher} {Cambridge University Press},\ \bibinfo {year} {2018})\BibitemShut {NoStop}%
\bibitem [{\citenamefont {Zwerger}\ \emph {et~al.}(2014)\citenamefont {Zwerger}, \citenamefont {Briegel},\ and\ \citenamefont {D\"ur}}]{zwerger2014robustness}%
  \BibitemOpen
  \bibfield  {author} {\bibinfo {author} {\bibfnamefont {M.}~\bibnamefont {Zwerger}}, \bibinfo {author} {\bibfnamefont {H.~J.}\ \bibnamefont {Briegel}},\ and\ \bibinfo {author} {\bibfnamefont {W.}~\bibnamefont {D\"ur}},\ }\bibfield  {title} {\bibinfo {title} {Robustness of hashing protocols for entanglement purification},\ }\href {https://doi.org/10.1103/PhysRevA.90.012314} {\bibfield  {journal} {\bibinfo  {journal} {Physical Review A}\ }\textbf {\bibinfo {volume} {90}},\ \bibinfo {pages} {012314} (\bibinfo {year} {2014})}\BibitemShut {NoStop}%
\bibitem [{\citenamefont {Dupuis}\ \emph {et~al.}(2014)\citenamefont {Dupuis}, \citenamefont {Berta}, \citenamefont {Wullschleger},\ and\ \citenamefont {Renner}}]{dupuis2014decoupling}%
  \BibitemOpen
  \bibfield  {author} {\bibinfo {author} {\bibfnamefont {F.}~\bibnamefont {Dupuis}}, \bibinfo {author} {\bibfnamefont {M.}~\bibnamefont {Berta}}, \bibinfo {author} {\bibfnamefont {J.}~\bibnamefont {Wullschleger}},\ and\ \bibinfo {author} {\bibfnamefont {R.}~\bibnamefont {Renner}},\ }\bibfield  {title} {\bibinfo {title} {One-{Shot} {Decoupling}},\ }\href {https://doi.org/10.1007/s00220-014-1990-4} {\bibfield  {journal} {\bibinfo  {journal} {Communications in Mathematical Physics}\ }\textbf {\bibinfo {volume} {328}},\ \bibinfo {pages} {251} (\bibinfo {year} {2014})}\BibitemShut {NoStop}%
\bibitem [{\citenamefont {D\"ur}\ \emph {et~al.}(1999)\citenamefont {D\"ur}, \citenamefont {Briegel}, \citenamefont {Cirac},\ and\ \citenamefont {Zoller}}]{DBC99}%
  \BibitemOpen
  \bibfield  {author} {\bibinfo {author} {\bibfnamefont {W.}~\bibnamefont {D\"ur}}, \bibinfo {author} {\bibfnamefont {H.-J.}\ \bibnamefont {Briegel}}, \bibinfo {author} {\bibfnamefont {J.~I.}\ \bibnamefont {Cirac}},\ and\ \bibinfo {author} {\bibfnamefont {P.}~\bibnamefont {Zoller}},\ }\bibfield  {title} {\bibinfo {title} {{Quantum repeaters based on entanglement purification}},\ }\href {https://doi.org/10.1103/PhysRevA.59.169} {\bibfield  {journal} {\bibinfo  {journal} {Physical Review A}\ }\textbf {\bibinfo {volume} {59}},\ \bibinfo {pages} {169} (\bibinfo {year} {1999})}\BibitemShut {NoStop}%
\bibitem [{\citenamefont {D\"ur}\ and\ \citenamefont {Briegel}(2003)}]{dur2003entanglement}%
  \BibitemOpen
  \bibfield  {author} {\bibinfo {author} {\bibfnamefont {W.}~\bibnamefont {D\"ur}}\ and\ \bibinfo {author} {\bibfnamefont {H.-J.}\ \bibnamefont {Briegel}},\ }\bibfield  {title} {\bibinfo {title} {{Entanglement Purification for Quantum Computation}},\ }\href {https://doi.org/10.1103/PhysRevLett.90.067901} {\bibfield  {journal} {\bibinfo  {journal} {Physical Review Letters}\ }\textbf {\bibinfo {volume} {90}},\ \bibinfo {pages} {067901} (\bibinfo {year} {2003})}\BibitemShut {NoStop}%
\bibitem [{\citenamefont {Sangouard}\ \emph {et~al.}(2011)\citenamefont {Sangouard}, \citenamefont {Simon}, \citenamefont {de~Riedmatten},\ and\ \citenamefont {Gisin}}]{SSR+11}%
  \BibitemOpen
  \bibfield  {author} {\bibinfo {author} {\bibfnamefont {N.}~\bibnamefont {Sangouard}}, \bibinfo {author} {\bibfnamefont {C.}~\bibnamefont {Simon}}, \bibinfo {author} {\bibfnamefont {H.}~\bibnamefont {de~Riedmatten}},\ and\ \bibinfo {author} {\bibfnamefont {N.}~\bibnamefont {Gisin}},\ }\bibfield  {title} {\bibinfo {title} {{Quantum repeaters based on atomic ensembles and linear optics}},\ }\href {https://doi.org/10.1103/RevModPhys.83.33} {\bibfield  {journal} {\bibinfo  {journal} {Reviews of Modern Physics}\ }\textbf {\bibinfo {volume} {83}},\ \bibinfo {pages} {33} (\bibinfo {year} {2011})}\BibitemShut {NoStop}%
\bibitem [{\citenamefont {Azuma}\ \emph {et~al.}(2023)\citenamefont {Azuma}, \citenamefont {Economou}, \citenamefont {Elkouss}, \citenamefont {Hilaire}, \citenamefont {Jiang}, \citenamefont {Lo},\ and\ \citenamefont {Tzitrin}}]{azuma2023repeatersRMP}%
  \BibitemOpen
  \bibfield  {author} {\bibinfo {author} {\bibfnamefont {K.}~\bibnamefont {Azuma}}, \bibinfo {author} {\bibfnamefont {S.~E.}\ \bibnamefont {Economou}}, \bibinfo {author} {\bibfnamefont {D.}~\bibnamefont {Elkouss}}, \bibinfo {author} {\bibfnamefont {P.}~\bibnamefont {Hilaire}}, \bibinfo {author} {\bibfnamefont {L.}~\bibnamefont {Jiang}}, \bibinfo {author} {\bibfnamefont {H.-K.}\ \bibnamefont {Lo}},\ and\ \bibinfo {author} {\bibfnamefont {I.}~\bibnamefont {Tzitrin}},\ }\bibfield  {title} {\bibinfo {title} {{Quantum repeaters: From quantum networks to the quantum internet}},\ }\href {https://doi.org/10.1103/RevModPhys.95.045006} {\bibfield  {journal} {\bibinfo  {journal} {Reviews of Modern Physics}\ }\textbf {\bibinfo {volume} {95}},\ \bibinfo {pages} {045006} (\bibinfo {year} {2023})}\BibitemShut {NoStop}%
\bibitem [{\citenamefont {Shchukin}\ and\ \citenamefont {Van~Loock}(2022)}]{shchukin2022optimal}%
  \BibitemOpen
  \bibfield  {author} {\bibinfo {author} {\bibfnamefont {E.}~\bibnamefont {Shchukin}}\ and\ \bibinfo {author} {\bibfnamefont {P.}~\bibnamefont {Van~Loock}},\ }\bibfield  {title} {\bibinfo {title} {Optimal {Entanglement} {Swapping} in {Quantum} {Repeaters}},\ }\href {https://doi.org/10.1103/PhysRevLett.128.150502} {\bibfield  {journal} {\bibinfo  {journal} {Physical Review Letters}\ }\textbf {\bibinfo {volume} {128}},\ \bibinfo {pages} {150502} (\bibinfo {year} {2022})}\BibitemShut {NoStop}%
\bibitem [{\citenamefont {Haldar}\ \emph {et~al.}(2025)\citenamefont {Haldar}, \citenamefont {Barge}, \citenamefont {Cheng}, \citenamefont {Chang}, \citenamefont {Kirby}, \citenamefont {Khatri}, \citenamefont {Wong},\ and\ \citenamefont {Lee}}]{haldar2024reducing}%
  \BibitemOpen
  \bibfield  {author} {\bibinfo {author} {\bibfnamefont {S.}~\bibnamefont {Haldar}}, \bibinfo {author} {\bibfnamefont {P.~J.}\ \bibnamefont {Barge}}, \bibinfo {author} {\bibfnamefont {X.}~\bibnamefont {Cheng}}, \bibinfo {author} {\bibfnamefont {K.-C.}\ \bibnamefont {Chang}}, \bibinfo {author} {\bibfnamefont {B.~T.}\ \bibnamefont {Kirby}}, \bibinfo {author} {\bibfnamefont {S.}~\bibnamefont {Khatri}}, \bibinfo {author} {\bibfnamefont {C.~W.}\ \bibnamefont {Wong}},\ and\ \bibinfo {author} {\bibfnamefont {H.}~\bibnamefont {Lee}},\ }\bibfield  {title} {\bibinfo {title} {Reducing classical communication costs in multiplexed quantum repeaters using hardware-aware quasi-local policies},\ }\href {https://doi.org/10.1038/s42005-025-02029-w} {\bibfield  {journal} {\bibinfo  {journal} {Communications Physics}\ }\textbf {\bibinfo {volume} {8}},\ \bibinfo {pages} {132} (\bibinfo {year} {2025})}\BibitemShut {NoStop}%
\bibitem [{\citenamefont {Khatri}(2022)}]{khatri2022networkMDP}%
  \BibitemOpen
  \bibfield  {author} {\bibinfo {author} {\bibfnamefont {S.}~\bibnamefont {Khatri}},\ }\bibfield  {title} {\bibinfo {title} {{On the design and analysis of near-term quantum network protocols using Markov decision processes}},\ }\href {https://doi.org/10.1116/5.0084653} {\bibfield  {journal} {\bibinfo  {journal} {AVS Quantum Science}\ }\textbf {\bibinfo {volume} {4}},\ \bibinfo {pages} {030501} (\bibinfo {year} {2022})}\BibitemShut {NoStop}%
\bibitem [{\citenamefont {Gottesman}(2004)}]{gottesman1997stabilizer}%
  \BibitemOpen
  \bibfield  {author} {\bibinfo {author} {\bibfnamefont {D.~E.}\ \bibnamefont {Gottesman}},\ }\emph {\bibinfo {title} {Stabilizer {Codes} and {Quantum} {Error} {Correction}}},\ \href {https://doi.org/10.7907/RZR7-DT72} {Ph.D. thesis},\ \bibinfo  {school} {California Institute of Technology} (\bibinfo {year} {2004})\BibitemShut {NoStop}%
\bibitem [{\citenamefont {Steane}(1996)}]{steane1997error}%
  \BibitemOpen
  \bibfield  {author} {\bibinfo {author} {\bibfnamefont {A.~M.}\ \bibnamefont {Steane}},\ }\bibfield  {title} {\bibinfo {title} {{Error Correcting Codes in Quantum Theory}},\ }\href {https://doi.org/10.1103/PhysRevLett.77.793} {\bibfield  {journal} {\bibinfo  {journal} {Physical Review Letters}\ }\textbf {\bibinfo {volume} {77}},\ \bibinfo {pages} {793} (\bibinfo {year} {1996})}\BibitemShut {NoStop}%
\bibitem [{\citenamefont {Schumacher}(1995)}]{schumacher1995quantum}%
  \BibitemOpen
  \bibfield  {author} {\bibinfo {author} {\bibfnamefont {B.}~\bibnamefont {Schumacher}},\ }\bibfield  {title} {\bibinfo {title} {Quantum coding},\ }\href {https://doi.org/10.1103/PhysRevA.51.2738} {\bibfield  {journal} {\bibinfo  {journal} {Physical Review A}\ }\textbf {\bibinfo {volume} {51}},\ \bibinfo {pages} {2738} (\bibinfo {year} {1995})}\BibitemShut {NoStop}%
\bibitem [{\citenamefont {Rozp\k{e}dek}\ \emph {et~al.}(2018)\citenamefont {Rozp\k{e}dek}, \citenamefont {Schiet}, \citenamefont {Thinh}, \citenamefont {Elkouss}, \citenamefont {Doherty},\ and\ \citenamefont {Wehner}}]{RST+18}%
  \BibitemOpen
  \bibfield  {author} {\bibinfo {author} {\bibfnamefont {F.}~\bibnamefont {Rozp\k{e}dek}}, \bibinfo {author} {\bibfnamefont {T.}~\bibnamefont {Schiet}}, \bibinfo {author} {\bibfnamefont {L.~P.}\ \bibnamefont {Thinh}}, \bibinfo {author} {\bibfnamefont {D.}~\bibnamefont {Elkouss}}, \bibinfo {author} {\bibfnamefont {A.~C.}\ \bibnamefont {Doherty}},\ and\ \bibinfo {author} {\bibfnamefont {S.}~\bibnamefont {Wehner}},\ }\bibfield  {title} {\bibinfo {title} {{Optimizing practical entanglement distillation}},\ }\href {https://doi.org/10.1103/PhysRevA.97.062333} {\bibfield  {journal} {\bibinfo  {journal} {Physical Review A}\ }\textbf {\bibinfo {volume} {97}},\ \bibinfo {pages} {062333} (\bibinfo {year} {2018})}\BibitemShut {NoStop}%
\bibitem [{\citenamefont {D\"{u}r}\ and\ \citenamefont {Briegel}(2007)}]{dur2007purificationQECreview}%
  \BibitemOpen
  \bibfield  {author} {\bibinfo {author} {\bibfnamefont {W.}~\bibnamefont {D\"{u}r}}\ and\ \bibinfo {author} {\bibfnamefont {H.~J.}\ \bibnamefont {Briegel}},\ }\bibfield  {title} {\bibinfo {title} {{Entanglement purification and quantum error correction}},\ }\href {https://doi.org/10.1088/0034-4885/70/8/R03} {\bibfield  {journal} {\bibinfo  {journal} {Reports on Progress in Physics}\ }\textbf {\bibinfo {volume} {70}},\ \bibinfo {pages} {1381} (\bibinfo {year} {2007})}\BibitemShut {NoStop}%
\bibitem [{\citenamefont {D\"{u}r}\ and\ \citenamefont {Briegel}(2016)}]{dur2016distillationreview}%
  \BibitemOpen
  \bibfield  {author} {\bibinfo {author} {\bibfnamefont {W.}~\bibnamefont {D\"{u}r}}\ and\ \bibinfo {author} {\bibfnamefont {H.-J.}\ \bibnamefont {Briegel}},\ }\bibinfo {title} {{Purification and Distillation}},\ in\ \href {https://doi.org/https://doi.org/10.1002/9783527805785.ch12} {\emph {\bibinfo {booktitle} {Quantum Information}}}\ (\bibinfo  {publisher} {John Wiley \& Sons, Ltd},\ \bibinfo {year} {2016})\ Chap.~\bibinfo {chapter} {12}, pp.\ \bibinfo {pages} {231--263}\BibitemShut {NoStop}%
\bibitem [{\citenamefont {Calderbank}\ and\ \citenamefont {Shor}(1996)}]{calderbank1996good}%
  \BibitemOpen
  \bibfield  {author} {\bibinfo {author} {\bibfnamefont {A.~R.}\ \bibnamefont {Calderbank}}\ and\ \bibinfo {author} {\bibfnamefont {P.~W.}\ \bibnamefont {Shor}},\ }\bibfield  {title} {\bibinfo {title} {Good quantum error-correcting codes exist},\ }\href {https://doi.org/10.1103/physreva.54.1098} {\bibfield  {journal} {\bibinfo  {journal} {Physical Review A}\ }\textbf {\bibinfo {volume} {54}},\ \bibinfo {pages} {1098–1105} (\bibinfo {year} {1996})}\BibitemShut {NoStop}%
\bibitem [{\citenamefont {Alber}\ \emph {et~al.}(2001)\citenamefont {Alber}, \citenamefont {Delgado}, \citenamefont {Gisin},\ and\ \citenamefont {Jex}}]{alber2001efficient}%
  \BibitemOpen
  \bibfield  {author} {\bibinfo {author} {\bibfnamefont {G.}~\bibnamefont {Alber}}, \bibinfo {author} {\bibfnamefont {A.}~\bibnamefont {Delgado}}, \bibinfo {author} {\bibfnamefont {N.}~\bibnamefont {Gisin}},\ and\ \bibinfo {author} {\bibfnamefont {I.}~\bibnamefont {Jex}},\ }\bibfield  {title} {\bibinfo {title} {{Efficient bipartite quantum state purification in arbitrary dimensional Hilbert spaces}},\ }\href {https://doi.org/10.1088/0305-4470/34/42/307} {\bibfield  {journal} {\bibinfo  {journal} {Journal of Physics A: Mathematical and General}\ }\textbf {\bibinfo {volume} {34}},\ \bibinfo {pages} {8821} (\bibinfo {year} {2001})}\BibitemShut {NoStop}%
\bibitem [{\citenamefont {Matsumoto}(2003)}]{matsumoto2003conversion}%
  \BibitemOpen
  \bibfield  {author} {\bibinfo {author} {\bibfnamefont {R.}~\bibnamefont {Matsumoto}},\ }\bibfield  {title} {\bibinfo {title} {Conversion of a general quantum stabilizer code to an entanglement distillation protocol},\ }\href {https://doi.org/10.1088/0305-4470/36/29/316} {\bibfield  {journal} {\bibinfo  {journal} {Journal of Physics A: Mathematical and General}\ }\textbf {\bibinfo {volume} {36}},\ \bibinfo {pages} {8113–8127} (\bibinfo {year} {2003})}\BibitemShut {NoStop}%
\bibitem [{\citenamefont {Hostens}\ \emph {et~al.}(2005)\citenamefont {Hostens}, \citenamefont {Dehaene},\ and\ \citenamefont {De~Moor}}]{hostens2004stabilizer}%
  \BibitemOpen
  \bibfield  {author} {\bibinfo {author} {\bibfnamefont {E.}~\bibnamefont {Hostens}}, \bibinfo {author} {\bibfnamefont {J.}~\bibnamefont {Dehaene}},\ and\ \bibinfo {author} {\bibfnamefont {B.}~\bibnamefont {De~Moor}},\ }\bibfield  {title} {\bibinfo {title} {{Stabilizer states and Clifford operations for systems of arbitrary dimensions and modular arithmetic}},\ }\href {https://doi.org/10.1103/PhysRevA.71.042315} {\bibfield  {journal} {\bibinfo  {journal} {Physical Review A}\ }\textbf {\bibinfo {volume} {71}},\ \bibinfo {pages} {042315} (\bibinfo {year} {2005})}\BibitemShut {NoStop}%
\bibitem [{\citenamefont {Aschauer}(2005)}]{aschauer2005quantum}%
  \BibitemOpen
  \bibfield  {author} {\bibinfo {author} {\bibfnamefont {H.}~\bibnamefont {Aschauer}},\ }\emph {\bibinfo {title} {Quantum communication in noisy environments}},\ \href {https://doi.org/10.5282/EDOC.3588} {Ph.D. thesis},\ \bibinfo  {school} {Ludwig-Maximilians-Universität München} (\bibinfo {year} {2005})\BibitemShut {NoStop}%
\bibitem [{\citenamefont {Fujii}\ and\ \citenamefont {Yamamoto}(2009)}]{fujii2009entpurifdoubleselection}%
  \BibitemOpen
  \bibfield  {author} {\bibinfo {author} {\bibfnamefont {K.}~\bibnamefont {Fujii}}\ and\ \bibinfo {author} {\bibfnamefont {K.}~\bibnamefont {Yamamoto}},\ }\bibfield  {title} {\bibinfo {title} {{Entanglement purification with double selection}},\ }\href {https://doi.org/10.1103/PhysRevA.80.042308} {\bibfield  {journal} {\bibinfo  {journal} {Physical Review A}\ }\textbf {\bibinfo {volume} {80}},\ \bibinfo {pages} {042308} (\bibinfo {year} {2009})}\BibitemShut {NoStop}%
\bibitem [{\citenamefont {Gidney}(2023)}]{gidney2023tetrationally}%
  \BibitemOpen
  \bibfield  {author} {\bibinfo {author} {\bibfnamefont {C.}~\bibnamefont {Gidney}},\ }\href@noop {} {\bibinfo {title} {Tetrationally compact entanglement purification}} (\bibinfo {year} {2023}),\ \Eprint {https://arxiv.org/abs/2311.10971} {arXiv:2311.10971 [quant-ph]} \BibitemShut {NoStop}%
\bibitem [{\citenamefont {Brown}\ and\ \citenamefont {Fawzi}(2013)}]{brown2023short}%
  \BibitemOpen
  \bibfield  {author} {\bibinfo {author} {\bibfnamefont {W.}~\bibnamefont {Brown}}\ and\ \bibinfo {author} {\bibfnamefont {O.}~\bibnamefont {Fawzi}},\ }\bibfield  {title} {\bibinfo {title} {Short random circuits define good quantum error correcting codes},\ }in\ \href {https://doi.org/10.1109/isit.2013.6620245} {\emph {\bibinfo {booktitle} {2013 IEEE International Symposium on Information Theory}}}\ (\bibinfo  {publisher} {IEEE},\ \bibinfo {year} {2013})\BibitemShut {NoStop}%
\bibitem [{\citenamefont {Gullans}\ \emph {et~al.}(2021)\citenamefont {Gullans}, \citenamefont {Krastanov}, \citenamefont {Huse}, \citenamefont {Jiang},\ and\ \citenamefont {Flammia}}]{gullan2021quantum}%
  \BibitemOpen
  \bibfield  {author} {\bibinfo {author} {\bibfnamefont {M.~J.}\ \bibnamefont {Gullans}}, \bibinfo {author} {\bibfnamefont {S.}~\bibnamefont {Krastanov}}, \bibinfo {author} {\bibfnamefont {D.~A.}\ \bibnamefont {Huse}}, \bibinfo {author} {\bibfnamefont {L.}~\bibnamefont {Jiang}},\ and\ \bibinfo {author} {\bibfnamefont {S.~T.}\ \bibnamefont {Flammia}},\ }\bibfield  {title} {\bibinfo {title} {Quantum {Coding} with {Low}-{Depth} {Random} {Circuits}},\ }\href {https://doi.org/10.1103/PhysRevX.11.031066} {\bibfield  {journal} {\bibinfo  {journal} {Physical Review X}\ }\textbf {\bibinfo {volume} {11}},\ \bibinfo {pages} {031066} (\bibinfo {year} {2021})}\BibitemShut {NoStop}%
\bibitem [{\citenamefont {Darmawan}\ \emph {et~al.}(2024)\citenamefont {Darmawan}, \citenamefont {Nakata}, \citenamefont {Tamiya},\ and\ \citenamefont {Yamasaki}}]{darmawan2024low}%
  \BibitemOpen
  \bibfield  {author} {\bibinfo {author} {\bibfnamefont {A.~S.}\ \bibnamefont {Darmawan}}, \bibinfo {author} {\bibfnamefont {Y.}~\bibnamefont {Nakata}}, \bibinfo {author} {\bibfnamefont {S.}~\bibnamefont {Tamiya}},\ and\ \bibinfo {author} {\bibfnamefont {H.}~\bibnamefont {Yamasaki}},\ }\bibfield  {title} {\bibinfo {title} {Low-depth random {Clifford} circuits for quantum coding against {Pauli} noise using a tensor-network decoder},\ }\href {https://doi.org/10.1103/PhysRevResearch.6.023055} {\bibfield  {journal} {\bibinfo  {journal} {Physical Review Research}\ }\textbf {\bibinfo {volume} {6}},\ \bibinfo {pages} {023055} (\bibinfo {year} {2024})}\BibitemShut {NoStop}%
\bibitem [{\citenamefont {Nakata}\ \emph {et~al.}(2021)\citenamefont {Nakata}, \citenamefont {Wakakuwa},\ and\ \citenamefont {Yamasaki}}]{nakata2021one}%
  \BibitemOpen
  \bibfield  {author} {\bibinfo {author} {\bibfnamefont {Y.}~\bibnamefont {Nakata}}, \bibinfo {author} {\bibfnamefont {E.}~\bibnamefont {Wakakuwa}},\ and\ \bibinfo {author} {\bibfnamefont {H.}~\bibnamefont {Yamasaki}},\ }\bibfield  {title} {\bibinfo {title} {One-shot quantum error correction of classical and quantum information},\ }\href {https://doi.org/10.1103/PhysRevA.104.012408} {\bibfield  {journal} {\bibinfo  {journal} {Physical Review A}\ }\textbf {\bibinfo {volume} {104}},\ \bibinfo {pages} {012408} (\bibinfo {year} {2021})}\BibitemShut {NoStop}%
\bibitem [{\citenamefont {Nelson}\ \emph {et~al.}(2023)\citenamefont {Nelson}, \citenamefont {Bentsen}, \citenamefont {Flammia},\ and\ \citenamefont {Gullans}}]{nelson2023faulttolerantrandom}%
  \BibitemOpen
  \bibfield  {author} {\bibinfo {author} {\bibfnamefont {J.}~\bibnamefont {Nelson}}, \bibinfo {author} {\bibfnamefont {G.}~\bibnamefont {Bentsen}}, \bibinfo {author} {\bibfnamefont {S.~T.}\ \bibnamefont {Flammia}},\ and\ \bibinfo {author} {\bibfnamefont {M.~J.}\ \bibnamefont {Gullans}},\ }\href@noop {} {\bibinfo {title} {Fault-tolerant quantum memory using low-depth random circuit codes}} (\bibinfo {year} {2023}),\ \Eprint {https://arxiv.org/abs/2311.17985} {arXiv:2311.17985 [quant-ph]} \BibitemShut {NoStop}%
\bibitem [{\citenamefont {Calderbank}\ \emph {et~al.}(1997)\citenamefont {Calderbank}, \citenamefont {Rains}, \citenamefont {Shor},\ and\ \citenamefont {Sloane}}]{calderbank1997quantum}%
  \BibitemOpen
  \bibfield  {author} {\bibinfo {author} {\bibfnamefont {A.~R.}\ \bibnamefont {Calderbank}}, \bibinfo {author} {\bibfnamefont {E.~M.}\ \bibnamefont {Rains}}, \bibinfo {author} {\bibfnamefont {P.~W.}\ \bibnamefont {Shor}},\ and\ \bibinfo {author} {\bibfnamefont {N.~J.~A.}\ \bibnamefont {Sloane}},\ }\bibfield  {title} {\bibinfo {title} {Quantum error correction and orthogonal geometry},\ }\href {https://doi.org/10.1103/physrevlett.78.405} {\bibfield  {journal} {\bibinfo  {journal} {Physical Review Letters}\ }\textbf {\bibinfo {volume} {78}},\ \bibinfo {pages} {405–408} (\bibinfo {year} {1997})}\BibitemShut {NoStop}%
\bibitem [{\citenamefont {Gallager}(1972)}]{gallager1986information}%
  \BibitemOpen
  \bibfield  {author} {\bibinfo {author} {\bibfnamefont {R.}~\bibnamefont {Gallager}},\ }\href {https://doi.org/10.1007/978-3-7091-2945-6} {\emph {\bibinfo {title} {Information {Theory} and {Reliable} {Communication}}}}\ (\bibinfo  {publisher} {Springer Vienna},\ \bibinfo {address} {Vienna},\ \bibinfo {year} {1972})\BibitemShut {NoStop}%
\bibitem [{\citenamefont {McEliece}(2002)}]{mceliece2003theory}%
  \BibitemOpen
  \bibfield  {author} {\bibinfo {author} {\bibfnamefont {R.}~\bibnamefont {McEliece}},\ }\href {https://doi.org/10.1017/CBO9780511606267} {\emph {\bibinfo {title} {The {Theory} of {Information} and {Coding}}}},\ \bibinfo {edition} {2nd}\ ed.\ (\bibinfo  {publisher} {Cambridge University Press},\ \bibinfo {year} {2002})\BibitemShut {NoStop}%
\bibitem [{\citenamefont {Cover}\ and\ \citenamefont {Thomas}(2006)}]{cover_thomas_book}%
  \BibitemOpen
  \bibfield  {author} {\bibinfo {author} {\bibfnamefont {T.~M.}\ \bibnamefont {Cover}}\ and\ \bibinfo {author} {\bibfnamefont {J.~A.}\ \bibnamefont {Thomas}},\ }\href {https://doi.org/10.1002/047174882X} {\emph {\bibinfo {title} {{Elements of Information Theory}}}},\ \bibinfo {edition} {2nd}\ ed.\ (\bibinfo  {publisher} {John Wiley \& Sons, Inc.},\ \bibinfo {year} {2006})\BibitemShut {NoStop}%
\bibitem [{\citenamefont {Vollbrecht}\ and\ \citenamefont {Verstraete}(2005)}]{vollbrecht2005interpolation}%
  \BibitemOpen
  \bibfield  {author} {\bibinfo {author} {\bibfnamefont {K.}~\bibnamefont {Vollbrecht}}\ and\ \bibinfo {author} {\bibfnamefont {F.}~\bibnamefont {Verstraete}},\ }\bibfield  {title} {\bibinfo {title} {Interpolation of recurrence and hashing entanglement distillation protocols},\ }\href {https://doi.org/10.1103/PhysRevA.71.062325} {\bibfield  {journal} {\bibinfo  {journal} {Physical Review A}\ }\textbf {\bibinfo {volume} {71}},\ \bibinfo {pages} {062325} (\bibinfo {year} {2005})}\BibitemShut {NoStop}%
\bibitem [{\citenamefont {Shor}(1995)}]{Shor95errorcorrection}%
  \BibitemOpen
  \bibfield  {author} {\bibinfo {author} {\bibfnamefont {P.~W.}\ \bibnamefont {Shor}},\ }\bibfield  {title} {\bibinfo {title} {{Scheme for reducing decoherence in quantum computer memory}},\ }\href {https://doi.org/10.1103/PhysRevA.52.R2493} {\bibfield  {journal} {\bibinfo  {journal} {Physical Review A}\ }\textbf {\bibinfo {volume} {52}},\ \bibinfo {pages} {R2493} (\bibinfo {year} {1995})}\BibitemShut {NoStop}%
\bibitem [{\citenamefont {Schumacher}(1996)}]{schumacher1996entanglementnoisychannels}%
  \BibitemOpen
  \bibfield  {author} {\bibinfo {author} {\bibfnamefont {B.}~\bibnamefont {Schumacher}},\ }\bibfield  {title} {\bibinfo {title} {Sending entanglement through noisy quantum channels},\ }\href {https://doi.org/10.1103/PhysRevA.54.2614} {\bibfield  {journal} {\bibinfo  {journal} {Physical Review A}\ }\textbf {\bibinfo {volume} {54}},\ \bibinfo {pages} {2614} (\bibinfo {year} {1996})}\BibitemShut {NoStop}%
\bibitem [{\citenamefont {Lloyd}(1997)}]{lloyd1997capacity}%
  \BibitemOpen
  \bibfield  {author} {\bibinfo {author} {\bibfnamefont {S.}~\bibnamefont {Lloyd}},\ }\bibfield  {title} {\bibinfo {title} {Capacity of the noisy quantum channel},\ }\href {https://doi.org/10.1103/PhysRevA.55.1613} {\bibfield  {journal} {\bibinfo  {journal} {Physical Review A}\ }\textbf {\bibinfo {volume} {55}},\ \bibinfo {pages} {1613} (\bibinfo {year} {1997})}\BibitemShut {NoStop}%
\bibitem [{\citenamefont {Shor}(2003)}]{shor2002quantum}%
  \BibitemOpen
  \bibfield  {author} {\bibinfo {author} {\bibfnamefont {P.~W.}\ \bibnamefont {Shor}},\ }\bibfield  {title} {\bibinfo {title} {Capacities of quantum channels and how to find them},\ }\href {https://doi.org/10.1007/s10107-003-0446-y} {\bibfield  {journal} {\bibinfo  {journal} {Mathematical Programming}\ }\textbf {\bibinfo {volume} {97}},\ \bibinfo {pages} {311–335} (\bibinfo {year} {2003})}\BibitemShut {NoStop}%
\bibitem [{\citenamefont {Terhal}(2004)}]{terhal2004entanglement}%
  \BibitemOpen
  \bibfield  {author} {\bibinfo {author} {\bibfnamefont {B.~M.}\ \bibnamefont {Terhal}},\ }\bibfield  {title} {\bibinfo {title} {Is entanglement monogamous?},\ }\href {https://doi.org/10.1147/rd.481.0071} {\bibfield  {journal} {\bibinfo  {journal} {IBM Journal of Research and Development}\ }\textbf {\bibinfo {volume} {48}},\ \bibinfo {pages} {71} (\bibinfo {year} {2004})}\BibitemShut {NoStop}%
\bibitem [{\citenamefont {Yang}(2006)}]{yang2006simple}%
  \BibitemOpen
  \bibfield  {author} {\bibinfo {author} {\bibfnamefont {D.}~\bibnamefont {Yang}},\ }\bibfield  {title} {\bibinfo {title} {A simple proof of monogamy of entanglement},\ }\href {https://doi.org/https://doi.org/10.1016/j.physleta.2006.08.027} {\bibfield  {journal} {\bibinfo  {journal} {Physics Letters A}\ }\textbf {\bibinfo {volume} {360}},\ \bibinfo {pages} {249} (\bibinfo {year} {2006})}\BibitemShut {NoStop}%
\bibitem [{\citenamefont {Arikan}(2009)}]{arikan2009polarcodes}%
  \BibitemOpen
  \bibfield  {author} {\bibinfo {author} {\bibfnamefont {E.}~\bibnamefont {Arikan}},\ }\bibfield  {title} {\bibinfo {title} {Channel polarization: A method for constructing capacity-achieving codes for symmetric binary-input memoryless channels},\ }\href {https://doi.org/10.1109/tit.2009.2021379} {\bibfield  {journal} {\bibinfo  {journal} {IEEE Transactions on Information Theory}\ }\textbf {\bibinfo {volume} {55}},\ \bibinfo {pages} {3051–3073} (\bibinfo {year} {2009})}\BibitemShut {NoStop}%
\bibitem [{\citenamefont {Şaşoğlu}\ \emph {et~al.}(2009)\citenamefont {Şaşoğlu}, \citenamefont {Telatar},\ and\ \citenamefont {Arikan}}]{sasoglu2009polarcodes}%
  \BibitemOpen
  \bibfield  {author} {\bibinfo {author} {\bibfnamefont {E.}~\bibnamefont {Şaşoğlu}}, \bibinfo {author} {\bibfnamefont {E.}~\bibnamefont {Telatar}},\ and\ \bibinfo {author} {\bibfnamefont {E.}~\bibnamefont {Arikan}},\ }\bibfield  {title} {\bibinfo {title} {Polarization for arbitrary discrete memoryless channels},\ }in\ \href {https://doi.org/10.1109/ITW.2009.5351487} {\emph {\bibinfo {booktitle} {2009 IEEE Information Theory Workshop}}}\ (\bibinfo {year} {2009})\ pp.\ \bibinfo {pages} {144--148}\BibitemShut {NoStop}%
\bibitem [{\citenamefont {Sutter}\ \emph {et~al.}(2012)\citenamefont {Sutter}, \citenamefont {Renes}, \citenamefont {Dupuis},\ and\ \citenamefont {Renner}}]{sutter2012polarcodesDMC}%
  \BibitemOpen
  \bibfield  {author} {\bibinfo {author} {\bibfnamefont {D.}~\bibnamefont {Sutter}}, \bibinfo {author} {\bibfnamefont {J.~M.}\ \bibnamefont {Renes}}, \bibinfo {author} {\bibfnamefont {F.}~\bibnamefont {Dupuis}},\ and\ \bibinfo {author} {\bibfnamefont {R.}~\bibnamefont {Renner}},\ }\bibfield  {title} {\bibinfo {title} {{Achieving the capacity of any DMC using only polar codes}},\ }in\ \href {https://doi.org/10.1109/itw.2012.6404638} {\emph {\bibinfo {booktitle} {2012 IEEE Information Theory Workshop}}}\ (\bibinfo  {publisher} {IEEE},\ \bibinfo {year} {2012})\ p.\ \bibinfo {pages} {114–118}\BibitemShut {NoStop}%
\bibitem [{\citenamefont {Renes}\ \emph {et~al.}(2012)\citenamefont {Renes}, \citenamefont {Dupuis},\ and\ \citenamefont {Renner}}]{renes2012quantumpolarcoding}%
  \BibitemOpen
  \bibfield  {author} {\bibinfo {author} {\bibfnamefont {J.~M.}\ \bibnamefont {Renes}}, \bibinfo {author} {\bibfnamefont {F.}~\bibnamefont {Dupuis}},\ and\ \bibinfo {author} {\bibfnamefont {R.}~\bibnamefont {Renner}},\ }\bibfield  {title} {\bibinfo {title} {Efficient polar coding of quantum information},\ }\href {https://doi.org/10.1103/PhysRevLett.109.050504} {\bibfield  {journal} {\bibinfo  {journal} {Phys. Rev. Lett.}\ }\textbf {\bibinfo {volume} {109}},\ \bibinfo {pages} {050504} (\bibinfo {year} {2012})}\BibitemShut {NoStop}%
\bibitem [{\citenamefont {Renes}\ and\ \citenamefont {Wilde}(2014)}]{renes2014quantumpolarcoding}%
  \BibitemOpen
  \bibfield  {author} {\bibinfo {author} {\bibfnamefont {J.~M.}\ \bibnamefont {Renes}}\ and\ \bibinfo {author} {\bibfnamefont {M.~M.}\ \bibnamefont {Wilde}},\ }\bibfield  {title} {\bibinfo {title} {Polar codes for private and quantum communication over arbitrary channels},\ }\href {https://doi.org/10.1109/tit.2014.2314463} {\bibfield  {journal} {\bibinfo  {journal} {IEEE Transactions on Information Theory}\ }\textbf {\bibinfo {volume} {60}},\ \bibinfo {pages} {3090–3103} (\bibinfo {year} {2014})}\BibitemShut {NoStop}%
\bibitem [{\citenamefont {Sheng}\ and\ \citenamefont {Deng}(2010{\natexlab{a}})}]{sheng2010deterministicpurification}%
  \BibitemOpen
  \bibfield  {author} {\bibinfo {author} {\bibfnamefont {Y.-B.}\ \bibnamefont {Sheng}}\ and\ \bibinfo {author} {\bibfnamefont {F.-G.}\ \bibnamefont {Deng}},\ }\bibfield  {title} {\bibinfo {title} {{Deterministic entanglement purification and complete nonlocal Bell-state analysis with hyperentanglement}},\ }\href {https://doi.org/10.1103/PhysRevA.81.032307} {\bibfield  {journal} {\bibinfo  {journal} {Physical Review A}\ }\textbf {\bibinfo {volume} {81}},\ \bibinfo {pages} {032307} (\bibinfo {year} {2010}{\natexlab{a}})}\BibitemShut {NoStop}%
\bibitem [{\citenamefont {Sheng}\ and\ \citenamefont {Deng}(2010{\natexlab{b}})}]{sheng2010deterministicpurification_b}%
  \BibitemOpen
  \bibfield  {author} {\bibinfo {author} {\bibfnamefont {Y.-B.}\ \bibnamefont {Sheng}}\ and\ \bibinfo {author} {\bibfnamefont {F.-G.}\ \bibnamefont {Deng}},\ }\bibfield  {title} {\bibinfo {title} {{One-step deterministic polarization-entanglement purification using spatial entanglement}},\ }\href {https://doi.org/10.1103/PhysRevA.82.044305} {\bibfield  {journal} {\bibinfo  {journal} {Physical Review A}\ }\textbf {\bibinfo {volume} {82}},\ \bibinfo {pages} {044305} (\bibinfo {year} {2010}{\natexlab{b}})}\BibitemShut {NoStop}%
\bibitem [{\citenamefont {Huang}\ \emph {et~al.}(2022)\citenamefont {Huang}, \citenamefont {Hu}, \citenamefont {Liu}, \citenamefont {Zhou}, \citenamefont {Sheng}, \citenamefont {Li},\ and\ \citenamefont {Guo}}]{huang2022deterministicpurification}%
  \BibitemOpen
  \bibfield  {author} {\bibinfo {author} {\bibfnamefont {C.-X.}\ \bibnamefont {Huang}}, \bibinfo {author} {\bibfnamefont {X.-M.}\ \bibnamefont {Hu}}, \bibinfo {author} {\bibfnamefont {B.-H.}\ \bibnamefont {Liu}}, \bibinfo {author} {\bibfnamefont {L.}~\bibnamefont {Zhou}}, \bibinfo {author} {\bibfnamefont {Y.-B.}\ \bibnamefont {Sheng}}, \bibinfo {author} {\bibfnamefont {C.-F.}\ \bibnamefont {Li}},\ and\ \bibinfo {author} {\bibfnamefont {G.-C.}\ \bibnamefont {Guo}},\ }\bibfield  {title} {\bibinfo {title} {Experimental one-step deterministic polarization entanglement purification},\ }\href {https://doi.org/10.1016/j.scib.2021.12.018} {\bibfield  {journal} {\bibinfo  {journal} {Science Bulletin}\ }\textbf {\bibinfo {volume} {67}},\ \bibinfo {pages} {593–597} (\bibinfo {year} {2022})}\BibitemShut {NoStop}%
\bibitem [{\citenamefont {Watrous}(2004)}]{watrous2004distillation}%
  \BibitemOpen
  \bibfield  {author} {\bibinfo {author} {\bibfnamefont {J.}~\bibnamefont {Watrous}},\ }\bibfield  {title} {\bibinfo {title} {Many {Copies} {May} {Be} {Required} for {Entanglement} {Distillation}},\ }\href {https://doi.org/10.1103/PhysRevLett.93.010502} {\bibfield  {journal} {\bibinfo  {journal} {Physical Review Letters}\ }\textbf {\bibinfo {volume} {93}},\ \bibinfo {pages} {010502} (\bibinfo {year} {2004})}\BibitemShut {NoStop}%
\bibitem [{\citenamefont {Brandão}\ and\ \citenamefont {Eisert}(2008)}]{brandao2008distillation}%
  \BibitemOpen
  \bibfield  {author} {\bibinfo {author} {\bibfnamefont {F.~G. S.~L.}\ \bibnamefont {Brandão}}\ and\ \bibinfo {author} {\bibfnamefont {J.}~\bibnamefont {Eisert}},\ }\bibfield  {title} {\bibinfo {title} {Correlated entanglement distillation and the structure of the set of undistillable states},\ }\href {https://doi.org/10.1063/1.2888925} {\bibfield  {journal} {\bibinfo  {journal} {Journal of Mathematical Physics}\ }\textbf {\bibinfo {volume} {49}},\ \bibinfo {pages} {042102} (\bibinfo {year} {2008})}\BibitemShut {NoStop}%
\bibitem [{\citenamefont {Waeldchen}\ \emph {et~al.}(2016)\citenamefont {Waeldchen}, \citenamefont {Gertis}, \citenamefont {Campbell},\ and\ \citenamefont {Eisert}}]{waeldchen2016renormalizingdistillation}%
  \BibitemOpen
  \bibfield  {author} {\bibinfo {author} {\bibfnamefont {S.}~\bibnamefont {Waeldchen}}, \bibinfo {author} {\bibfnamefont {J.}~\bibnamefont {Gertis}}, \bibinfo {author} {\bibfnamefont {E.~T.}\ \bibnamefont {Campbell}},\ and\ \bibinfo {author} {\bibfnamefont {J.}~\bibnamefont {Eisert}},\ }\bibfield  {title} {\bibinfo {title} {Renormalizing {Entanglement} {Distillation}},\ }\href {https://doi.org/10.1103/PhysRevLett.116.020502} {\bibfield  {journal} {\bibinfo  {journal} {Physical Review Letters}\ }\textbf {\bibinfo {volume} {116}},\ \bibinfo {pages} {020502} (\bibinfo {year} {2016})}\BibitemShut {NoStop}%
\bibitem [{\citenamefont {Webb}(2016)}]{webb2016clifford}%
  \BibitemOpen
  \bibfield  {author} {\bibinfo {author} {\bibfnamefont {Z.}~\bibnamefont {Webb}},\ }\bibfield  {title} {\bibinfo {title} {{The Clifford group forms a unitary 3-design}},\ }\href {https://doi.org/10.26421/QIC16.15-16-8} {\bibfield  {journal} {\bibinfo  {journal} {Quantum Information and Computation}\ }\textbf {\bibinfo {volume} {16}},\ \bibinfo {pages} {1379} (\bibinfo {year} {2016})}\BibitemShut {NoStop}%
\bibitem [{\citenamefont {Karlin}\ and\ \citenamefont {McGregor}(1957)}]{karlin1957classification}%
  \BibitemOpen
  \bibfield  {author} {\bibinfo {author} {\bibfnamefont {S.}~\bibnamefont {Karlin}}\ and\ \bibinfo {author} {\bibfnamefont {J.}~\bibnamefont {McGregor}},\ }\bibfield  {title} {\bibinfo {title} {The classification of birth and death processes},\ }\href {https://doi.org/10.1090/S0002-9947-1957-0094854-8} {\bibfield  {journal} {\bibinfo  {journal} {Transactions of the American Mathematical Society}\ }\textbf {\bibinfo {volume} {86}},\ \bibinfo {pages} {366} (\bibinfo {year} {1957})}\BibitemShut {NoStop}%
\end{thebibliography}%

\onecolumngrid
\clearpage\newpage


\begin{center}
    {\large\bfseries End Matter}
\end{center}
\bigskip
\twocolumngrid

\begin{lemma}[Acceptance probabilities of \cref{alg:random_bilocal_Clifford}, passive setting]\label{lem:probs}
For an input state with initial fidelity $f^n$, the acceptance probability of \cref{alg:random_bilocal_Clifford} in the passive setting is
\begin{equation}\label{eq:pr_accept_2}
p_{\acc}=\frac{f^n-4^{-n}}{1-4^{-n}} + \frac{1-f^n}{1-4^{-n}} \frac{2^m \cdot 4^k}{4^n}.
\end{equation}
The probability of acceptance \emph{and} obtaining the target state is 
\begin{equation}\label{eq:pr_accept_and_phi_2}
p_{\acc \land \Phi}=\frac{f^n-4^{-n}}{1-4^{-n}}+\frac{1-f^n}{1-4^{-n}}\frac{2^{m}}{4^n}.
\end{equation}
\end{lemma}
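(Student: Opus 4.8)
The plan is to exploit that both $p_{\acc}$ and $p_{\acc\land\Phi}$ are \emph{linear} functionals of the post-scrambling state $C^{1,1}\rho_{A^nB^n}C^{1,1\dag}$, and that in the passive mode the acceptance rule --- ``Alice's and Bob's computational-basis outcomes on their first $m$ qubits agree'' --- is implemented by a measurement operator that does not depend on the random Clifford $C$. Writing $\sigma \coloneqq \E_{C}[C^{1,1}\rho_{A^nB^n}C^{1,1\dag}]$ for the twirled state, I can therefore pull the average over $C$ inside the trace and invoke the $2$-design identity \cref{eq:Clifford_twirl_2}, so that $\sigma$ is the isotropic state $f^n\Phi_{AB}^{\otimes n}+\tfrac{1-f^n}{4^n-1}(\id_{AB}^{\otimes n}-\Phi_{AB}^{\otimes n})$. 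Each probability then reduces to the overlap of a fixed operator with $\sigma$, and by linearity it suffices to evaluate that overlap against the two components $\Phi_{AB}^{\otimes n}$ and $\id_{AB}^{\otimes n}$ separately.

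The next step is to identify the two fixed operators. For the acceptance event I would use $\Pi_{\acc} = \Pi_{\mathrm{match}} \otimes \id^{\otimes k}$, where $\Pi_{\mathrm{match}} \coloneqq \sum_{\vec{z}\in\{0,1\}^m}\ket{\vec{z}}\bra{\vec{z}}_{A^m}\otimes\ket{\vec{z}}\bra{\vec{z}}_{B^m}$ projects onto agreeing outcomes on the $m$ measured pairs, with the identity acting on the $k$ unmeasured pairs, giving $p_{\acc}=\Tr[\Pi_{\acc}\,\sigma]$. For the joint event I would formalize ``obtaining the target $\Phi_{AB}^{\otimes k}$'' as an overlap with $\Phi_{AB}^{\otimes k}$ on precisely the $k$ unmeasured pairs: taking the unnormalized post-measurement state conditioned on each accepting outcome $\vec{z}$, computing its overlap with $\Phi_{AB}^{\otimes k}$, and then summing over $\vec{z}$ yields $p_{\acc\land\Phi}=\Tr[M\,\sigma]$ with $M \coloneqq \Pi_{\mathrm{match}}\otimes\Phi_{AB}^{\otimes k}$. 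The two operators thus differ only in whether the last $k$ pairs carry $\id^{\otimes k}$ or the rank-one projector $\Phi_{AB}^{\otimes k}$.

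It then remains to compute four traces. The overlaps with $\Phi_{AB}^{\otimes n}=\Phi_{AB}^{\otimes m}\otimes\Phi_{AB}^{\otimes k}$ both equal $1$: on a Bell pair the computational-basis outcomes of Alice and Bob are perfectly correlated, so $\Tr[\Pi_{\mathrm{match}}\Phi_{AB}^{\otimes m}]=1$, while $\Tr[\id^{\otimes k}\Phi_{AB}^{\otimes k}]=\Tr[\Phi_{AB}^{\otimes k}]=1$ and $\Tr[\Phi_{AB}^{\otimes k}\Phi_{AB}^{\otimes k}]=1$. The overlaps with $\id_{AB}^{\otimes n}$ are pure dimension counts: $\Tr[\Pi_{\mathrm{match}}]=2^m$ (each of the $2^m$ rank-one terms has unit trace), so $\Tr[\Pi_{\acc}]=2^m\,4^k$ whereas $\Tr[M]=2^m$ --- this single $4^k$ factor, coming from tracing $\id^{\otimes k}$ versus the rank-one $\Phi_{AB}^{\otimes k}$ over the unmeasured pairs, is exactly what separates \cref{eq:pr_accept_2} from \cref{eq:pr_accept_and_phi_2}. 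Substituting into $p=f^n\,(\text{overlap with }\Phi)+\tfrac{1-f^n}{4^n-1}\bigl((\text{overlap with }\id)-(\text{overlap with }\Phi)\bigr)$ and simplifying (clearing the $4^n-1$ denominator and matching against $\tfrac{f^n-4^{-n}}{1-4^{-n}}$) gives the two closed forms.

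I expect the main obstacle to be the correct operator-level formalization of $p_{\acc\land\Phi}$ --- justifying that ``accepting and landing in $\Phi_{AB}^{\otimes k}$'' is faithfully captured by the single Hermitian operator $M=\Pi_{\mathrm{match}}\otimes\Phi_{AB}^{\otimes k}$, acting on the measured and unmeasured pairs respectively, and tracking the subsystem bookkeeping so that the $4^k$ appears in $p_{\acc}$ but not in $p_{\acc\land\Phi}$. Once the two operators are pinned down, the remaining work is the routine Bell-correlation and dimension-counting evaluations together with elementary algebra.
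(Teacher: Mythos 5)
Your proposal is correct and follows essentially the same route as the paper: the paper likewise defines $p_{\acc}=\Tr[\sigma(\Pi_{A^mB^m}\otimes\id_{A^kB^k})]$ and $p_{\acc\land\Phi}=\Tr[\sigma(\Pi_{A^mB^m}\otimes\Phi_{AB}^{\otimes k})]$ with $\Pi_{A^mB^m}=\sum_{\vec{x}}\ketbra{\vec{x},\vec{x}}$, and evaluates these against the two components of the Clifford-twirled isotropic state from \cref{eq:Clifford_twirl_2}. Your trace evaluations (overlap $1$ with $\Phi_{AB}^{\otimes n}$, dimension counts $2^m\cdot 4^k$ versus $2^m$ against $\id_{AB}^{\otimes n}$) and the resulting algebra reproduce \cref{eq:pr_accept_2,eq:pr_accept_and_phi_2} exactly, so the proof is complete.
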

\begin{proof}
    Using the fact that the Clifford group forms a unitary 2-design~\cite{DCEL09,DAE07}, it is known that~\cite[Theorem~7.25]{Wat18_book}
    \begin{equation}\label{eq:Clifford_twirl_2}
        \E_{C}[C^{1,1}\rho_{A^nB^n}C^{1,1\dag}]=f^n\Phi_{AB}^{\otimes n}+\frac{1-f^n}{4^n-1}(\id_{AB}^{\otimes n}-\Phi_{AB}^{\otimes n}),
    \end{equation}
    where $C^{1,1}\equiv C\otimes C^{*}$. The first term in the expressions in \eqref{eq:pr_accept_2} and \eqref{eq:pr_accept_and_phi_2} represents the contribution from the maximally entangled component, while the second term captures the probability of accidentally accepting despite having errors. For details, we refer to the Supplemental Material.
\end{proof}
The above expressions lead to a simple but powerful lower bound on the output fidelity, $(1-\bar{\varepsilon})^k=\frac{p_{\acc\wedge\Phi}}{p_{\acc}}$.

\begin{corollary}[Output fidelity of \cref{alg:random_bilocal_Clifford}, passive setting]\label{cor:simple-f}
For every input state with fidelity at least $f^n \coloneqq (1-\varepsilon)^n$, the output fidelity when the protocol succeeds satisfies
\begin{equation}\label{eq:new-fid_EM}
    (1-\bar{\varepsilon})^k \geq 1 - 2^{-m}(f^{-n}-1).
\end{equation}
\end{corollary} 

\begin{proof}
 The proof follows from the straightforward lower bound $p_{\acc \land \Phi} \geq f^n$ and the upper bound $p_{\acc} \leq f^n + (1 - f^n) 2^{-m}$. The second addend in $p_{\acc}$ represents a \textit{false positive} event, where the input state contains some error, yet the protocol fails to detect it; see the Supplemental Material for details.
\end{proof}

\paragraph*{Finite-depth protocol with noisy gates.} Consider the following protocol. Alice and Bob share $n$ qubit pairs. They repeatedly choose a random pair of qubit positions, $(i,j)$, and Alice applies a random two-qubit Clifford unitary $C$ to qubits $i$ and $j$, while Bob applies its complex conjugate $C^{\ast}$ to qubits $i$ and $j$. After some number $G$ of gates, they both measure $m$ of their qubits in the computational basis and accept if they have matching outcomes, just like the passive setting of \cref{alg:random_bilocal_Clifford}. We model the noise in the system as acting on the qubits to which each gate is applied. To evaluate the protocol's performance, we compute both the success probability and the fidelity of the output state when the protocol succeeds. 
This involves the construction of a Markov chain based on the Clifford transformation of Pauli probabilities; see the Supplemental Material for details. 

Our analysis reveals an important trade-off in noisy finite-depth protocols. While more gates are necessary for scrambling, they also introduce more noise into the system. In \cref{fig:finite_depth}, we see that there is a value of $G$ beyond which the infidelity does not improve. This is the point at which noise accumulates to an extent that cannot be further counteracted with our distillation protocol. This optimal value of $G$ depends on both the noise parameters and the desired output fidelity.

\begin{table*}[t]
\begin{tabular}{@{}p{3.0cm} p{2.7cm} @{\hskip 0.1in} p{2.0cm} p{2.8cm} p{3.0cm} @{\hskip 0.15in} p{3.0cm}@{}}
\toprule
\hfil \multirow{2}{*}{\textbf{Protocol}} \hfill & \hfil \multirow{2}{*}{\textbf{Main idea}} \hfill & \hfil \multirow{2}{*}{\textbf{Overhead}} \hfill & \hfil \multirow{2}{*}{\textbf{Robustness}} \hfill & \multicolumn{2}{c}{\textbf{Resource requirements}} \\ \cmidrule(lr){5-6} 
& & &                                     & \multicolumn{1}{c}{\emph{Quantum}}             &  \multicolumn{1}{c}{\emph{Classical}}            \\ \cmidrule{1-6}\morecmidrules\cmidrule{1-6}

BBPSSW-1EPP~\cite{BDSW96}    & Hashing based on random classical linear codes                              & $O(1)$                                   &  Robust with modifications \cite{zwerger2014robustness}                                    & $O(n^2)$ CNOT gates                & One-way CC, exponentially costly decoding. \\ 
Decoupling \cite{hayden2008decoupling,dupuis2014decoupling}    & Random quantum codes                              & $O(1)$                                   &  Requires injection                                    & $O(n \poly\log n)$ Clifford gates                & One-way CC, exponentially costly decoding.               \\ 
Lattice surgery~\cite{sinclair2024fault,ramette2023fault,fowler2010surface} & Lattice surgery to distribute logical Bell pairs & $O(\poly \log \bar{\varepsilon}^{-1})$ (threshold $\varepsilon_0 \lesssim 10\%$) & Natively fault-tolerant & $O(\poly \log \bar{\varepsilon}^{-1})$ space-time & One-way CC, $O(\poly n)$ decoding. \\
qLDPC~\cite{ataides2025constant} & High-rate qLDPC codes & $O(1)$ (threshold $\varepsilon_0 \lesssim 5\%$) & Natively fault-tolerant & $O(\log \bar{\varepsilon}^{-1})$ space and $O(1)$ time & One-way CC and $O(\poly n)$ decoding. \\
\hline
BBPSSW-2EPP \cite{BDSW96}  &  $2 \to 1$ recurrence & $O(\poly \bar{\varepsilon}^{-1})$                                   & Requires injection & $O(\poly \bar{\varepsilon}^{-1})$ space-time & Two-way CC                \\
DEJMPS \cite{DEJMPS96}     & $2 \to 1$ recurrence                            & $O(\poly \bar{\varepsilon}^{-1})$                                   & Requires injection & $O(\poly \bar{\varepsilon}^{-1})$  space-time & Two-way CC                \\
Entanglement pumping \cite{DBC99,dur2003entanglement} & $2 \to 1$ iterative pumping & $O(\poly \bar{\varepsilon}^{-1})$ & Requires injection & $O(\poly \bar{\varepsilon}^{-1})$ space-time & Two-way CC \\
Bell basis permutations \cite{dehaene2003distillationpermutation,maneva2002purification,BM05,KAJ19,jansen2022enumeratingclifforddistillation,addala2023optimizedpurification,goodenough2023ntokdistillation} & $n \to k$ recurrence & $O(\poly \bar{\varepsilon}^{-1})$ & Requires injection & $O(\poly \bar{\varepsilon}^{-1})$ space-time & Two-way CC \\
\citet{pattison2024constoverheaddistillation} & Concatenating small error detecting codes & $O(1)$ & Requires injection & $O(\poly \log \varepsilon^{-1})$ space-time & Two-way CC \\
\hline
\textbf{Ours} & \textbf{Random bilocal Cliffords} & $\boldsymbol{O(1)}$ & \textbf{Requires injection} &  $\boldsymbol{O(\poly \log \bar{\varepsilon}^{-1})}$ \textbf{space,} $\boldsymbol{O(\poly \log \log \bar{\varepsilon}^{-1})}$ \textbf{time} & \textbf{Two-way CC} \\
\bottomrule
\end{tabular}
\caption{Overview of prominent entanglement distillation protocols.
The overhead column describes the number of input Bell pairs required to achieve a target infidelity of $\bar{\varepsilon}$. For error-correction-based protocols, a threshold input infidelity $\varepsilon_0$ is indicated; running these protocols on Bell pairs with an infidelity higher than $\varepsilon_0$ will only \emph{increase} the infidelity. In contrast, general error-detection-based protocols always output Bell pairs with improved infidelity compared to the input. The robustness column indicates tolerance to noisy \emph{local} operations. Some protocols provide native fault-tolerance by using the same code to protect against both network and local noise. However, others are designed only to protect against network noise; protecting against local noise requires that the physical Bell pairs be injected (non-fault-tolerantly) into some secondary code, on top of which the distillation protocol is run.
Scalings in $n$ instead of $\varepsilon$ appear for asymptotic protocols whose performance guarantees rely on large input sizes to leverage concentration of measure effects. Such scalings are valid when $n$ is sufficiently large, with the relationship between $n$ and the target infidelity $\bar{\varepsilon}$ often implicit.}
\label{tab:comparisons}
\end{table*}

\begin{figure}
    \centering
    \includegraphics[width=0.75\columnwidth]{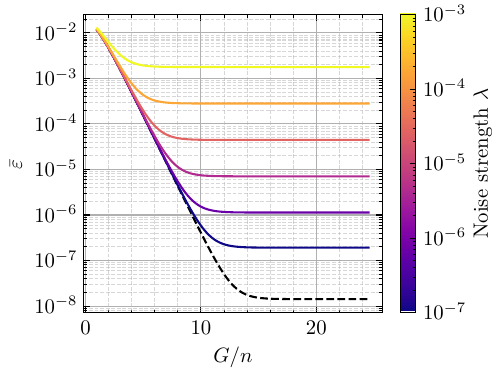}
    \caption{The achieved infidelity of our random bilocal Clifford protocol with a finite number $G$ of gates and varying two-qubit gate depolarizing strengths $\lambda$, with the noiseless case shown in the dashed line. We set the number of noisy Bell pairs to $n=30$ and the initial infidelity to $\varepsilon=0.02$. Note the infidelity for local noise rates $\lambda$ asymptotes to~$\sim\!\lambda$.}
    \label{fig:finite_depth}
\end{figure}

 \paragraph*{Details on the application to quantum repeaters.}
For simplicity, we consider a repeater chain using a \emph{nested scheme}; see, e.g., Ref.~\cite{DBC99}. In a (homogeneous) nested repeater scheme, two parties are connected by $2^T$ (where $T\in \mathbb{N}_0$) equidistant segments of optical fiber, connected to each other by processing nodes, i.e., repeaters, capable of generating and distilling entanglement. It is possible to create nearest-neighbor entanglement in the chain, and then perform entanglement swapping at the intermediate nodes to create longer-range entanglement~\cite{SSR+11,azuma2023repeatersRMP}. Distillation is then performed to counteract noise, which is caused by either the initial entanglement generation, or noise accrued from swapping noisy states. Nested repeater schemes perform swaps and distillations symmetrically. Specifically, at each \emph{level} of the protocol, two states spanning an identical distance are swapped. Furthermore, for a given level the same distillation protocol is applied to each of the states. This significantly narrows down the protocol space and is known to perform well for homogeneous repeater chains (although not always optimal; see, e.g., Refs.~\cite{shchukin2022optimal,haldar2024reducing}). Even with nested repeater schemes, there are still many ways to apply our distillation protocols. Here, we only consider a single round of distillation at each level. This choice is motivated not only out of simplicity, but also to showcase the capabilities of random bilocal Clifford protocols without needing to perform complicated optimizations. As such, the parameters $(n_i, k_i)$ at each level $i \in \left[T\right]$ specify a nested repeater scheme using random bilocal Clifford distillation protocols. In the following, we further narrow down the search space. Finally, we restrict ourselves to the passive setting and to at most $T=9$ levels.

Motivated by the observation that distillation is most effective when applied to weakly entangled states prior to swapping (see, e.g., Ref.~\cite{haldar2024reducing}), we employ the following heuristic to determine the values of $(n_i, k_i)$. First, for each $2\leq n\leq 100$, where $n$ is the number of entangled qubit pairs, we find the smallest value of $k$ that achieves an infidelity of $10^{-9}$ after distillation. 
For each such pair $(n, k)$, we numerically find the one that minimizes the overhead (over a single segment), which gives us the first distillation protocol in our sequence. Subsequent swaps will roughly double the infidelity~\cite{khatri2022networkMDP}, and as such should be counteracted by distillation. Since the states are already of high quality, we select the remaining parameters as $(n', n'-1)$ for some $n'$, which allows for rapid optimization. Our heuristic protocol is thus only specified by three parameters: $n$, $k$, and $n'$. We obtain optimized values of $n=93$, $k=68$ and $n' = 40$. We note here that an optimization using the protocols from~\citet{pattison2024constoverheaddistillation}, where further distillation between levels is allowed, would be a more fair comparison.

For QKD, we brute-force search over all protocols that at the first level distill with some value of $4\leq n\leq 30$ ($4\leq n \leq 12$ for the dashed lines in \cref{fig:applications_baseline}) and $1 \leq k\leq n-1$. For the remaining levels we distill with parameters set to $n, k=n-1$ or $n, k=n-2$. Similar to the heuristic for long-baseline optical interferometry, this is far from an exhaustive search, but still yields a high secret-key rate.

\clearpage
\newpage
\let\addcontentsline\oldaddcontentsline
\title{Supplemental Material: Constant Overhead Entanglement Distillation via Scrambling}
\maketitle

\onecolumngrid


\tableofcontents

\section{Prior Art}
Entanglement distillation has historically been approached from two complementary perspectives, each providing distinct insights into the fundamental nature of the problem. The first perspective, rooted in quantum error correction~\cite{BDSW96,gottesman1997stabilizer,steane1997error}, views distillation as a distributed error correction task where Alice and Bob cooperate to extract high-fidelity entangled states from noisy ones. This approach has led to concrete, practical protocols based on quantum codes and stabilizer measurements. The second perspective, emerging from quantum Shannon theory~\cite{schumacher1995quantum,BDSW96,devetak2005private}, treats distillation as a quantum information processing task and focuses on establishing fundamental limits on achievable rates and fidelities. While these perspectives were initially developed independently, their unification has provided deep insights into both the theoretical limits and practical implementations of entanglement distillation protocols. Parallel to these developments, significant progress has emerged from analyzing the mathematical structure of local operations on Bell states~\cite{dehaene2003distillationpermutation,maneva2002purification,BM05}, leading to systematic frameworks for optimizing practical protocols~\cite{RST+18,KAJ19,jansen2022enumeratingclifforddistillation,addala2023optimizedpurification,goodenough2023ntokdistillation}. Many of the earlier developments are thoroughly covered in comprehensive reviews~\cite{dur2007purificationQECreview,dur2016distillationreview}, which trace the evolution of the field from early concatenated protocols through to modern implementations. Our work draws on both traditions, using tools from quantum error correction to construct explicit protocols while leveraging information-theoretic techniques to analyze their performance.

\paragraph*{Quantum error correction.} The deep connection between entanglement distillation and quantum error correction was first highlighted by~\citet{BDSW96}, who introduced a unified perspective on quantum coding and distillation. They demonstrated that any quantum error-correcting code could be used to derive an entanglement distillation protocol, and vice versa. This equivalence arises from their shared objective: countering the effects of noise while preserving quantum information. Quantum error correction encodes quantum information into a larger Hilbert space to protect it from local noise, while distillation extracts high-fidelity entangled states from noisy copies by selectively discarding errors.

This connection was further developed through the stabilizer formalism. \citet{gottesman1997stabilizer} showed that stabilizer codes naturally give rise to entanglement distillation protocols in which the encoding and decoding operations become local operations performed by Alice and Bob. The CSS code construction proved particularly fruitful for distillation, as it allowed separate handling of bit and phase errors~\cite{steane1997error,calderbank1996good}. These ideas led to early practical protocols like the BBPSSW protocol~\cite{BBP96}, which implements a simple concatenated scheme alternating between bit and phase error correction using a classical $[\![2,1,2]\!]$ repetition code. Similar concatenated approaches using small quantum codes were developed~\cite{murao1998multiparticlepurification,DEJMPS96,horodecki1999reduction,alber2001efficient}, but these protocols achieve asymptotically zero rates --- the number of input pairs needed per output pair grows without bound as the target infidelity $\varepsilon$ decreases. More sophisticated schemes were subsequently developed~\cite{matsumoto2003conversion,hostens2004stabilizer}, culminating in recent works that achieve asymptotically constant rates~\cite{pattison2024constoverheaddistillation,shi2024stabilizer} --- meaning the expected number of input Bell pairs required per output pair remains bounded even as $\varepsilon$ approaches zero. 

While early work focused on error correction, subsequent research explored probabilistic approaches that trade success probability for improved performance. Error detecting codes, which only flag the presence of errors rather than attempting to correct them, proved particularly valuable when adapted to entanglement distillation. Indeed, it has been shown that protocols based on error detection can achieve significantly better fidelity than traditional error-correcting schemes in high-noise regimes~\cite{aschauer2005quantum,fujii2009entpurifdoubleselection}. Recent work has also shown that error detection-based protocols can be highly space-efficient: \citet{gidney2023tetrationally} demonstrated that entanglement can be purified to infidelity $\bar{\varepsilon}$ using only $O(\log^* (1/\bar{\varepsilon}))$ qubits of storage (where $\log^*$ is the iterated logarithm) through careful staging of error detection rounds. These improvements, however, come at a key cost, which is that two-way classical communication become necessary. This is because Bob must transmit his measurement outcomes to Alice, who then decides whether to accept or reject the state before communicating this decision back to Bob. Despite this, classical communication is generally considered to be cheap, so the ability to tolerate higher rates of transmission error has made error detection a useful technique for realizing practical entanglement distillation protocols.

The use of randomness in error correction has a rich history dating back to Shannon's original work on classical error correction~\cite{shannon1948mathematical}. In the quantum setting, random stabilizer codes are known to saturate various quantum capacity bounds~\cite{gottesman1997stabilizer}, and recent work has shown that even shallow random quantum circuits can define effective error correcting codes~\cite{brown2023short,gullan2021quantum,darmawan2024low,nakata2021one,nelson2023faulttolerantrandom}. Our work connects to this tradition, but with a crucial difference: rather than attempting to decode random stabilizer codes (which can be computationally intensive~\cite{mceliece1978public,alekhnovich2003more}), we use them purely for error detection. The random Clifford operations in our protocol effectively implement random stabilizer codes, spreading errors in a way that makes them detectable through local measurements. While this error detection approach means our protocol succeeds only probabilistically, we show that it can nonetheless achieve asymptotically constant rates in expectation; furthermore, limiting the failure probability to $\delta$ only incurs a multiplicative $O(\log \delta^{-1})$ overhead.

The error correction perspective helps explain several features of our protocol. The improvement in fidelity can be understood as arising from the distance properties of random stabilizer codes --- with high probability, these codes have good minimum distance, allowing them to detect most error patterns. The trade-off between success probability and output fidelity mirrors similar trade-offs in error detection codes. Perhaps most importantly, the constant overhead we achieve connects to the fact that random stabilizer codes can approach the quantum Gilbert--Varshamov bound~\cite{calderbank1997quantum,gottesman1997stabilizer,calderbank1996good}. 

\paragraph*{Random unitaries and decoupling.} The use of randomness in information theory has been foundational in understanding the limits of communication, both classical and quantum. In classical Shannon theory, random coding was first introduced to demonstrate the achievability of channel capacity, marking a major breakthrough in the study of communication~\cite{shannon1948mathematical}. By generating random codes, Shannon proved that reliable communication over noisy classical channels was possible, and this idea became a cornerstone of classical information theory, ultimately inspiring the development of practical capacity-achieving codes~\cite{gallager1986information, mceliece2003theory,cover_thomas_book}. 

The power of randomness in entanglement distillation was recognized early on by~\citet{BDSW96}, who introduced the hashing protocol; see also Refs.~\cite{vollbrecht2005interpolation,zwerger2014robustness}. This protocol uses random classical linear codes to extract maximally entangled states from multiple copies of a partially entangled state, much as classical hashing functions can extract pure random bits from partially random sources. The key insight was that measuring random combinations of Bell pairs could reveal information about their joint error syndrome, allowing Alice and Bob to identify and correct errors. Like many protocols based on random coding arguments, however, the hashing protocol requires a decoding procedure which has an exponential computational cost. This limitation highlights a recurring theme in entanglement distillation: while random coding arguments can establish achievable rates in principle, translating these ideas into practical protocols often requires significant modifications.

This interplay between classical and quantum randomness has proved remarkably fertile. The quantum generalization of these ideas has led to deep insights about both quantum communication and entanglement distillation~\cite{Shor95errorcorrection,schumacher1995quantum, schumacher1996entanglementnoisychannels, lloyd1997capacity,shor2002quantum, devetak2005private, devetak2004father, abeyesinghe2009mother}. While these tasks might appear different --- one involving the transmission of quantum states through noisy channels, the other the purification of shared entanglement --- they are fundamentally connected through the role of quantum correlations. Quantum communication naturally creates entanglement between sender and receiver, while shared entanglement enables quantum communication through teleportation~\cite{BBC+93,BBP96}. The power of randomness in analyzing these quantum protocols mirrors its classical counterpart: just as random codes reveal the capacity of classical channels, random quantum operations illuminate the fundamental limits of quantum information processing.

A major breakthrough in understanding this connection came from \citet{hayden2008decoupling}, who showed how random unitaries could provide a unified framework for both tasks. An important idea used in this approach was to describe the effects of a noisy channel as interactions with an auxiliary system called the ``environment''. This perspective is a key component of quantum Shannon theory, where the goal is often to suppress these correlations so that the transmitted information is reliably transferred to the receiver without being lost to the environment. The key insight of the decoupling approach is that if Alice applies a \emph{random} unitary to her system, she effectively scrambles any correlations between her system and the environment. More precisely, after applying a random unitary, the reduced state of the environment becomes nearly maximally mixed --- it retains essentially no information about Alice's original state. This ``decoupling'' from the environment implies, by the monogamy of entanglement~\cite{terhal2004entanglement, yang2006simple}, that any quantum correlations must now reside between Alice and Bob's systems. By leveraging this property, the decoupling approach enables the distillation of entanglement at rates approaching the coherent information, a quantity that is both a lower bound for quantum capacity and a lower bound for distillable entanglement~\cite{BDSW96,lloyd1997capacity,devetak2005private,devetak2005distillation,leditzky2018useful}.

However, achieving these rates in practice requires projecting onto the typical subspaces of Alice's and Bob's systems. These subspaces are defined as the regions where the state's entropic properties concentrate around their asymptotic values. Following this projection, random unitaries and recovery operations can distill maximally entangled states at the coherent information rate. The decoupling approach thus establishes the coherent information as not just a bound but a rate that can be (asymptotically) achieved in principle. However, the operations required for these projections --- such as identifying the subspaces and performing spectral decompositions --- are computationally and experimentally prohibitive. While theoretically elegant, such operations lack practical feasibility, creating a gap between theory and implementation that has been addressed through strategies such as polar codes~\cite{arikan2009polarcodes,sasoglu2009polarcodes,sutter2012polarcodesDMC,renes2012quantumpolarcoding,renes2014quantumpolarcoding}.

Our work shows that the typical subspace measurements of random coding procedures can be replaced with much simpler operations while still achieving asymptotically constant rates of entanglement distillation. Like the decoupling approach, random unitaries feature prominently in our protocol. However, rather than requiring typical subspace measurements, we show that random unitary operations followed by simple computational basis measurements suffice. While we cannot perform the optimal measurements required by the decoupling approach, random Clifford operations spread errors in a way that makes them easily detectable through local measurements, effectively decoupling the preserved information from the noise. Our random bilocal Clifford operations show achievability of favorable (and asymptotically constant) entanglement distillation rates, while maintaining efficiency of implementation through the structure of the Clifford group. This provides a more direct route to achieving practically feasible rates compared to protocols requiring typical subspace measurements~\cite{schumacher1995quantum,devetak2005distillation,hayden2008decoupling}.

\section{Entanglement distillation formalism}

An entanglement distillation protocol is a bipartite channel $\mathcal{L}$ consisting of local quantum operations between two parties, Alice and Bob, and classical communication between them (LOCC), which maps noisy copies of the maximally entangled state vector $\ket{\Phi}\coloneqq\frac{1}{\sqrt{d}}\sum_{k=0}^{d-1}\ket{k,k}$ to fewer, less noisy copies; we refer to Ref.~\cite{dur2016distillationreview} for a pedagogical introduction. We also let $\Phi\equiv\ketbra{\Phi}{\Phi}$ throughout. Intuitively, the pairs of entangled qubits they share have been corrupted by noise during transmission through a quantum channel. Their goal is to sacrifice some of these pairs to increase the quality of the remaining ones, using only LOCC.

The key challenge in designing entanglement distillation protocols is that quantum measurements are inherently probabilistic and irreversible. Unlike classical error correction, where we can simply measure a syndrome to detect and correct errors, quantum protocols must carefully balance information gain against state disturbance. This leads naturally to probabilistic protocols --- ones that might fail and need to be repeated, but which preserve quantum coherence when they succeed. To make this concrete, imagine Alice and Bob share $n$ noisy Bell pairs. They each perform some local operations and measurements on their qubits, then compare their results over a classical channel. Based on these results, they either declare success and keep $k < n$ purified pairs, or declare failure and start over with fresh pairs.

This probabilistic nature means that we need to carefully track both the quality of the output states when we succeed and how many input states we expect to consume before succeeding. In this work, we are particularly interested in LOCC algorithms that have some \emph{failure probability}, and therefore may need to be run arbitrarily many times in order to guarantee success. To understand the performance of such protocols, we need to account for this randomness due to the possibility of failure, which leads to the following more formal definition of an entanglement distillation protocol.

\begin{definition}[Entanglement distillation]\label{def:ent-distill}
Entanglement distillation protocols are parameterized by $5$ pieces of data: $(\varepsilon,\bar{\varepsilon},n,k,\mathcal{L})$, where $n,k \in \mathbb{Z}_{\geq 1}$, $\varepsilon, \bar{\varepsilon} \in (0,1)$, and $\mathcal{L}:\Lin(\mathcal{H}_A^{\otimes n}\otimes\mathcal{H}_B^{\otimes n})\to\Lin(\mathcal{H}_A^{\otimes k}\otimes\mathcal{H}_B^{\otimes k})$ is an LOCC channel that accepts as input an arbitrary bipartite $2n$-qubit state $\rho_{A^nB^n}$, which can be assumed to satisfy $\bra{\Phi}_{AB}^{\otimes n}\rho_{A^nB^n} \ket{\Phi}_{AB}^{\otimes n} \geq (1-\varepsilon)^{n}$, and outputs a bipartite $2k$-qubit state. This protocol has some acceptance probability $p_\acc \in [0,1]$, such that:
\begin{itemize}[noitemsep]
    \item With probability $p_\acc$, the protocol outputs a state $\bar{\rho}_{A^kB^k} \in \Density(\mathcal{H}_A^{\otimes k}\otimes\mathcal{H}_B^{\otimes k})$, which has fidelity
    \begin{equation}\label{eq:condition}
        \bra{\Phi}_{AB}^{\otimes k}\bar{\rho}_{A^kB^k}\ket{\Phi}_{AB}^{\otimes k}\geq (1-\bar{\varepsilon})^k.
    \end{equation}
    \item With probability $1-p_\acc$, the protocol outputs some fiducial state $\ket{\mathsf{FAIL}}$, indicating that the protocol has failed.
\end{itemize}
\end{definition}

\begin{remark}
The above definition explicitly allows for probabilistic protocols that can output a fiducial failure state. This is in contrast to deterministic protocols, which do not allow for the possibility of such a fiducial state. Probabilistic protocols make an important distinction between two types of errors: heralded failures, where the protocol explicitly outputs $\ket{\mathsf{FAIL}}$, and unheralded errors that contribute to the infidelity of the output state $\rho'$ with respect to the target state. This distinction is well-motivated practically: heralded failures can be addressed by simply rerunning the protocol, as Bell pairs are relatively cheap and fungible resources. In contrast, unheralded errors that escape detection are far more dangerous, particularly in applications like distributed quantum computing, where extremely low infidelities ($\sim 10^{-12}$) are required. Deterministic protocols (see, e.g., Refs.~\cite{RST+18,sheng2010deterministicpurification,sheng2010deterministicpurification_b,huang2022deterministicpurification}), by absorbing all errors directly into the output state, necessarily increase the unheralded error rate. Indeed, any probabilistic protocol can be understood to be a deterministic one that simply outputs the mixed state $\mathcal{L}(\rho) = p_\acc \rho' + (1-p_\acc) \ketbra{\mathsf{FAIL}}$. This new state necessarily has lower fidelity with respect to the desired state compared to $\rho'$.
\end{remark}

One can understand the states resulting from an entanglement distillation protocol as `error corrected' (or `error suppressed') entangled states. Indeed, this is more than a heuristic analogy. An early seminal work~\cite{BDSW96} established that this connection goes in both directions, in the sense that error correction protocols can be used to define entanglement distillation protocols, and vice versa. In keeping with this tradition, we will often use the overbar notation $\bar{x}$ to denote the `logical' version of $x$. Put another way, in the language of entanglement distillation, if $x$ is some property of some noisy input Bell pairs, $\bar{x}$ denotes  the value of property $x$ for the output (distilled) Bell pairs.

\cref{def:ent-distill} includes several important concepts. The key performance metrics that we care about in practice are:
\begin{enumerate}[noitemsep]
    \item \emph{Fidelity improvement}: The protocol takes states of fidelity $(1-\varepsilon)^{n}$ and produces states of fidelity $(1-\bar{\varepsilon})^k$. The relative decrease of $\bar{\varepsilon}$ compared to $\varepsilon$ tells us how much the fidelity of the input improves.
    \item \emph{Resource efficiency}: The ratio $n/k$ represents how many input pairs we sacrifice to produce each output pair in a successful round.
    \item \emph{Success probability}: The parameter $p_\acc$ bounds how often we need to repeat the protocol before succeeding. This directly impacts the total number of input pairs needed.
\end{enumerate}
These figures of merit are not independent --- there are fundamental trade-offs between them that any protocol must navigate. For instance, we can often improve output fidelity by measuring more qubits for error detection, but this reduces both $k$ and the success probability. The art of protocol design lies in finding sweet spots in these trade-offs that are practically useful. To capture these tradeoffs, we define the `overhead' of a protocol.

\begin{definition}[Overhead of an entanglement distillation protocol]\label{def:overhead}
Consider an entanglement distillation protocol parameterized by $(\varepsilon,\bar{\varepsilon},n,k,\mathcal{L})$. Since the protocol is probabilistic, it must be rerun $R$ times to achieve success, where $R$ is a random variable distributed according to the geometric distribution with parameter $p_\acc$. We define a random variable $N = R \cdot n$, which represents the total number of input Bell pairs required to achieve success. We then define the \emph{overhead} of the entanglement distillation protocol as the random variable $\mathcal{O} \coloneqq \frac{N}{k}$, and the rate as $r \coloneqq \mathcal{O}^{-1}$. The expected overhead of the protocol is
\begin{equation}
    \E[\mathcal{O}] = \frac{n}{k \cdot p_\acc}.
\end{equation}
\end{definition}

The random variable $N$ in our definition captures the interplay between success probability and resource costs. It represents the total number of input pairs we need before finally succeeding, accounting for all failed attempts. The expected overhead $\E[\mathcal{O}]$ then gives us a single figure of merit that combines all of these factors --- it tells us the expected number of input pairs needed per output pair, including any pairs consumed in failed attempts. We contrast this with the work of~\citet{regula2023probabilistic}, which considers probabilistic entanglement distillation protocols in the information-theoretic (asymptotic) setting. In this setting, the protocol can be repeated arbitrarily many times until success, without incurring any cost. Consequently, their definition of the rate~\cite[Eq.~(3)]{regula2023probabilistic} does not take the success probability explicitly into account. On the other hand, we explicitly take the success probability into account in our definition of the rate, and thus account for the cost of repeating the protocol. We also comment that Ref.~\cite{watrous2004distillation} has previously considered the number of copies of an input state needed for distilling entanglement in the setting of deterministic protocols.

\begin{remark}[Non-IID states]
    In both \cref{def:ent-distill} and for our main results, the input states need not be an ``independent and identically distributed'' (IID) tensor-product state $\rho_{AB}^{\otimes n}$, which is what one might think of when encountering the phrase ``$n$ noisy Bell pairs''. Nor do the states need to be a tensor product of different bipartite states. The input can be an arbitrary bipartite $2n$-qubit state $\rho_{A^nB^n}$. In particular, this state could result from arbitrary noise applied to $n$ (noiseless) Bell pairs $\Phi_{AB}^{\otimes n}$. This noise could be correlated among the copies, and thus the resulting state need not have a tensor-product form. In this sense, when we speak of ``$n$ noisy Bell pairs'' in the context of the distillation task, we mean an arbitrary bipartite $2n$-qubit state. This is particularly relevant when considering concatenation of our protocol, because of the fact that the output states of previous layers of concatenation---which are the inputs to the next layer of concatenation---will in general have correlations between the qubit pairs. Our work thus follows prior works~\cite{buscemi2010distilling,brandao2008distillation,waeldchen2016renormalizingdistillation}, which have also considered entanglement distillation with correlated input states.
\end{remark}


\subsection{Concatenated protocols} 

While a single round of entanglement distillation can improve fidelity, achieving very high fidelities requires multiple rounds. The natural approach is to \textit{concatenate} the protocol --- that is, to repeatedly apply distillation to the output of previous distillation rounds. In this section, we develop a simple framework for analyzing such concatenated protocols.

Consider a distillation protocol that transforms $n$ noisy Bell pairs with infidelity $\varepsilon$ into $k$ purified pairs with infidelity $\bar{\varepsilon}$, succeeding with probability $p_{\acc}$. As discussed in \cref{def:overhead}, the expected overhead $\E[\mathcal{O}]$ of this protocol is defined as the expected number of input pairs needed per output pair: $\E[\mathcal{O}] = \frac{n}{p_{\acc} \cdot k}$. 


Now, consider concatenating this protocol $L$ times. Each layer $i\in\{1,2,\dotsc,L\}$ implements its own instance of the protocol, potentially with different parameters. The first layer consumes $n_1$ noisy pairs to produce $k_1$ less noisy pairs with probability $p_{\acc,1}$. The second layer then takes $n_2$ of these partially-purified pairs and produces $k_2$ pairs of even higher fidelity with probability $p_{\acc,2}$, and so on. At each stage, the output infidelity $\bar{\varepsilon}_i$ becomes the input infidelity $\varepsilon_{i+1}$ for the next layer. At the very end of this concatenated protocol, we aim to produce $k_L$ pairs with a desired infidelity of $\bar{\varepsilon}_L$. 
Let us now show that the overhead of such a concatenated protocol can be expressed simply in terms of the overheads of the individual layers.

\begin{lemma}
    The expected overhead of a concatenated protocol with $L\in\{1,2,\dotsc\}$ layers is given by the product of the overheads of the individual layers:
    \begin{equation}\label{eq:expected_overhead_concatenated_protocol}
        \E[\mathcal{O}] = \prod_{i=1}^L \E[\mathcal{O}_i] = \prod_{i=1}^L \frac{n_i}{k_ip_{\acc,i}}.
    \end{equation}
\end{lemma}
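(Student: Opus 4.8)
The plan is to proceed by induction on the number of layers $L$, stripping off the outermost (final) layer at each step. The base case $L=1$ is precisely the expected-overhead formula of \cref{def:overhead}, namely $\E[\mathcal{O}_1]=n_1/(k_1 p_{\acc,1})$. For the inductive step, I would introduce for each $j\in\{1,\dotsc,L\}$ the quantity $\mu_j \coloneqq \E[\mathcal{O}^{(j)}]$, defined as the expected number of layer-$1$ input Bell pairs consumed --- counting every pair used in both failed and successful attempts at all layers $\le j$ --- per single accepted output pair of the $j$-layer concatenated protocol. The inductive hypothesis is that $\mu_{L-1}=\prod_{i=1}^{L-1} n_i/(k_i p_{\acc,i})$, and the goal reduces to showing the one-step recursion $\mu_L = \tfrac{n_L}{k_L p_{\acc,L}}\,\mu_{L-1}$.

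The reduction itself is straightforward bookkeeping. The outermost layer runs in repeat-until-success fashion, so the number of attempts $R_L$ it makes before its first success is geometrically distributed with parameter $p_{\acc,L}$, giving $\E[R_L]=1/p_{\acc,L}$. Because each attempt measures (and hence destroys) its inputs, every attempt requires $n_L$ \emph{fresh} output pairs of the $(L-1)$-layer subprotocol; thus the total number $M=n_L R_L$ of $(L-1)$-layer outputs consumed has mean $\E[M]=n_L/p_{\acc,L}$. Writing $N^{(L-1)}_1, N^{(L-1)}_2,\dotsc$ for the independent and identically distributed layer-$1$ costs of producing these successive $(L-1)$-layer outputs (each with mean $\mu_{L-1}$ by the inductive hypothesis and \cref{def:overhead}), the total layer-$1$ cost of the whole $L$-layer protocol --- which yields $k_L$ final accepted pairs on the terminal successful outer run --- is the random sum $\sum_{m=1}^{M} N^{(L-1)}_m$.

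The crux, and the step I expect to require the most care, is to evaluate $\E\big[\sum_{m=1}^{M} N^{(L-1)}_m\big]$ and argue that it equals $\E[M]\,\mu_{L-1}$. This is a random-sum (Wald-type) identity, and it requires the stopping count $M$ to be independent of the summands $\{N^{(L-1)}_m\}$. Establishing this independence is the main obstacle: one must observe that the success or failure of each outer attempt is governed solely by the freshly and independently sampled Clifford $C_{L,\cdot}$ together with the measurement outcomes on that attempt's inputs, whereas $N^{(L-1)}_m$ records only how many layer-$1$ pairs were spent \emph{producing} those inputs; moreover, an accepted $(L-1)$-output is always the state of the first success, so its distribution (and hence the downstream acceptance statistics) does not depend on how many reruns preceded it. With $R_L$, and thus $M$, independent of the cost sequence, conditioning on $M$ via the tower property gives $\E\big[\sum_{m=1}^{M} N^{(L-1)}_m\big]=\E[M]\,\mu_{L-1}=\tfrac{n_L}{p_{\acc,L}}\,\mu_{L-1}$. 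Dividing by the $k_L$ accepted output pairs yields $\mu_L=\tfrac{n_L}{k_L p_{\acc,L}}\,\mu_{L-1}$, and unrolling the induction produces the claimed product $\E[\mathcal{O}]=\prod_{i=1}^L n_i/(k_i p_{\acc,i})$.
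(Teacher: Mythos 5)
Your proof is correct, and while it rests on the same layer-peeling decomposition and the same cross-layer independence fact as the paper, its probabilistic formalization is genuinely different --- and in fact more careful. The paper assigns to each layer $i$ a \emph{single} retry count $R_i$ and writes the total consumption as if every rerun of layer $i$ cost the same $R_i \cdot n_i$ pairs, so that the overhead random variable literally factors as $\mathcal{O} = \prod_i R_i n_i / k_i$; the expectation then factors by mutual independence of the $R_i$. Strictly speaking this is an abuse: distinct reruns of a sub-protocol draw independent retry counts, so the true total cost is a random sum, not a product of per-layer random variables (the identity holds only in expectation, which is all the lemma claims). Your induction handles exactly this point: you model the cost as $\sum_{m=1}^{M} N^{(L-1)}_m$ with $M = n_L R_L$ and apply the tower-property/Wald identity, whose essential hypothesis --- independence of $M$ from the cost sequence --- you justify by observing that an accepted inner-layer output has a fixed conditional distribution regardless of how many reruns preceded it. What the paper's route buys is brevity and an intuitive ``overheads multiply'' picture; what yours buys is rigor at the level of random variables. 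One small inaccuracy in your write-up: the per-pair costs $N^{(L-1)}_m$ are not mutually independent (all $k_{L-1}$ pairs produced in the same successful inner batch share one cost), so calling them i.i.d.\ is wrong; but this is harmless, since your argument never uses independence \emph{among} the summands --- only that each has mean $\mu_{L-1}$ and that $M$ is independent of the whole sequence --- so the conclusion stands.
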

\begin{proof}
To analyze the total overhead of a concatenated protocol, we need to consider two factors. First, each layer $i\in\{1,2,\dotsc,L\}$ has its own overhead, $\mathcal{O}_i = \frac{R_i\cdot n_i}{k_i}$, representing the resource cost of that particular stage of distillation. Second, we need enough successful outputs from each layer to supply the required inputs for the subsequent layer.

To start with, for concreteness, let us work through a two-layer example. Suppose we want to end up with $k_2$ purified pairs at the end of the protocol. By definition, the second layer needs $n_2$ input pairs, and we need to run it $R_2$ times to succeed and produce the required $k_2$ purified pairs. This means the first layer needs to produce $R_2\cdot n_2$ pairs. Since each successful run of the first layer produces $k_1$ pairs, we need to run the first layer $R_2\cdot n_2/k_1$ times. Each one of these runs requires $R_1\cdot n_1$ pairs. Therefore, the total number of pairs required is $R_1\cdot n_1\cdot R_2\cdot n_2/k_1$. Altogether, the overhead is 
\begin{equation}
    \mathcal{O}=\frac{\text{number of input pairs}}{\text{number of output pairs}}=\frac{R_1\cdot n_1\cdot R_2\cdot n_2/k_1}{k_2}=\frac{R_1\cdot n_1}{k_1}\cdot\frac{R_2\cdot n_2}{k_2}=\mathcal{O}_1\cdot\mathcal{O}_2.
\end{equation}
Then, because the random variables $R_1$ and $R_2$ are independent (indeed, the number of runs needed for success of the first layer is independent of the number of runs needed for success of the second layer), we obtain \eqref{eq:expected_overhead_concatenated_protocol} with $L=2$.

The reasoning above for $L=2$ can be straightforwardly generalized to arbitrary $L$. Indeed, the number of required pairs at the very end is $k_L$. The $(L-1)$-st layer should therefore produce $R_L\cdot n_L$ pairs. This in turn means that the $(L-1)$-st layer should be run $R_L\cdot n_L/k_{L-1}$ times, with each successful run consuming $R_{L-1}\cdot n_{L-1}$ pairs, meaning that the number of pairs required in the $(L-1)$-st layer is $R_{L-1}\cdot n_{L-1}\cdot R_L\cdot n_L/k_{L-1}$ --- this is now the number of pairs needed to be produced by the $(L-2)$-st layer. In order for this to happen, the $(L-2)$-st layer must be run $R_{L-1}\cdot n_{L-1}\cdot R_L\cdot n_L/(k_{L-1}\cdot k_{L-2})$ times, with each successful run consuming $R_{L-2}\cdot n_{L-2}$ pairs. The number of pairs required in the $(L-2)$-st layer is therefore $R_{L-2}\cdot n_{L-2}\cdot R_{L-1}\cdot n_{L-1}\cdot R_L\cdot n_L/(k_{L-1}\cdot k_{L-2})$. Carrying on in this manner, we conclude that the overhead is
\begin{equation}
    \mathcal{O}=\frac{\text{number of input pairs}}{\text{number of output pairs}}=\frac{1}{k_L}\frac{\prod_{i=1}^L R_i\cdot n_i}{\prod_{i=1}^{L-1}k_i}=\prod_{i=1}^L\frac{R_i\cdot n_i}{k_i}=\prod_{i=1}^L\mathcal{O}_i,
\end{equation}
which implies the desired result, because the random variables $R_1,R_2,\dotsc,R_L$ are mutually independent.
\end{proof}

For the total overhead to remain bounded as we increase the number of layers (which we need to achieve very high fidelities), the individual layer overheads must decay rapidly enough that their product converges. Specifically, if we want the overhead to be independent of the target fidelity $\varepsilon_f$, we need
    $\prod_{i=1}^{\infty} \E[\mathcal{O}_i] < \infty$.
In the next section, we show that our random bilocal Clifford protocol achieves precisely this behavior --- the overhead at each layer decays very quickly with $i$, allowing us to achieve arbitrarily high fidelities with constant total overhead.

While we have focused on expected overhead, in practice we want guarantees about the actual number of resources required. The following lemma shows that we can achieve close to the expected overhead with high probability.

\begin{lemma}\label{lem:markov}
Consider any distillation protocol with expected overhead $\mathbb{E}[\mathcal{O}]$. We can achieve an overhead of $2\mathbb{E}[\mathcal{O}] \cdot \log_2 \delta^{-1}$ with probability at least $1-\delta$.
\end{lemma}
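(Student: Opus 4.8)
The plan is to treat the overhead $\mathcal{O}$ of one complete execution of the (possibly concatenated) protocol as a single nonnegative random variable with known mean $\E[\mathcal{O}]$, and then to combine a crude Markov tail bound with a standard probability-amplification argument. The difficulty flagged in the caption of \cref{fig:concat} is that $\mathcal{O}$ is a product of geometric random variables, for which sharp tail bounds are awkward; the point of this lemma is that we never actually need such bounds, since Markov's inequality alone suffices once it is boosted by repetition.

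First I would apply Markov's inequality to $\mathcal{O} \geq 0$ to obtain
\begin{equation}
    \Pr\!\left[\mathcal{O} \geq 2\,\E[\mathcal{O}]\right] \leq \frac{\E[\mathcal{O}]}{2\,\E[\mathcal{O}]} = \frac{1}{2}.
\end{equation}
Interpreting this operationally, I run the full protocol but impose a resource cap: abort the attempt as soon as it has consumed $2\,\E[\mathcal{O}] \cdot k$ input pairs without having produced the $k$ output pairs. By the displayed inequality, each such capped attempt produces its output within budget with probability at least $1/2$, and otherwise yields a (heralded) failure.

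Next I would amplify by independent repetition. Set $t = \lceil \log_2 \delta^{-1} \rceil$ and run up to $t$ capped attempts sequentially, stopping at the first success. Because each attempt uses freshly sampled random Cliffords, fresh input pairs, and independent measurement outcomes, the attempts are i.i.d., so the probability that all $t$ fail is at most $2^{-t} \leq \delta$. Thus with probability at least $1-\delta$ some attempt succeeds, and on this success event the total consumption is at most the number of attempts run times the per-attempt cap, namely $2\,\E[\mathcal{O}] \cdot t = 2\,\E[\mathcal{O}] \log_2 \delta^{-1}$ (up to the ceiling), which is precisely the claimed overhead.

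The argument is essentially bookkeeping, so the only point that requires care---and the main conceptual obstacle---is justifying that the capped attempts really are independent and identically distributed despite the nested repeat-until-success structure of a concatenated protocol. This holds because the overhead of a single full execution is a deterministic function of its internal randomness (the Clifford choices and the measurement outcomes), which is resampled independently on each restart; hence capping and restarting yields genuine i.i.d.\ copies of the truncated overhead, and the amplification bound applies verbatim.
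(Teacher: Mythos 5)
Your proposal is correct and follows essentially the same route as the paper's own proof: Markov's inequality gives a per-attempt success probability of at least $1/2$ under a budget of $2\,\E[\mathcal{O}]$, and $\log_2\delta^{-1}$ independent capped attempts amplify this to success probability $1-\delta$ within total overhead $2\,\E[\mathcal{O}]\log_2\delta^{-1}$. Your additional remarks on the ceiling and on why restarts yield i.i.d.\ copies of the truncated overhead are fine but add nothing beyond the paper's argument.
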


\begin{proof}
By Markov's inequality, a single run of the protocol has overhead at most $2\mathbb{E}[\mathcal{O}]$ with probability at least $1/2$. To achieve the claimed overhead, we simply run the protocol until it either succeeds, or exceeds an overhead budget $2\mathbb{E}[\mathcal{O}]$. Since the protocol will stay within the overhead budget with probability $1/2$, using just $\log_2\delta^{-1}$ independent attempts, we will succeed with probability at least $1-\delta$, and each of these attempts costs at most $2\mathbb{E}[\mathcal{O}]$. Therefore, the total overhead is at most $2 \log_2 \delta^{-1} \cdot \mathbb{E}[\mathcal{O}]$.
\end{proof}

\cref{lem:markov} applies equally well to the total overhead of concatenated protocols, because it depends only on Markov's inequality. Given a concatenated protocol with expected overhead $\mathbb{E}[\mathcal{O}] = \prod_{i=1}^L \mathbb{E}[\mathcal{O}_i]$, we can achieve an overhead of at most $2\mathbb{E}[\mathcal{O}]\cdot\log_2\delta^{-1}$ with probability $1-\delta$.

\section{Analysis of the random bilocal Clifford protocol}

To start with, \cref{eq:Clifford_twirl_2} tells us that the random Clifford operation twirls the input state into a mixture of just two components: the maximally mixed state and the desired maximally entangled state. The relative weights of these components depend only on the initial fidelity $f^n$. We can equivalently express the state in \eqref{eq:Clifford_twirl_2} as follows: 
\begin{equation}\label{eq:Clifford_twirl_2_supp}
\begin{aligned}
    \E_{C}[(C \otimes C^*) \rho_{A^nB^n} (C \otimes C^*)^\dagger] &= \frac{1}{1-4^{-n}} \qty((1-f^n) \frac{\id_{AB}^{\otimes n}}{4^n} + (f^n - 4^{-n}) \Phi_{AB}^{\otimes n})\\
    & = f^n \ketbra{I^{\otimes n}} + \frac{1-f^n}{4^n-1} \sum_{P \in \mathcal{P}_n \setminus \qty{I^{\otimes n}}} \ketbra{P},
\end{aligned}
\end{equation}
where in the second line we have let $\ket{P} \coloneqq (P\otimes\id)\ket{\Phi}^{\otimes n}$ for every $n$-qubit Pauli operator $P \in \mathcal{P}_n\coloneqq\{I,X,Y,Z\}^{\otimes n}$. This representation will be particularly useful when analyzing error correction strategies in the active setting of \cref{alg:random_bilocal_Clifford}.

\subsection{Passive setting}

Let us first introduce two key probability measures that are central to our analysis. We denote by $p_{\acc}$ the probability that the protocol accepts a given input state --- that is, the probability that the measurement outcomes match (in passive mode). Formally,
\begin{equation}\label{eq:pr_accept}
p_{\acc} = \Tr(\sigma_{A^nB^n}(\Pi_{A^mB^m}\otimes \id_{A^kB^k})),
\end{equation}
where $\Pi_{A^mB^m} = \sum_{\vec{x}\in\{0,1\}^m}\ketbra{\vec{x},\vec{x}}$ projects onto the space of matching measurement outcomes, and
\begin{equation}\label{eq:Clifford_twirl}
    \sigma_{A^nB^n} \coloneqq \mathbb{E}_{C}[(C\otimes C^{*})\rho_{A^nB^n}(C\otimes C^{*})^{\dag}].
\end{equation}
Additionally, we define $p_{\acc \land \Phi}$ as the probability of both accepting \emph{and} obtaining the target maximally entangled state $\Phi_{AB}^{\otimes k}$. It holds that
\begin{equation}\label{eq:pr_accept_and_Phi}
p_{\acc \land \Phi} = \Tr(\sigma_{A^nB^n} \left(\Pi_{A^mB^m} \otimes \Phi_{AB}^{\otimes k}\right)).
\end{equation}
The ratio of these probabilities determines the fidelity of the output state $\bar{\rho}_{A^kB^k}$ when the protocol succeeds:
\begin{equation}\label{eq:f-out}
\bra{\Phi}_{AB}^{\otimes k}\bar{\rho}_{A^kB^k}\ket{\Phi}_{AB}^{\otimes k} = \frac{p_{\acc \land \Phi}}{p_{\acc}} = (1-\bar{\varepsilon})^k ,
\end{equation}
where $\bar{f}\coloneqq 1-\bar{\varepsilon}$ is the single-pair fidelity of the output state.

\begin{lemma}\label{lem:probs_supp}
    The acceptance probabilities $p_{\acc}$ and $p_{\acc\wedge\Phi}$ \cref{alg:random_bilocal_Clifford} (passive setting) are
    \begin{align}
        p_{\acc}&=\frac{f^n-4^{-n}}{1-4^{-n}} + \frac{1-f^n}{1-4^{-n}} \frac{2^m \cdot 4^k}{4^n}, \label{eq:pr_accept_2_supp} \\
        p_{\acc\wedge\Phi}&=\frac{f^n-4^{-n}}{1-4^{-n}}+\frac{1-f^n}{1-4^{-n}}\frac{2^{m}}{4^n}. \label{eq:pr_accept_and_phi_2_supp}
    \end{align}
\end{lemma}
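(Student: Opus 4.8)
The plan is to evaluate the two traces defining $p_{\acc}$ in \eqref{eq:pr_accept} and $p_{\acc\wedge\Phi}$ in \eqref{eq:pr_accept_and_Phi} directly, exploiting the fact that the twirled state $\sigma_{A^nB^n}$ of \eqref{eq:Clifford_twirl_2_supp} is diagonal in the Bell basis $\{\ket{P}\}_{P\in\mathcal{P}_n}$. By linearity of the trace the Clifford-averaged acceptance probability equals the trace against $\sigma_{A^nB^n}$, and since $\sigma_{A^nB^n}$ has no off-diagonal terms, both $p_{\acc}$ and $p_{\acc\wedge\Phi}$ reduce to weighted sums of the diagonal matrix elements $\bra{P}(\Pi_{A^mB^m}\otimes M)\ket{P}$, with $M=\id_{A^kB^k}$ for $p_{\acc}$ and $M=\Phi_{AB}^{\otimes k}$ for $p_{\acc\wedge\Phi}$. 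The identity term carries weight $f^n$, and each of the remaining $4^n-1$ Bell states carries weight $(1-f^n)/(4^n-1)$.

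The key step is a single-pair calculation. Writing $\ket{P}=\bigotimes_{i=1}^n\ket{\Phi_{P_i}}$ with $\ket{\Phi_{P_i}}=(P_i\otimes\id)\ket{\Phi}$, each matrix element factorizes over the $n$ pairs, since $\Pi_{A^mB^m}$ acts only on the first $m$ pairs and $M$ only on the last $k$. For the matching-outcome projector $\Pi^{(1)}=\ketbra{00}+\ketbra{11}$ on a single pair, a direct computation yields $\bra{\Phi_{P_i}}\Pi^{(1)}\ket{\Phi_{P_i}}=1$ when $P_i\in\{I,Z\}$ and $0$ when $P_i\in\{X,Y\}$: the phase-type errors $I,Z$ preserve the computational-basis correlations of $\ket{00}\pm\ket{11}$ shared by Alice and Bob, whereas $X,Y$ map these onto the anticorrelated states $\ket{01},\ket{10}$ and are therefore detected. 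For the last $k$ pairs, $\bra{\Phi_{P_i}}\id\ket{\Phi_{P_i}}=1$ trivially, while $\bra{\Phi_{P_i}}\Phi^{(1)}\ket{\Phi_{P_i}}=\abs{\braket{\Phi}{\Phi_{P_i}}}^2$ equals $1$ if $P_i=I$ and $0$ otherwise, by orthonormality of the Bell basis.

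Combining these single-pair factors shows that each diagonal matrix element is an indicator function of $P$: for $p_{\acc}$ it equals $1$ precisely when $P_i\in\{I,Z\}$ for all $i\le m$, with the last $k$ components unconstrained; for $p_{\acc\wedge\Phi}$ it equals $1$ precisely when additionally $P_i=I$ for all $i>m$. What remains is a counting argument. There are $2^m 4^k$ Paulis of the first kind and $2^m$ of the second kind, and in each case exactly one of them is the identity $I^{\otimes n}$ already weighted by $f^n$; subtracting it from the uniform sum gives $p_{\acc}=f^n+\frac{1-f^n}{4^n-1}(2^m 4^k-1)$ and $p_{\acc\wedge\Phi}=f^n+\frac{1-f^n}{4^n-1}(2^m-1)$. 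A short rearrangement---using $f^n-\frac{1-f^n}{4^n-1}=\frac{f^n 4^n-1}{4^n-1}$ and dividing numerator and denominator by $4^n$---recasts these into the claimed closed forms \eqref{eq:pr_accept_2_supp} and \eqref{eq:pr_accept_and_phi_2_supp}.

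I do not anticipate a serious obstacle, since the diagonality of $\sigma_{A^nB^n}$ eliminates all cross terms and every contributing factor is an elementary single-pair overlap. The only point demanding care is the bookkeeping in the counting step: correctly separating the constraints imposed on the first $m$ versus the last $k$ tensor factors, and remembering to remove the single identity contribution already accounted for by the $f^n$ weight, so that the final algebra lands exactly on \eqref{eq:pr_accept_2_supp} and \eqref{eq:pr_accept_and_phi_2_supp}.
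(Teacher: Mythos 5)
Your proof is correct and takes essentially the same route as the paper: the paper's (one-line) proof is precisely to insert the Bell-diagonal twirled state of \cref{eq:Clifford_twirl_2_supp} into the trace expressions \eqref{eq:pr_accept} and \eqref{eq:pr_accept_and_Phi}, and your single-pair overlap computation together with the $2^m\cdot 4^k$ (resp.\ $2^m$) Pauli count is exactly the calculation the paper leaves implicit (and echoes in its discussion following \cref{cor:simple-f_supp}). Your bookkeeping of the identity term and the final algebraic rearrangement both check out, so nothing is missing.
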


\begin{proof}
    This follows immediately using \cref{eq:Clifford_twirl_2_supp,eq:pr_accept,eq:pr_accept_and_Phi}.
\end{proof}

\begin{corollary}[Output fidelity of \cref{alg:random_bilocal_Clifford}, passive setting]\label{cor:simple-f_supp}
For every input state with fidelity at least $f^n \coloneqq (1-\varepsilon)^n$, the output fidelity when the protocol succeeds satisfies
\begin{equation}\label{eq:new-fid_supp}
    (1-\bar{\varepsilon})^k \geq \frac{f^n}{f^n + 2^{-m}(1-f^n)} \geq 1 - 2^{-m}(f^{-n}-1).
\end{equation}
\end{corollary}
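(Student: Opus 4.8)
The plan is to start from the identity $(1-\bar\varepsilon)^k = p_{\acc\wedge\Phi}/p_{\acc}$ established in \eqref{eq:f-out}, substitute the closed forms for $p_{\acc}$ and $p_{\acc\wedge\Phi}$ from \cref{lem:probs_supp}, and then prove the two inequalities separately. For the first inequality I would bound the numerator from below and the denominator from above by the simple quantities $f^n$ and $f^n + 2^{-m}(1-f^n)$; for the second I would invoke an elementary scalar inequality of the form $\tfrac{1}{1+x}\ge 1-x$.

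The key simplification that makes everything clean is the observation that, since $m=n-k$, we have $\frac{2^m\cdot 4^k}{4^n}=2^{k-n}=2^{-m}$, so that $p_{\acc}=\frac{f^n-4^{-n}}{1-4^{-n}}+\frac{1-f^n}{1-4^{-n}}\,2^{-m}$. With this in hand the two required bounds reduce to single-line sign checks. Concretely, I would compute
\[
 p_{\acc\wedge\Phi}-f^n=\frac{1-f^n}{1-4^{-n}}\cdot\frac{2^m-1}{4^n}\ge 0,
\]
which gives $p_{\acc\wedge\Phi}\ge f^n$, and
\[
 p_{\acc}-\bigl(f^n+2^{-m}(1-f^n)\bigr)=\frac{4^{-n}(1-f^n)}{1-4^{-n}}\bigl(2^{-m}-1\bigr)\le 0,
\]
which gives $p_{\acc}\le f^n+2^{-m}(1-f^n)$. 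Both signs follow from $0\le f^n\le 1$ and $m=n-k\ge 0$ (so that $2^m\ge 1\ge 2^{-m}$), together with $1-4^{-n}>0$. Taking the ratio then yields the first inequality $(1-\bar\varepsilon)^k\ge \frac{f^n}{f^n+2^{-m}(1-f^n)}$.

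For the second inequality I would write $a\coloneqq f^n\in(0,1]$ and $t\coloneqq 2^{-m}\in(0,1]$, so that the claim becomes $\frac{a}{a+t(1-a)}\ge 1-t(a^{-1}-1)$. Setting $u\coloneqq t(1-a)\ge 0$, the right-hand side equals $1-u/a=(a-u)/a$, so it suffices to show $\frac{a}{a+u}\ge\frac{a-u}{a}$. Since both denominators are positive, cross-multiplying reduces this to $a^2\ge (a-u)(a+u)=a^2-u^2$, i.e.\ $u^2\ge 0$, which is immediate (with equality exactly when $u=0$). Chaining the two inequalities completes the proof.

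The calculation is entirely routine, so there is no real obstacle; the only points requiring a little care are the simplification $2^m 4^k/4^n=2^{-m}$ and the bookkeeping of signs, all of which rest on the structural facts $k\le n$ and $f^n\le 1$. The one conceptual point worth flagging explicitly in the write-up is that the upper bound on $p_{\acc}$ is exactly the ``false positive'' contribution $2^{-m}(1-f^n)$, i.e.\ the probability of accepting an erroneous state because a nonzero scrambled error happens to produce matching computational-basis outcomes on the measured qubits.
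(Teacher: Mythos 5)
Your proposal is correct and follows essentially the same route as the paper's proof: lower-bound $p_{\acc\wedge\Phi}$ by $f^n$, upper-bound $p_{\acc}$ by $f^n+2^{-m}(1-f^n)$ using the exact expressions from \cref{lem:probs_supp}, then apply the elementary inequality $\tfrac{1}{1+x}\geq 1-x$ (your cross-multiplication argument is just the standard proof of that inequality). Your upfront simplification $2^m\cdot 4^k/4^n=2^{-m}$ and explicit sign checks are algebraically equivalent to the paper's manipulations, and you even flag the same ``false positive'' interpretation of the $2^{-m}(1-f^n)$ term.
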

\begin{proof}
This follows from plugging the result of \cref{eq:pr_accept_2_supp,eq:pr_accept_and_phi_2_supp} into the definition of $(1-\bar{\varepsilon})^k$ in \cref{eq:f-out}. From \eqref{eq:pr_accept_and_phi_2_supp}, we find that the numerator $p_{\acc\wedge\Phi}$ can be simply bounded from below as follows:
\begin{align}
    p_{\acc\wedge\Phi}&=\frac{f^n-4^{-n}}{1-4^{-n}} + \frac{1-f^n}{1-4^{-n}} \frac{2^m}{4^n}\\
    &=f^n+(1-f^n)\frac{2^m-1}{4^n-1}\\
    &\geq f^n.
\end{align}
Meanwhile, from \eqref{eq:pr_accept_2_supp}, the denominator $p_{\acc}$ can be bounded from above as follows:
\begin{align}
    p_{\acc}&=\frac{f^n-4^{-n}}{1-4^{-n}} + \frac{1-f^n}{1-4^{-n}} \frac{2^m \cdot 4^k}{4^n}\\
    &=f^n+(1-f^n)\frac{2^m\cdot 4^k-1}{4^n-1}\label{eq:pr_accept_3}\\
    &=f^n+(1-f^n)\frac{2^{-m}-4^{-n}}{1-4^{-n}}\\
    &=f^n+(1-f^n)2^{-m}\frac{1-2^{-2n+m}}{1-2^{-2n}}\\
    &\leq f^n+(1-f^n)2^{-m},
\end{align}
where for the final inequality we used the fact that $\frac{1-2^{-2n+m}}{1-2^{-2n}}\leq 1$. The term $2^{-m}(1-f^n)$ can be understood as a `false positive' event, meaning the input state had some non-trivial error, but the protocol nevertheless detected no error. One way to understand this is that $\Pi_{AB}$ can also be written $\frac{\id+Z_A Z_B}{2}$, hence it only allows Paulis $P$ for which the first $m$ characters are all either $I$ or $Z$ to pass. There are $2^m \cdot 4^k$ such Paulis, while there are $4^n$ total Paulis, hence the term $2^{-m}$. Combining the bounds on the numerator and denominator leads to the first inequality in \eqref{eq:new-fid_supp}. The second inequality follows readily from the fact that $\frac{1}{1+x}\geq 1-x$ for $|x|\leq 1$.
\end{proof}

\begin{proposition}
    Consider the random bilocal Clifford protocol (\cref{alg:random_bilocal_Clifford}) in the passive setting. For
    \begin{equation}\label{eq:fid_improvement_condition_supp}
        m\geq\log_2\left(\frac{1-f^n}{f^{n-1}(1-f)}\right),
    \end{equation}
    where $m$ is the number of measured qubits, the fidelity of the output state exceeds the input single-pair fidelity $f$.
\end{proposition}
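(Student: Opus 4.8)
The plan is to reduce the claim to a single algebraic rearrangement of the tight output-fidelity bound already established in \cref{cor:simple-f_supp}, avoiding any fresh computation from the twirled state. The key observation is that the ``fidelity of the output state'' is $\bra{\Phi}_{AB}^{\otimes k}\bar{\rho}_{A^kB^k}\ket{\Phi}_{AB}^{\otimes k}=\bar{f}^k$, and the first (tight) inequality of \eqref{eq:new-fid_supp},
\begin{equation}
  \bar{f}^k \;\geq\; \frac{f^n}{f^n + 2^{-m}(1-f^n)},
\end{equation}
already packages the exact expressions for $p_{\acc\wedge\Phi}$ and $p_{\acc}$ from \cref{lem:probs_supp}. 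So rather than re-derive anything, I would simply demand that this lower bound exceed the input single-pair fidelity $f$ and read off the resulting threshold on $m$.

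Concretely, the first step is to impose $\frac{f^n}{f^n+2^{-m}(1-f^n)}\geq f$ and clear denominators. Since the denominator is positive and $0<f<1$ (so $1-f^n>0$), this is equivalent to $f^n(1-f)\geq f\,2^{-m}(1-f^n)$, and dividing through by $f(1-f^n)$ isolates $2^{-m}\leq \frac{f^{n-1}(1-f)}{1-f^n}$, i.e.\ $2^m\geq \frac{1-f^n}{f^{n-1}(1-f)}$. Taking $\log_2$ yields exactly the stated threshold \eqref{eq:fid_improvement_condition_supp}. The strict inequality (``exceeds'') then follows because the upper bound $p_{\acc}\leq f^n+2^{-m}(1-f^n)$ invoked in \cref{cor:simple-f_supp} is in fact strict here: the relaxation factor $\frac{1-2^{-2n+m}}{1-2^{-2n}}$ is strictly below $1$ for $m\geq 1$, and the hypothesis forces $m\geq 1$ since $\frac{1-f^n}{f^{n-1}(1-f)}=\sum_{j=0}^{n-1}f^{-j}>1$ for $n\geq 2$.

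Two points deserve care rather than effort. First, I would deliberately start from the tighter first inequality of \eqref{eq:new-fid_supp} and not the Bernoulli-relaxed form $1-2^{-m}(f^{-n}-1)$: only the former closes to the exact threshold with denominator $f^{n-1}$, whereas the looser bound produces $f^{n}$ instead. Second, the resulting threshold is manifestly independent of $k$, which is consistent precisely because the comparison is $\bar{f}^k>f$; this is the stronger statement, since $\bar{f}^k\leq\bar{f}$, and hence it also implies the per-pair improvement $\bar{f}>f$. (If one instead insisted on stating the result directly as $\bar{f}>f$, the analogous computation would give the $k$-dependent condition $2^m\geq \frac{1-f^n}{(1-f^k)f^{n-k}}$, whose right-hand side is decreasing in $k$ and therefore maximized at $k=1$, recovering the same threshold as a worst-case bound.) I do not anticipate a genuine obstacle: once the correct tight bound and the interpretation of ``output fidelity'' as $\bar{f}^k$ are fixed, the argument is a two-line rearrangement, and the only real decision is selecting that bound so the algebra lands exactly on $f^{n-1}(1-f)$.
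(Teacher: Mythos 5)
Your proposal is correct and follows essentially the same route as the paper's own proof, which likewise starts from the first (tight) inequality of \eqref{eq:new-fid_supp}, imposes $\frac{f^n}{f^n+2^{-m}(1-f^n)}\geq f$, and rearranges to obtain the threshold on $m$. Your additional observations --- the strictness of the bound (via the factor $\frac{1-2^{-2n+m}}{1-2^{-2n}}<1$ for $m\geq 1$) justifying ``exceeds,'' and the remark that $\bar{f}^k>f$ is the stronger, $k$-independent statement --- are sound refinements of details the paper leaves implicit.
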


\begin{proof}
    The output fidelity is given by \cref{eq:f-out}. 
    Using the first inequality in \eqref{eq:new-fid_supp}, the condition $\frac{f^n}{f^n+2^{-m}(1-f^n)}\geq f$ 
    readily leads to the claimed condition on $m$.
\end{proof}

At the value of $m$ specified in the right-hand side of \eqref{eq:fid_improvement_condition_supp}, the success probability of the protocol is
\begin{align}
    p_{\acc}
    &=\frac{f^{n-1}-4^{-n}}{1-4^{-n}},
\end{align}
from which we obtain the expected overhead needed to achieve an improvement in fidelity:
\begin{equation}
    \mathbb{E}[\mathcal{O}]=\frac{n(1-4^{-n})}{(n-\log_2[(1-f^n)/(f^{n-1}(1-f))])(f^{n-1}-4^{-n})}.
\end{equation}
At the critical value of $m$, namely $m=-n\log_2 f$, the success probability is
\begin{equation}
    p_{\acc}=\frac{1-(4f)^n(2-f^n)}{1-4^n},
\end{equation}
and the corresponding expected overhead is
\begin{equation}
    \mathbb{E}[\mathcal{O}]=\frac{1-4^n}{(1+\log_2f)(1-(4f)^n(2-f^n))}.
\end{equation}

\begin{proposition}\label{lem:settings} 
Assume we are given access to Bell pairs with infidelity $\varepsilon \leq 0.05$. If we set
\begin{equation}\label{eq:settings}
    n=\varepsilon^{-1/2} \qq{and} m=\log_2 \varepsilon_f^{-1},
\end{equation}
then for every target infidelity $\varepsilon_f \geq 2^{-\varepsilon^{-1/3}}$, the random Clifford protocol distills $\varepsilon_f$ infidelity Bell pairs with an expected overhead bounded from above as follows:
\begin{equation}\label{eq:single-overhead}
\mathbb{E}[\mathcal{O}]\leq \exp(2\varepsilon^{1/6}).  
\end{equation}
Furthermore, for $\varepsilon\leq 0.001$, the output infidelity is lower than the input infidelity $\varepsilon$.
\end{proposition}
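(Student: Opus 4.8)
The plan is to feed the chosen parameters directly into the two closed-form ingredients already established: the output-fidelity bound $(1-\bar{\varepsilon})^k \geq 1 - 2^{-m}(f^{-n}-1)$ from \cref{cor:simple-f_supp}, and the acceptance lower bound $p_{\acc}\geq f^n$ underlying \cref{lem:probs_supp}, which gives $\E[\mathcal{O}]=n/(kp_{\acc})\leq (n/k)f^{-n}$. With $n=\varepsilon^{-1/2}$ and $m=\log_2\varepsilon_f^{-1}$ we immediately get $2^{-m}=\varepsilon_f$, and the hypothesis $\varepsilon_f\geq 2^{-\varepsilon^{-1/3}}$ rewrites as $m\leq\varepsilon^{-1/3}$, so that $m/n\leq\varepsilon^{1/6}$ and $k=n-m\geq n(1-\varepsilon^{1/6})$. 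These three reductions are the backbone of both claims.

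For the infidelity, I would control $f^{-n}=(1-\varepsilon)^{-\varepsilon^{-1/2}}$ using $-\ln(1-\varepsilon)\leq \varepsilon/(1-\varepsilon)$, which yields $\ln f^{-n}\leq \varepsilon^{1/2}/(1-\varepsilon)$ and hence $f^{-n}-1=O(\varepsilon^{1/2})$ for $\varepsilon\leq 0.05$. Substituting into \cref{cor:simple-f_supp} gives $(1-\bar{\varepsilon})^k\geq 1-\varepsilon_f(f^{-n}-1)\geq 1-O(\varepsilon_f\varepsilon^{1/2})$. To pass from this overall fidelity to the single-pair infidelity I would apply Bernoulli's inequality $(1-y)^{1/k}\geq 1-y/k$, obtaining $\bar{\varepsilon}\leq O(\varepsilon_f\varepsilon^{1/2})/k = O\!\left(\varepsilon_f\varepsilon/(1-\varepsilon^{1/6})\right)$. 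Since $\varepsilon\leq 0.05$ makes the prefactor less than $1$, this is at most $\varepsilon_f$, proving the target is met; and for $\varepsilon\leq 0.001$ the same bound gives $\bar{\varepsilon}<\varepsilon$ (this is the natural regime, since $m\geq 1$ forces $\varepsilon_f\leq\tfrac12$), which is the ``furthermore'' claim.

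The overhead is the real work. Here I must show $\frac{1}{1-\varepsilon^{1/6}}\,(1-\varepsilon)^{-\varepsilon^{-1/2}}\leq e^{2\varepsilon^{1/6}}$ throughout $(0,0.05]$. Taking logarithms and reusing the bound on $f^{-n}$ above, the substitution $u=\varepsilon^{1/6}\in(0,\,0.05^{1/6}]$ reduces everything to the single-variable inequality $\phi(u)\coloneqq -\ln(1-u)+\frac{u^3}{1-u^6}-2u\leq 0$ on $(0,u_{\max}]$ with $u_{\max}=0.05^{1/6}\approx 0.607$. I would prove this by noting $\phi(0)=0$, $\phi'(0)=-1<0$, that $\phi'$ is increasing (both $\frac{1}{1-u}$ and $\frac{3u^2(1+u^6)}{(1-u^6)^2}$ are monotone increasing on $(0,1)$, so $\phi'$ changes sign at most once, from negative to positive), and finally that $\phi(u_{\max})<0$ by direct evaluation. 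Since $\phi$ therefore first decreases and then increases while remaining below zero at the right endpoint, it stays negative on all of $(0,u_{\max}]$.

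The main obstacle is precisely this last inequality. It is genuinely tight: at $\varepsilon=0.05$ the left side $\frac{1}{1-\varepsilon^{1/6}}f^{-n}$ is about $3.20$ while $e^{2\varepsilon^{1/6}}\approx 3.37$, leaving only a few percent of slack. Consequently neither the constant $2$ in the exponent nor the threshold $\varepsilon\leq 0.05$ (equivalently $u_{\max}<2/3$) is cosmetic — they are calibrated exactly so that $\phi\leq 0$. The subtlety that $\phi$ is \emph{not} monotone (it dips below zero and climbs back, ending just barely negative) is what makes a naive endpoint check insufficient and forces the $\phi'$-monotonicity argument above. By contrast, the infidelity estimate is routine once the $O(\varepsilon^{1/2})$ bound on $f^{-n}-1$ is in hand.
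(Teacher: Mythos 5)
Your proof follows the same skeleton as the paper's: plug $n=\varepsilon^{-1/2}$, $m=\log_2\varepsilon_f^{-1}$ into the fidelity bound of \cref{cor:simple-f_supp} and the acceptance bound $p_{\acc}\geq f^n$, use $m/n\leq\varepsilon^{1/6}$ to control $n/k$, and reduce the overhead claim to a one-variable numerical inequality on $(0,0.05]$. Your handling of that last inequality is actually \emph{more} rigorous than the paper's: the paper passes to $p_{\acc}\geq 1-\sqrt{\varepsilon}$ (Bernoulli with exponent $\geq 1$, which is valid) and then simply asserts that $\tfrac{1}{(1-\varepsilon^{1/6})(1-\sqrt{\varepsilon})}\leq e^{2\varepsilon^{1/6}}$ holds for $\varepsilon\leq 0.05$, whereas your argument --- $\phi'(0)<0$, $\phi'$ increasing so it changes sign at most once, $\phi(u_{\max})<0$ by evaluation --- is a genuine proof of (a slightly sharper form of) that step, and your numerics ($\approx 3.20$ vs.\ $\approx 3.37$, slack of a few percent) are correct.

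The genuine error is in the fidelity step: the inequality you attribute to Bernoulli, $(1-y)^{1/k}\geq 1-y/k$, is backwards. Bernoulli reverses for exponents in $(0,1]$; by convexity of $t\mapsto(1-y)^t$ one has $(1-y)^{1/k}\leq 1-y/k$ (e.g.\ $y=3/4$, $k=2$: $1/2<5/8$). The patch is $(1-y)^{1/k}=e^{\ln(1-y)/k}\geq 1+\tfrac{1}{k}\ln(1-y)\geq 1-\tfrac{y}{k(1-y)}$, costing a factor $(1-y)^{-1}$; note also that for the main claim $\bar{\varepsilon}\leq\varepsilon_f$ you never needed the $1/k$ gain at all, since $\bar{\varepsilon}\leq 1-(1-y)^{1/k}\leq y=\varepsilon_f(f^{-n}-1)\leq 2\sqrt{\varepsilon}\,\varepsilon_f\leq\varepsilon_f$ (the paper avoids root-taking entirely by showing $1-2\sqrt{\varepsilon}\,\varepsilon_f\geq 1-\varepsilon_f\geq(1-\varepsilon_f)^k$). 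The error matters for your ``furthermore'' argument, which does lean on dividing by $k$: there you need the prefactor in $\bar{\varepsilon}\leq O\big(\varepsilon_f\varepsilon/(1-\varepsilon^{1/6})\big)$ to be below $1$ with $\varepsilon_f$ as large as $1/2$, and this is constant-sensitive --- it holds with the constant implied by your own bound $\ln f^{-n}\leq\sqrt{\varepsilon}/(1-\varepsilon)$ (about $1.02$ at $\varepsilon=10^{-3}$) but fails with the paper's looser constant $2$, so the big-$O$ hides a step that must actually be checked; moreover your reading needs the extra hypothesis $\varepsilon_f\leq 1/2$, which is not in the proposition (it is only your ``$m\geq 1$'' aside). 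The paper's route for this last claim is far simpler and sidesteps all of this: it just verifies $2^{-\varepsilon^{-1/3}}\leq\varepsilon$ for $\varepsilon\leq 0.001$, i.e.\ the smallest admissible target already lies below the input infidelity, so improvement follows from the main claim with no new estimate.
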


\begin{proof}
First, we consider the constraint that we must output at least one Bell pair, meaning $k \geq 1$. This constraint on $k$, combined with \eqref{eq:settings}, translates to the constraint $\varepsilon_f \geq 2^{1-1/\sqrt{\varepsilon}}$. Next, we want the output fidelity to satisfy our target, i.e., $(1-\varepsilon_f)^k\leq (1-\bar{\varepsilon})^k$. Using the bound from \cref{eq:new-fid_supp}, this will be satisfied if we require
\begin{equation}\label{cond2}
    1-\left((1-\varepsilon)^{-1/\sqrt{\varepsilon}}-1\right)\varepsilon_f \geq (1-\varepsilon_f)^k.
\end{equation}
For $\varepsilon \leq 0.4$, $(1-\varepsilon)^{-1/\sqrt{\varepsilon}}-1$ is bounded from above by $2\sqrt{\varepsilon}$. This leads to the simpler sufficient condition for the output fidelity to satisfy our target:
\begin{equation}
    1-2\sqrt{\varepsilon}\varepsilon_f \geq (1-\varepsilon_f)^k,
\end{equation}
which is satisfied for $k \geq 1$ when $\varepsilon \leq 1/4$. (Since $(1-\varepsilon_f)^k$ decreases with $k$, it holds that $1-\varepsilon_f\geq(1-\varepsilon_f)^k$ for all $k\geq 1$.) Furthermore, we can bound the success probability as
\begin{equation}
    p_{\acc} \geq (1-\varepsilon)^n = (1-\varepsilon)^{1/\sqrt{\varepsilon}} \geq 1-\sqrt{\varepsilon},
\end{equation} 
where the first inequality follows readily from \eqref{eq:pr_accept_3} and the fact that $f\equiv 1-\varepsilon$.

Now we can analyze the overhead. Using our expressions for $n$, $k=n-m$, and $p_{\acc}$, we obtain
\begin{equation}
\mathbb{E}[\mathcal{O}] \leq \frac{1/\sqrt{\varepsilon}}{(1/\sqrt{\varepsilon}+\log\varepsilon_f)(1-\sqrt{\varepsilon})} = \frac{1}{1-\sqrt{\varepsilon}\log_2(1/\varepsilon_f)}\frac{1}{1-\sqrt{\varepsilon}}.  
\end{equation}
If we were to set $\varepsilon_f = 2^{1-1/\sqrt{\varepsilon}}$, the overhead would grow unbounded as $\varepsilon$ approaches zero. This motivates our more stringent condition $\varepsilon_f \geq 2^{-\varepsilon^{-1/3}}$. For $\varepsilon \leq 0.1$, this stronger condition implies our earlier requirement $\varepsilon_f \geq 2^{1-1/\sqrt{\varepsilon}}$. With this choice, we can bound the overhead as
\begin{equation}
\mathbb{E}[\mathcal{O}] \leq \frac{1}{1-\varepsilon^{1/6}}\frac{1}{1-\varepsilon^{1/2}} \leq e^{2\varepsilon^{1/6}},
\end{equation}
where the final inequality holds for $\varepsilon \leq 0.05$.

Finally, let us note that improvement in infidelity is possible for $\varepsilon\leq 0.001$. Indeed, it is straightforward to verify that $2^{-\varepsilon^{-1/3}}\leq\varepsilon$ for $\varepsilon\leq 0.001$.
\end{proof}

We now combine \cref{lem:settings} with a general analysis of concatenated protocols, which establishes our main result.

\begin{theorem}[Concatenated passive bilocal random Clifford protocol]\label{thm:bilocal_Clifford_concat_supp}
Let $\varepsilon_0 \leq 0.0006$ be an initial infidelity and $\varepsilon_f$ be any target infidelity satisfying $0 < \varepsilon_f < \varepsilon_0$. The concatenated passive bilocal Clifford protocol (\cref{alg:random_bilocal_Clifford}, passive setting) succeeds with probability $1 - \delta$ and it requires:
\begin{enumerate}[itemsep=0.3ex]
    \item an overhead of $\mathcal{O} \leq 411\log_2(\delta^{-1})$;
    \item a number of input Bell pairs that is at most $411 (\log_2 \varepsilon_f^{-1})^{3/2} \log_2(\delta^{-1})$;
    \item Alice and Bob to hold at most $(\log_2 \varepsilon_f^{-1})^{3/2}$ Bell pairs in memory at any given time;
    \item $L$ layers of concatenation, where $L \leq O(\log_2^*(\varepsilon_f^{-1}))$;
    \item at most $O(\poly \log \varepsilon_f^{-1})$ gates, which can be organized to run in $O(\poly \log \log \varepsilon_f^{-1})$ depth.
\end{enumerate}
\end{theorem}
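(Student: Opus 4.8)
The plan is to instantiate \cref{lem:settings} at every layer of concatenation, feeding the output infidelity of one layer into the input of the next, and then to control each of the five claimed quantities through the resulting recursion on the infidelities. Concretely, I would build a sequence $\varepsilon_0 > \varepsilon_1 > \cdots > \varepsilon_L$ by setting each layer's target to the most aggressive value permitted by \cref{lem:settings}, namely $\varepsilon_{i+1} = 2^{-\varepsilon_i^{-1/3}}$, with layer parameters $n_i = \varepsilon_i^{-1/2}$ and $m_i = \log_2\varepsilon_{i+1}^{-1} = \varepsilon_i^{-1/3}$. The first thing to verify is that this schedule is admissible: since $\varepsilon_0 \leq 0.0006 \leq 0.001$, \cref{lem:settings} gives $\varepsilon_1 = 2^{-\varepsilon_0^{-1/3}} \leq \varepsilon_0$, and iterating keeps every $\varepsilon_i$ strictly decreasing, hence below $0.001 \leq 0.05$, so the per-layer overhead bound $\E[\mathcal{O}_i] \leq \exp(2\varepsilon_i^{1/6})$ and the fidelity guarantee both apply at every step.

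For the overhead (item 1), I would combine the multiplicativity of concatenated overhead \eqref{eq:expected_overhead_concatenated_protocol} with this per-layer bound:
\begin{equation}
    \E[\mathcal{O}] = \prod_{i=0}^{L-1} \E[\mathcal{O}_i] \leq \exp\left(2\sum_{i=0}^{L-1}\varepsilon_i^{1/6}\right).
\end{equation}
The key observation is that $\varepsilon_{i+1} = 2^{-\varepsilon_i^{-1/3}}$ forces $\varepsilon_i$ to decay doubly exponentially, so the series $\sum_i \varepsilon_i^{1/6}$ is dominated by its first few terms and converges to an absolute constant once $\varepsilon_0 \leq 0.0006$ is fixed. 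Bounding this sum explicitly places $\E[\mathcal{O}]$ below a universal constant, and feeding that into \cref{lem:markov} converts the expected overhead into the high-probability statement $\mathcal{O} \leq 411\log_2\delta^{-1}$ with success probability $1-\delta$.

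The remaining items follow from tracking the same recursion. Writing $L_i = \log_2\varepsilon_i^{-1}$, the schedule reads $L_{i+1} = \varepsilon_i^{-1/3} = 2^{L_i/3}$, an iterated exponential, so the number of layers required for $L_L$ to reach $\log_2\varepsilon_f^{-1}$ is $L \leq O(\log_2^*\varepsilon_f^{-1})$ (item 4). For the memory (item 3), $n_i = \varepsilon_i^{-1/2}$ increases with $i$, so the binding register is the penultimate layer; tuning its measurement count so that its output is exactly $\varepsilon_f$ forces $\varepsilon_{L-1}^{-1/3} = \log_2\varepsilon_f^{-1}$, whence $n_{L-1} = (\log_2\varepsilon_f^{-1})^{3/2}$. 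That same layer outputs a batch of $k_{L-1} = n_{L-1} - m_{L-1} \leq (\log_2\varepsilon_f^{-1})^{3/2}$ pairs, so multiplying the overhead of item 1 by this batch size gives the total input count $411(\log_2\varepsilon_f^{-1})^{3/2}\log_2\delta^{-1}$ of item 2. Finally, for item 5, each layer samples an $n_i$-qubit Clifford from a $2$-design, realizable with $O(n_i^2) = O(\poly\log\varepsilon_f^{-1})$ gates in $O(\poly\log n_i) = O(\poly\log\log\varepsilon_f^{-1})$ depth via the efficient constructions cited in \cref{alg:random_bilocal_Clifford}; summing the geometrically dominated contributions over the $O(\log_2^*\varepsilon_f^{-1})$ layers preserves these scalings.

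I expect the main obstacle to be the \emph{simultaneous} control afforded by a single schedule. The choice $\varepsilon_{i+1} = 2^{-\varepsilon_i^{-1/3}}$ must be aggressive enough that $\sum_i \varepsilon_i^{1/6}$ converges to a small constant and the layer count is only $\log^*$, yet gentle enough that $n_i = \varepsilon_i^{-1/2}$ never exceeds $(\log_2\varepsilon_f^{-1})^{3/2}$, and it must remain inside the validity window of \cref{lem:settings} at every step, which is precisely what the stringent hypothesis $\varepsilon_0 \leq 0.0006$ secures. Verifying that this boundary schedule threads all of these constraints at once --- in particular pinning down the explicit constant behind the $411$ in items 1 and 2 --- is the delicate bookkeeping, whereas the structural facts (multiplicativity of overhead and the $L_{i+1} = 2^{L_i/3}$ recursion) are routine.
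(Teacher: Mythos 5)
Your proposal is correct and follows essentially the same route as the paper: iterate \cref{lem:settings} with the schedule $\varepsilon_{i+1} = 2^{-\varepsilon_i^{-1/3}}$, multiply the per-layer overheads via \eqref{eq:expected_overhead_concatenated_protocol}, convert to a high-probability bound with \cref{lem:markov}, and read off the memory, layer, and gate counts from $\varepsilon_{L-1}^{-1/2} = (\log_2\varepsilon_f^{-1})^{3/2}$ and the iterated-exponential recursion. The one piece of bookkeeping you leave open --- pinning down the constant $411$ --- is settled in the paper by observing that $\varepsilon_0 \leq 0.0006$ guarantees $2^{-\varepsilon^{-1/3}} \leq \tfrac{1}{2}\varepsilon$, hence $\varepsilon_\ell \leq 2^{-\ell}\varepsilon_0$, so the series is dominated by the explicit geometric sum $\sum_{\ell} 2^{-\ell/6}\varepsilon_0^{1/6} = \varepsilon_0^{1/6}/(1-2^{-1/6})$ rather than by a generic doubly-exponential-decay argument.
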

\begin{proof}
From \cref{lem:settings}, we know that starting with infidelity $\varepsilon$, our random bilocal Clifford protocol can achieve output infidelity $2^{-\varepsilon^{-1/3}}$ with expected overhead at most $\exp(2\varepsilon^{1/6})$, and improvement in fidelity is possible for $\varepsilon\leq 0.001$. Let us analyze what happens to the expected overhead when we concatenate this protocol.
First, for layer $\ell\in\{1,2,\dotsc\}$, the infidelity is given by $\varepsilon_{\ell}=2^{-\varepsilon_{\ell-1}^{-1/3}}$. Now, because $2^{-\varepsilon^{-1/3}}\leq\frac{1}{2}\varepsilon$ for $\varepsilon\leq 0.0006$, we have that $\varepsilon_{\ell}\leq \frac{1}{2}\varepsilon_{\ell-1}$ for $\varepsilon_{\ell-1}\leq 0.0006$ and $\ell\in\{1,2,\dotsc\}$. In particular, we have that $\varepsilon_{\ell}\leq 2^{-\ell}\varepsilon_0$ for $\ell\in\{1,2,\dotsc\}$ and $\varepsilon_0\leq 0.0006$. From \cref{lem:settings}, it follows that the overhead of layer $\ell$ of the concatenated protocol is bounded from above by $\exp(2\cdot 2^{-(\ell-1)/6}\varepsilon_0^{1/6})$.



Applying \cref{lem:markov}, the total overhead is bounded by $\mathcal{O} \leq 2 \mathbb{E}[\mathcal{O}]\log_2\delta^{-1}$, where
\begin{equation}
    2\mathbb{E}[\mathcal{O}] \le 2\prod_{\ell=1}^{\infty} \exp(2 \cdot 2^{-(\ell-1)/6}\varepsilon_0^{1/6}) = 2\exp(2\sum_{\ell=0}^{\infty}2^{-\ell/6}\varepsilon_0^{1/6}) \leq 411.
\end{equation}
This completes the proof of 1. 

Now, if $L$ is the last layer of the concatenated protocol, the number of Bell pairs produced by this final layer is
\begin{equation}
    \varepsilon_{L-1}^{-1/2} - \log_2 \varepsilon_f^{-1} = \qty((\log_2 \varepsilon_f^{-1})^3)^{1/2} - \log_2 \varepsilon_f^{-1} \leq (\log_2 \varepsilon_f^{-1})^{3/2},
\end{equation}
where we have used the fact that $\varepsilon_f = 2^{-\varepsilon_{L-1}^{-1/3}}$. Therefore the number of required input Bell pairs (including all retries) is at most $N_{tot}=(\log_2 \varepsilon_f^{-1})^{3/2} \cdot\mathcal{O}$. Combining this with 1., we obtain 2. 

Using the recurrence $\varepsilon_{\ell+1} \leq \varepsilon_\ell/2$, we see that the number of layers of concatenation can be upper bounded with $\log_2(\varepsilon_0/\varepsilon_f)$. In fact, this bound can be improved. By repeating the recurrence $\varepsilon_\ell \to 2^{-\varepsilon_\ell^{-1/3}}$ twice, we achieve $\varepsilon_\ell \to 2^{-\varepsilon_{\ell}}$ for any $\varepsilon_\ell$ below some constant $\varepsilon^*$. Therefore, the number of required layers is simply $O(\log_2^*(\varepsilon_f^{-1}))$ where $\log_2^*$ is the iterated logarithm.

Finally, using Ref.~\cite{cleve2016near}, we can implement circuits sampled from an $N$-qubit 2-design using $O(N \log^2 N \log\log N) \leq O(N \log^3 N)$ Clifford gates in $O(\log^2 N)$ depth. The number of repetitions of any layer is at most $\log_2(\varepsilon_f^{-1})^{3/2}$ (since this is the number of inputs to the final layer of the protocol). Therefore, the total number of gates required is at most
\begin{equation}
    O\qty(\log(\varepsilon_f^{-1})^{3/2} \cdot \sum_{\ell=l}^L N_\ell \log^3 N_\ell) \leq O\qty(\log(\varepsilon_f^{-1})^{3/2} \log_2^*(\varepsilon_f^{-1}) N_{L} \log^3 N_{L}) \leq O(\poly \log \varepsilon_f^{-1}).
\end{equation}
where we used item 4., i.e. $L\le O(\log^*(\varepsilon_f^{-1}))$. Meanwhile, the overall depth is at most $O(L \cdot \log^2 N_L) \leq O(\poly \log \log \varepsilon_f^{-1})$.
\end{proof}

\begin{remark}
\cref{thm:bilocal_Clifford_concat_supp} says that by concatenating our passive bilocal random Clifford protocol, we can distill Bell pairs with arbitrarily low target infidelity $\bar{\varepsilon}$ while maintaining a constant (with respect to $\bar{\varepsilon}$) overhead. The key technical innovation is the careful choice of parameters that ensures both improving fidelity and bounded overhead at each layer. The assumption of an initial infidelity of $0.0006$ is an artifact of our proof, and is not limiting for two reasons. From a practical perspective, our numerical results, illustrate that in regimes of practical interest, the overhead of our protocol is lower than existing methods for initial infidelities $\varepsilon_0$ far larger than $0.0006$ (see \cref{fig:results,tab:comparisons}). From a theoretical perspective, one can always use standard distillation schemes~\cite{BDSW96,DEJMPS96} to first reduce the infidelity to $0.0006$ before applying our protocol. Since this initial distillation phase requires only a constant number of operations independent of the final target infidelity $\bar{\varepsilon}$, it does not affect the asymptotic scaling of our protocol's resources.
\end{remark}

\subsection{Active setting}

To analyze the performance of our active strategy, we first need to understand how syndromes behave under random Clifford operations.

\begin{lemma}[Syndrome matching probabilities]\label{lem:syndrome-prob}
For any two Pauli operators $P$ and $P'$, under a uniformly random Clifford operation $C$,
\begin{equation}
    \Pr_{C}[\mathcal{S}(C(P)) = \mathcal{S}(C(P'))] = \begin{cases}
        1 &\qq{if $P=P'$,} \\
        \frac{2^{-m} \cdot 4^n/2 - 2}{4^n/2-2} \leq 2^{-m} &\qq{if $P \neq P'$ and $\comm{P}{P'}=0$,} \\
        2^{-m} &\qq{if $\comm{P}{P'} \neq 0$,}
    \end{cases}
\end{equation}
where $C(P) \equiv C P C^\dagger$ and $\mathcal{S}(P) \in \qty{0,1}^m$ is the syndrome of a Pauli error $P$. The $k$th bit of $\mathcal{S}(P)$ is an indicator variable that is equal to $1$ if the $k$th character in $P$ is $X$ or $Y$.
\end{lemma}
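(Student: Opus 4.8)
The plan is to reduce the probability to a purely combinatorial count over Pauli operators, using the fact that the Clifford group acts on Paulis (modulo phases) as the full symplectic group $\mathrm{Sp}(2n,\mathbb{F}_2)$. First I would pass to the binary symplectic representation, writing each Pauli $Q$ as a vector $(x(Q),z(Q))\in\mathbb{F}_2^{2n}$, and observe that the syndrome is exactly the first $m$ coordinates of the $X$-part, $\mathcal{S}(Q)_j=x_j(Q)$, since a character is $X$ or $Y$ precisely when its $X$-component is $1$. The event $\mathcal{S}(C(P))=\mathcal{S}(C(P'))$ is therefore the event that $C(P)$ and $C(P')$ agree in their first $m$ $X$-coordinates.

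Next I would use that for a uniformly random Clifford $C$, the pair $(C(P),C(P'))$ is uniformly distributed over the orbit of $(P,P')$ under simultaneous conjugation. By Witt's theorem for symplectic spaces, this orbit is classified entirely by the coarse data of the pair: whether $P=P'$, whether either is the identity, and---when both are distinct and nontrivial---the value of the symplectic inner product, i.e.\ whether $P$ and $P'$ commute or anticommute. Thus the probability depends only on the commutation relation, and computing it reduces to counting, within the relevant orbit, the fraction of pairs whose first $m$ $X$-coordinates agree. The case $P=P'$ is immediate, giving probability $1$.

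For the two nontrivial cases I would condition on $Q\coloneqq C(P)$ (uniform over nontrivial Paulis) and count the admissible partners $Q'\coloneqq C(P')$. In the anticommuting case $Q'$ ranges over the $2^{2n-1}$ Paulis orthogonal-with-value-$1$ to $Q$; imposing the additional $m$ linear conditions $x_j(Q')=x_j(Q)$ for $j\le m$ generically removes a factor $2^{-m}$, which is the source of the $2^{-m}$ scaling. In the commuting case $Q'$ ranges over the $2^{2n-1}-2$ nontrivial Paulis distinct from $Q$ that are symplectically orthogonal to $Q$ (the $-2$ excluding $I$ and $Q$ themselves), intersected with the syndrome-matching conditions. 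Equivalently, and more cleanly, I would track the product $R\coloneqq C(P)C(P')=C(PP')$: since $x(\cdot)$ is additive under Pauli multiplication, $\mathcal{S}(C(P))=\mathcal{S}(C(P'))$ holds iff $x(R)$ vanishes on the first $m$ coordinates, and $R$ is a uniformly random nontrivial Pauli, reducing everything to counting Paulis whose $X$-support avoids the first $m$ positions.

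The main obstacle is the bookkeeping in the commuting case. The subtlety is that the $m$ syndrome-matching conditions and the single orthogonality condition $\langle Q',Q\rangle=0$ need not be linearly independent: depending on where $Q$ is supported, the orthogonality constraint may already be implied (or partially implied) by fixing the first $m$ $X$-coordinates, so the naive factor-of-$2^{-m}$ reasoning is $Q$-dependent and one must average carefully; simultaneously one must track whether $I$ and $Q$ fall inside the counted set. Pinning down this count---and verifying that it yields the stated ratio and, crucially, the uniform bound $\le 2^{-m}$ used in the active-setting fidelity estimate---is where the real work lies, whereas the anticommuting case is comparatively clean because its orthogonality constraint excludes $I$ and $Q$ automatically and is generically transverse to the syndrome conditions.
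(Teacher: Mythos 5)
Your approach is sound, and your ``cleaner'' route --- tracking the product $R \coloneqq C(P)C(P') = C(PP')$ --- is genuinely different from the paper's proof, which takes exactly the conditional-counting route you describe as the less clean alternative: it fixes $C(P')$, invokes the Pauli 2-mixing (Witt-type transitivity) property to say that $C(P)$ is uniform over the non-identity Paulis ($\neq C(P')$) with the prescribed commutation relation, and counts syndrome matches within that set. The product trick is superior because it collapses the case analysis entirely: $X$-parts add under Pauli multiplication and are phase-insensitive, so $\mathcal{S}(C(P)) = \mathcal{S}(C(P'))$ holds iff the $X$-part of $C(PP')$ vanishes on the first $m$ coordinates; since $PP'$ is a fixed non-identity Pauli, $C(PP')$ is uniform over all $4^n - 1$ non-identity Paulis, and counting those whose first $m$ characters lie in $\{I,Z\}$ gives, for \emph{every} $P \neq P'$,
\begin{equation}
    \Pr_{C}[\mathcal{S}(C(P)) = \mathcal{S}(C(P'))] = \frac{2^{-m}\cdot 4^n - 1}{4^n - 1} \leq 2^{-m},
\end{equation}
independent of whether $P$ and $P'$ commute.

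You should follow this through to its logical end, though: you write that the real work lies in ``verifying that it yields the stated ratio,'' but your (correct) computation shows that the stated ratios are wrong. Sanity check with $n=m=1$, $P = X$, $P' = Y$: the single-qubit Cliffords act on the axes as $S_3$, so $(C(X), C(Y))$ is a uniform ordered pair of distinct non-identity Paulis, and the syndromes agree only for the pairs $(X,Y)$ and $(Y,X)$, giving probability $1/3$, not the claimed $2^{-m} = 1/2$. The slip in the lemma --- and in your own ``generically removes a factor $2^{-m}$'' reasoning for the anticommuting case --- is precisely the degeneracy you flagged for the commuting case: when $C(P)$ lands on a non-identity $Z$-type Pauli supported entirely on the $m$ measured qubits (probability $(2^m-1)/(4^n-1)$), syndrome matching forces $C(P')$ to have trivial $X$-part on those qubits as well, which is incompatible with anticommuting with $C(P)$; the conditional probability is then $0$, not $2^{-m}$, and averaging over this event yields the unified value above. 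The paper's commuting-case formula is likewise internally inconsistent: its factor of $1/2$ presumes non-degeneracy, while its ``$-2$'' presumes $\mathcal{S}(C(P')) = \vec{0}$. The saving grace is that only the inequality $\leq 2^{-m}$ is used downstream (in the union bounds of \cref{lem:accept-prob}), and your product-trick argument proves that bound exactly, uniformly over all cases, and with less work --- so write it up as a corrected statement rather than trying to reproduce the displayed case-by-case values.
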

\begin{proof}
The case $P = P'$ is trivial. For commuting $P \neq P'$, the Pauli 2-mixing property of Clifford operations~\cite[Lemma~3]{webb2016clifford} implies that $C(P)$ has $4^n/2-2$ possible values given $C(P')$. Among these, only $2^m \cdot 4^{n-m}/2$ match the syndrome of $C(P')$, but two of these possibilities are disallowed since $C(P) \neq C(P') \neq I^{\otimes n}$. For anticommuting pairs, a similar counting argument gives exactly $2^{-m}$.
\end{proof}

Using these syndrome matching probabilities, we can bound the probability of accepting a state (either through no error being detected or through attempted correction).

\begin{lemma}[Acceptance probability bounds for \cref{alg:random_bilocal_Clifford}, active setting]\label{lem:accept-prob}
For an $\Err$-active correction strategy, the acceptance probability $p_{\acc}$ is bounded by
\begin{equation}\label{eq:active-pacc}
    q \leq p_{\acc} \leq q + 2^{-m} \cdot (\Err+1) (1-q),
\end{equation}
where $q \coloneqq \sum_{\ell=1}^{\Err+1} q_\ell$. The joint probability $p_{\acc \land \Phi}$ is lower bounded by
\begin{equation}\label{eq:active-pjoint}
    p_{\acc \land \Phi} \geq q - 2^{-m} \cdot \Err (q-f^n).
\end{equation}
\end{lemma}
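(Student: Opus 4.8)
The plan is to condition on which Pauli error occurs and then invoke the syndrome-collision bounds of \cref{lem:syndrome-prob}. After Pauli twirling, the input is a mixture $\sum_\ell q_\ell \ketbra{P_\ell}$, and the bilocal Clifford maps the error $P_\ell$ to $C(P_\ell) \coloneqq C P_\ell C^\dagger$, which is detected through its syndrome $\mathcal{S}(C(P_\ell))$. Since $C$ is fixed before the error is drawn, the lookup table---built from the $\Err+1$ most likely errors $P_1,\dots,P_{\Err+1}$---is a function of $C$ alone, and the protocol accepts exactly when the observed syndrome $\mathcal{S}(C(P_\ell))$ coincides with one of the table syndromes $\{\mathcal{S}(C(P_j))\}_{j=1}^{\Err+1}$. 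Because the index $\ell$ is independent of $C$, I would write $p_{\acc} = \sum_\ell q_\ell \, \Pr_C[\mathcal{S}(C(P_\ell)) \in \{\mathcal{S}(C(P_j))\}_{j=1}^{\Err+1}]$, and decompose $p_{\acc\land\Phi}$ analogously.

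For $p_{\acc}$, I would split the sum at $\ell = \Err+1$. For the correctable errors ($\ell \leq \Err+1$) the observed syndrome is trivially a table syndrome, so each contributes its full weight $q_\ell$; summing gives the lower bound $p_{\acc} \geq \sum_{\ell=1}^{\Err+1} q_\ell = q$. For the uncorrectable errors ($\ell > \Err+1$), acceptance requires a syndrome collision with one of the $\Err+1$ table entries. A union bound over these entries, together with the uniform estimate $\Pr_C[\mathcal{S}(C(P_\ell)) = \mathcal{S}(C(P_j))] \leq 2^{-m}$ from \cref{lem:syndrome-prob} (valid for any distinct Paulis, whether commuting or anticommuting), bounds the collision probability by $(\Err+1)2^{-m}$. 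Multiplying by the total weight $\sum_{\ell > \Err+1} q_\ell = 1-q$ yields $p_{\acc} \leq q + 2^{-m}(\Err+1)(1-q)$.

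For $p_{\acc\land\Phi}$, the key observation is that only the correctable errors can be \emph{guaranteed} to produce the target state, so I would lower bound by discarding the nonnegative contributions of all $\ell > \Err+1$. For a correctable error $P_\ell$, maximum-likelihood decoding returns $P_\ell$ itself---and hence yields $\Phi^{\otimes k}$ after exact cancellation---unless $\mathcal{S}(C(P_\ell))$ collides with that of a strictly more likely correctable error $P_j$ with $j < \ell$, in which case the table maps the syndrome to $P_j$. A union bound over these $\ell-1$ competitors, again using the $2^{-m}$ estimate, shows $\Phi^{\otimes k}$ is obtained with probability at least $1-(\ell-1)2^{-m}$. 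Thus $p_{\acc\land\Phi} \geq \sum_{\ell=1}^{\Err+1} q_\ell\bigl(1-(\ell-1)2^{-m}\bigr) = q - 2^{-m}\sum_{\ell=2}^{\Err+1}(\ell-1)q_\ell$, and bounding $\ell-1 \leq \Err$ together with the identification $q_1 = f^n$ (the identity is the most likely error, so $q-q_1 = \sum_{\ell=2}^{\Err+1} q_\ell$) gives $p_{\acc\land\Phi}\geq q - 2^{-m}\,\Err\,(q-f^n)$.

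The main obstacle I anticipate is the $p_{\acc\land\Phi}$ bound, where I must pin down precisely which events fail to yield $\Phi^{\otimes k}$. The delicate points are (i) that a residual error with trivial syndrome can still act nontrivially on the $k$ output qubits, so a decoding collision generically spoils the target state and cannot be counted toward $p_{\acc\land\Phi}$; and (ii) that it suffices to charge each failure only to a collision with a \emph{more likely} correctable error, which is exactly what replaces a naive coefficient $\Err$ by $\ell-1$ and ultimately produces the factor $(q-f^n)$ rather than $(1-q)$. Checking that the uncorrectable errors contribute nonnegatively---so that they may simply be dropped rather than accounted for exactly---is what keeps this a clean one-sided bound.
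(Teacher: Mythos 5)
Your proposal is correct and follows essentially the same route as the paper's proof: the same decomposition of $p_{\acc}$ over Pauli errors with syndrome-collision probabilities from \cref{lem:syndrome-prob}, the same split into correctable versus uncorrectable errors with a union bound over the $\Err+1$ table entries, and for $p_{\acc\land\Phi}$ the same restriction to correctable errors with failures charged only to collisions against strictly more likely errors, using $q_1=f^n$ to obtain the factor $(q-f^n)$. No meaningful differences to report.
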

\begin{proof}
Letting $q_P$ be the coefficient associated with $P$, the following equality holds:
\begin{equation}
    p_{\acc} = \sum_{P \in \mathcal{P}_n} q_P \cdot \Pr_C[\mathcal{S}(C(P)) \in \qty{\mathcal{S}(C(P_\ell)) \mid \ell=1,\ldots,\Err+1}],
\end{equation}
since we accept a Pauli error $P$ when its syndrome $\mathcal{S}(C(P))$ is in the set of correctable syndromes, $\qty{\mathcal{S}(C(P_\ell)) \mid \ell=1,\ldots,\Err+1}$. For every error $P$ in our correctable set, $\Pr_C[\mathcal{S}(C(P)) \in \qty{\mathcal{S}(C(P_\ell)) \mid \ell=1,\ldots,\Err+1}]=1$ trivially. This immediately results in the claimed lower bound $p_{\acc}\geq q$. On the other hand, there may be errors outside our correctable set that nevertheless have a syndrome that is identical to one of the errors in the correctable set. Combining \cref{lem:syndrome-prob} with a union bound, we see that in these cases, $\Pr_C[\mathcal{S}(C(P)) \in \qty{\mathcal{S}(C(P_\ell)) \mid \ell=1,\ldots,\Err+1}] \leq 2^{-m} \cdot (\Err+1)$. Hence, we obtain our claimed upper bound on $p_{\acc}$ as follows:
\begin{align}
    p_{\acc}&=\sum_{j=1}^{\Err}q_j+\sum_{j\ge \Err+1}q_{j}\Pr_C[\mathcal{S}(C(P_j)) \in \qty{\mathcal{S}(C(P_\ell)) \mid \ell=1,\ldots,\Err+1}]\\
    &\le q+ 2^{-m}(\Err+1)\sum_{j\ge \Err+1} q_j\\
    &=q+ 2^{-m}(\Err+1)(1-q).
\end{align}

We now show the lower bound on $p_{\acc \land \Phi}$. The first term $f^n$ represents the case where no error occurred. For each additional correctable error $P_\ell$, we succeed in correcting it unless its syndrome matches that of a more probable error (i.e., $P_{\ell'}$ for $\ell' < \ell$). That is, the fidelity will be
\begin{equation}
    f^n + \sum_{\ell=2}^{\Err+1} q_\ell \cdot \Pr_C[\mathcal{S}(C(P_\ell)) \not\in \qty{\mathcal{S}(C(P_{\ell'})) \mid \ell'=0,\ldots,\ell-1}].
\end{equation}
Applying a simple union bound, combined with \cref{lem:syndrome-prob}, we see that $\Pr_C[\mathcal{S}(C(P_\ell)) \not\in \qty{\mathcal{S}(C(P_{\ell')}) \mid \ell'=0,\ldots,\ell-1}] \geq 1-(\ell-1) \cdot 2^{-m} \geq 1-\Err \cdot 2^{-m}$, leading to the claimed lower bound. Indeed, we note that $\sum_{\ell=2}^{\Err+1} q_\ell = q - q_1 = q - f^n$, since $q_1=f^n$. We therefore have $p_{\acc \land \Phi} \geq f^n + (1-\Err \cdot 2^{-m}) (q-f^n)$, which can be simplified to read $p_{\acc \land \Phi} \geq q - \Err \cdot 2^{-m} (q-f^n)$, as required.
\end{proof}

\begin{theorem}[Output fidelity of \cref{alg:random_bilocal_Clifford}, active setting]\label{thm:active-perform_supp}
Under an $\Err$-active correction strategy for \cref{alg:random_bilocal_Clifford}, the fidelity of the output state, conditioned on acceptance, is bounded from below as follows:
\begin{equation}
    (1-\bar{\varepsilon})^k \geq 1 - 2^{-m} \cdot (\Err+1) (q^{-1}-1).
\end{equation}
which strictly generalizes \cref{eq:new-fid_supp}, valid for $\Err=0$.
\end{theorem}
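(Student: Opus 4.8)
The plan is to start from the fidelity formula $(1-\bar{\varepsilon})^k = p_{\acc \land \Phi}/p_{\acc}$ established in \cref{eq:f-out}, and then feed in the two-sided estimates of \cref{lem:accept-prob}. Because the ratio $p_{\acc \land \Phi}/p_{\acc}$ is monotonically increasing in the numerator and decreasing in the denominator, I would replace $p_{\acc \land \Phi}$ by its lower bound and $p_{\acc}$ by its upper bound, giving
\[
(1-\bar{\varepsilon})^k \geq \frac{q - 2^{-m}\Err(q-f^n)}{q + 2^{-m}(\Err+1)(1-q)}.
\]
The target expression $1 - 2^{-m}(\Err+1)(q^{-1}-1)$ is most naturally read as $[q - 2^{-m}(\Err+1)(1-q)]/q$, so the whole statement reduces to an elementary comparison of two rational functions of $2^{-m}$.

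The cleanest way to carry out the simplification is to factor $q$ out of the denominator, writing it as $q(1+x)$ with $x = 2^{-m}(\Err+1)(q^{-1}-1)$, and then apply the elementary inequality $\frac{1}{1+x}\geq 1-x$ (valid for all $x \geq 0$). This is exactly the maneuver used in the passive \cref{cor:simple-f_supp}, with the replacements $f^n \to q$ and $(1-f^n)\to(1-q)$ and an extra factor of $(\Err+1)$ coming from the number of correctable syndromes. Setting $\Err=0$ collapses $q$ to $f^n$ and recovers \cref{eq:new-fid} exactly, which is a useful consistency check to perform at the end.

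The main obstacle is controlling the numerator, i.e.\ the probability $p_{\acc}-p_{\acc \land \Phi}$ of a \emph{bad acceptance} in which the protocol accepts but outputs the wrong state. There are two sources of such events: an error \emph{outside} the correctable set whose scrambled syndrome accidentally coincides with one of the $\Err+1$ correctable syndromes, and an error \emph{inside} the correctable set whose syndrome collides with that of a more likely correctable error (so the lookup table applies the wrong recovery). I would bound the first source by a union bound over the $\Err+1$ correctable syndromes together with the per-pair collision probability $\leq 2^{-m}$ from \cref{lem:syndrome-prob}, contributing at most $2^{-m}(\Err+1)(1-q)$; this is the term that survives into the final bound. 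The delicate point is the second source, which a naive union bound charges an extra $2^{-m}\Err(q-f^n)$: the crux of the argument is to show this within-set miscorrection mass is dominated by, and can be folded into, the false-positive term so that the clean prefactor $(\Err+1)$ is preserved. Once the bad-acceptance probability is bounded by $2^{-m}(\Err+1)(1-q)$, dividing by $p_{\acc}\geq q$ and writing $(1-\bar{\varepsilon})^k = 1 - (p_{\acc}-p_{\acc \land \Phi})/p_{\acc}$ yields the claim directly.
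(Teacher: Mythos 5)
Your skeleton---forming the ratio $p_{\acc\land\Phi}/p_{\acc}$ via \cref{eq:f-out}, inserting the two estimates of \cref{lem:accept-prob}, and finishing with $\tfrac{1}{1+x}\geq 1-x$---is exactly the paper's route. However, the step you defer as ``the crux'' (folding the within-set miscorrection mass $2^{-m}\Err(q-f^n)$ into the false-positive term so that the total bad-acceptance probability stays below $2^{-m}(\Err+1)(1-q)$) is never carried out in your proposal, and it is precisely the step that cannot be done. The two sources of bad acceptance you identify are disjoint events (the underlying Pauli error is either inside or outside the correctable set), so their probabilities add: outside errors colliding with a correctable syndrome contribute $\approx 2^{-m}(\Err+1)(1-q)$, while within-set errors colliding with a more likely correctable syndrome contribute an \emph{additional} mass of order $2^{-m}\sum_{\ell\geq 2}(\ell-1)q_\ell$. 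By \cref{lem:syndrome-prob} the pairwise collision probability is genuinely of order $2^{-m}$ (for anticommuting pairs it equals $2^{-m}$ exactly), so this second mass is first-order in $2^{-m}$ whenever $\Err\geq 1$ and $q>f^n$, and no bookkeeping can absorb it into the first. What your argument (and \cref{lem:accept-prob}) actually yields is the weaker bound $(1-\bar{\varepsilon})^k \geq 1 - 2^{-m}\left[(\Err+1)(q^{-1}-1) + \Err(1-q^{-1}f^n)\right]$, which still specializes to \cref{eq:new-fid} at $\Err=0$.

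You should also know that the paper's own proof stumbles at this same point: its chain asserts $\frac{1-\Err\, 2^{-m}(1-q^{-1}f^n)}{1+2^{-m}(\Err+1)(q^{-1}-1)} \geq \frac{1}{1+2^{-m}(\Err+1)(q^{-1}-1)}$, which requires the numerator $1-\Err\,2^{-m}(1-q^{-1}f^n)$ to be at least $1$; but $q=f^n+\sum_{\ell=2}^{\Err+1}q_\ell\geq f^n$ forces $1-q^{-1}f^n\geq 0$, so that numerator is at most $1$, with equality only when $\Err=0$ or $q=f^n$. In other words, the paper silently discards the very term you flagged, via an inequality that runs the wrong way. Your instinct about where the difficulty lies is therefore exactly right, but neither your proposal nor the paper discharges it, and the clean $(\Err+1)$ prefactor in the stated theorem does not follow from \cref{lem:accept-prob}.
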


\begin{proof}
Combining the upper bound from \cref{eq:active-pacc} and the lower bound from \cref{eq:active-pjoint}, we get
\begin{equation}
    (1-\bar{\varepsilon})^k \geq \frac{1 - \Err \cdot 2^{-m} (1-q^{-1} f^n)}{1+2^{-m} \cdot (\Err+1)(q^{-1}-1)} \geq \frac{1}{1+2^{-m} \cdot (\Err+1) (q^{-1}-1)} \geq 1 - 2^{-m} \cdot (\Err+1) (q^{-1}-1),
\end{equation}
where for the final inequality we use the fact that $\frac{1}{1+x}\geq 1-x$ for $|x|\leq 1$.
\end{proof}

\subsection{Finite-depth protocols}

To analyze this protocol, we first observe that under random bilocal Clifford operations $C \otimes C^*$, the state of any two qubit pairs can be described as a mixture of product states $\rho_1 \otimes \rho_2$, where each $\rho_i$ is either the two-qubit maximally mixed state $\id_{AB}/4$ or the two-qubit maximally entangled state $\Phi_{AB} = \ketbra{\Phi}_{AB}$. An alternative formulation can take $\rho_i$ to be either $\Phi_{AB}$ or $\Phi_{AB}^\perp$, where $\Phi_{AB}^\perp \coloneqq \frac{1}{3}(\id_{AB}-\Phi_{AB})=\frac{1}{3} (\ketbra{X}+\ketbra{Y}+\ketbra{Z})$. (Note that $\id_{AB} = \Phi_{AB} + 3\Phi_{AB}^\perp$.) Indeed, it readily follows from \cref{eq:Clifford_twirl_2_supp} that for every state $\rho_{A_1A_2B_1B_2}$,
\begin{multline}\label{eq:twirl_two_qubits}
    \mathbb{E}_C[(C_{A_1A_2}\otimes C_{B_1B_2}^{\ast})\rho_{A_1A_2B_1B_2}(C_{A_1A_2}\otimes C_{B_1B_2}^{\ast})^{\dagger}]\\=f^2\Phi_{A_1B_1}\otimes\Phi_{A_2B_2}+(1-f^2)\left(\frac{1}{5}\Phi_{A_1B_1}\otimes\Phi_{A_2B_2}^{\perp}+\frac{1}{5}\Phi_{A_1B_1}^{\perp}\otimes\Phi_{A_2B_2}+\frac{3}{5}\Phi_{A_1B_1}^{\perp}\otimes\Phi_{A_2B_2}^{\perp}  \right).
\end{multline}
From now on, for ease of notation, we suppress the system labels and the tensor product symbol; for example, we will write $\Phi\Phi\equiv\Phi_{A_1B_1}\otimes\Phi_{A_2B_2}$. Using \cref{eq:twirl_two_qubits}, we have the following transformations under the map $\rho \mapsto \E_C[(C \otimes C^*) \rho (C \otimes C^*)^\dagger]$:
\begin{equation}\label{eq:two-q}
    \begin{gathered}
        \Phi \Phi \mapsto \Phi \Phi, \\
        \Phi^\perp \Phi^\perp, \Phi \Phi^\perp, \Phi^\perp \Phi \mapsto \frac{1}{5} \Phi \Phi^\perp + \frac{1}{5} \Phi^\perp \Phi + \frac{3}{5} \Phi^\perp \Phi^\perp.
    \end{gathered}
\end{equation}
These transformation rules define a Markov chain on the space of such product states. Due to the all-to-all connectivity in our protocol (meaning any pair of qubits can be chosen for the next gate), the state of the system can be described by a single parameter: the number $w$ of $\Phi^\perp$ factors, which simply tracks the weight of a Pauli error as it propagates through the circuit. For a system of $n$ pairs of qubits, we can keep track of the probability distribution of these weights using vectors $\kett{w}$, for $w\in\{0,1,\dotsc,n\}$, which represents the uniform distribution over all states with weight $w$ (i.e., all strings in $\{\Phi,\Phi^{\perp}\}^{\otimes n}$ with $w$ occurrences of $\Phi^{\perp}$). As an example, for $n=1$, we simply have $|\mkern-2.5mu|0\rangle\mkern-4mu\rangle\leftrightarrow\Phi$, $|\mkern-2.5mu|1\rangle\mkern-4mu\rangle\leftrightarrow\Phi^{\perp}$; for $n=2$, we have $|\mkern-2.5mu|0\rangle\mkern-4mu\rangle\leftrightarrow\Phi\Phi$, $|\mkern-2.5mu|1\rangle\mkern-4mu\rangle\leftrightarrow\frac{1}{2}(\Phi\Phi^{\perp}+\Phi^{\perp}\Phi)$, $|\mkern-2.5mu|2\rangle\mkern-4mu\rangle\leftrightarrow\Phi^{\perp}\Phi^{\perp}$, and so on. (We suppress the dependence of the vectors $\kett{w}$ on $n$.) The state of the system is then simply a convex combination of these elementary probability vectors $\kett{w}$. In particular, for a system of $n$ pairs of qubits, the initial distribution over error weights for local depolarizing noise with strength $\lambda$ is
\begin{equation}\label{eq:init-state}
    \kett{x_0} = \sum_{w=0}^{n} \binom{n}{w} \lambda^w (1-\lambda)^{n-w} \kett{w}.
\end{equation}

\begin{lemma}[Noiseless transitions]\label{lem:noiseless}
Consider a system of $n\in\{2,3,\dotsc\}$ pairs of qubits. For a state with error weight $w$, under a random two-qubit Clifford gate applied to a random pair of qubits, the transition probabilities for going to an error weight of $w'$ are: 
\begin{equation}\label{eq:noiseless-transition}
T_{w',w} = \begin{cases}
        \frac{6w(n-w)}{5n(n-1)} &\qq{if $w'=w+1 \leq n$,} \\[1ex]
        \frac{4w(n-w)+5(n-w)(n-w-1)+3w(w-1)}{5n(n-1)} &\qq{if $w'=w$,} \\[1ex]
        \frac{2w(w-1)}{5n(n-1)} &\qq{if $w'=w-1 \geq 0$,} \\[1ex]
        0 &\qq{otherwise.}
    \end{cases}    
\end{equation}
\end{lemma}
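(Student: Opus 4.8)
The plan is to reduce the global weight transition to a two-step computation: first determine how the weight changes locally when the random gate happens to land on a specific pair of positions, and then average over the uniformly random choice of which pair is acted upon. Throughout I would work in the $\{\Phi,\Phi^\perp\}^{\otimes n}$ basis, in which (by \cref{eq:two-q}) the dynamics remains a classical probability distribution, so that the ``weight'' $w$ is simply the Hamming weight of the string recording which of the $n$ positions carry $\Phi^\perp$.

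First I would read off the local transition rule for the two positions the gate acts on, directly from \cref{eq:two-q}. If both carry $\Phi$ (local weight $0$), the block is fixed, so $\Delta w = 0$. If exactly one carries $\Phi^\perp$ (local weight $1$), the image $\frac15\Phi\Phi^\perp + \frac15\Phi^\perp\Phi + \frac35\Phi^\perp\Phi^\perp$ has local weight $1$ with probability $2/5$ and local weight $2$ with probability $3/5$, i.e.\ $\Delta w = 0$ with probability $2/5$ and $\Delta w = +1$ with probability $3/5$. If both carry $\Phi^\perp$ (local weight $2$), the same image yields $\Delta w = -1$ with probability $2/5$ and $\Delta w = 0$ with probability $3/5$. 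Since only the two acted-on positions can change, the global $\Delta w$ equals this local change, which already confines transitions to $w' \in \{w-1, w, w+1\}$ and explains the ``otherwise'' case.

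Next I would average over the choice of pair. Conditioned on the system having weight $w$, a uniformly random pair out of the $\binom{n}{2}$ has local weight $0$, $1$, $2$ with probabilities $\binom{n-w}{2}/\binom{n}{2}$, $w(n-w)/\binom{n}{2}$, $\binom{w}{2}/\binom{n}{2}$ respectively. Combining these with the conditional $\Delta w$ probabilities above immediately gives $T_{w+1,w} = \frac35\cdot\frac{2w(n-w)}{n(n-1)} = \frac{6w(n-w)}{5n(n-1)}$ and $T_{w-1,w} = \frac25\cdot\frac{w(w-1)}{n(n-1)} = \frac{2w(w-1)}{5n(n-1)}$, matching the claim, while $T_{w,w}$ collects the three ``no net change'' contributions---the entire weight-$0$ case, $2/5$ of the weight-$1$ case, and $3/5$ of the weight-$2$ case---which, after putting everything over the common denominator $5n(n-1)$, reproduces the stated numerator $4w(n-w)+5(n-w)(n-w-1)+3w(w-1)$.

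The one point requiring genuine care---more a conceptual step than a routine count---is justifying that the process is Markovian in $w$ alone: that starting from the uniform distribution $\kett{w}$ over weight-$w$ strings and applying (random pair, random gate) returns a convex combination of the $\kett{w'}$ rather than some non-uniform distribution supported on weight-$w'$ strings. I would argue this by symmetry: the composite operation ``pick a uniformly random pair, then apply a random two-qubit Clifford'' commutes with every permutation of the $n$ positions, and $\kett{w}$ is permutation-invariant, so the output is permutation-invariant; but the only permutation-invariant probability distributions on $\{\Phi,\Phi^\perp\}^{\otimes n}$ are convex combinations of the $\kett{w'}$. This guarantees that the coefficients of the output are exactly the weight-transition probabilities computed above, completing the proof.
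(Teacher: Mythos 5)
Your proof is correct and takes essentially the same route as the paper's: condition on the local type ($\Phi\Phi$, $\Phi\Phi^\perp$/$\Phi^\perp\Phi$, or $\Phi^\perp\Phi^\perp$) of the randomly chosen pair, read the conditional weight change off \cref{eq:two-q}, and average over the pair choice with exactly the same counting ($\binom{n-w}{2}$, $w(n-w)$, $\binom{w}{2}$ out of $\binom{n}{2}$). Your closing permutation-symmetry argument, showing that the uniform weight-class distributions $\kett{w}$ map to convex combinations of the $\kett{w'}$, is a valid extra rigor step that the paper's proof leaves implicit, but it does not change the substance of the argument.
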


\begin{proof}
We start by noting that there are $\binom{n}{w}$ strings in $\{\Phi,\Phi^{\perp}\}^{\otimes n}$ with weight $w$. There are also $\binom{n}{2}$ qubits on which a two-qubit gate can be applied by both Alice and Bob. Selecting a pair of qubits corresponding to making a choice of two elements of the $w$-weight string. These two elements can be either $\Phi\Phi^{\perp}$, $\Phi^{\perp}\Phi$, $\Phi\Phi$, or $\Phi^{\perp}\Phi^{\perp}$. 
\begin{itemize}[noitemsep]
    \item For every $w$-weight string in $\{\Phi,\Phi^{\perp}\}^{\otimes n}$, there are $w(n-w)$ ways of choosing either $\Phi^{\perp}\Phi$ or $\Phi\Phi^{\perp}$. The probability of selecting $\Phi^{\perp}\Phi$ or $\Phi\Phi^{\perp}$ is therefore $w(n-w)/\binom{n}{2}=\frac{2w(n-w)}{n(n-1)}$.
    \item There are $\binom{n-w}{2}$ ways of choosing $\Phi\Phi$ from every $w$-weight string. Therefore, the probability of selecting $\Phi\Phi$ is $\binom{n-w}{2}/\binom{n}{2}=\frac{(n-w)(n-w-1)}{n(n-1)}$.
    \item Lastly, there are $\binom{w}{2}$ ways of selecting $\Phi^{\perp}\Phi^{\perp}$ from every $w$-weight string, meaning that the probability of selecting $\Phi^{\perp}\Phi^{\perp}$ is $\binom{w}{2}/\binom{n}{2}=\frac{w(w-1)}{n(n-1)}$.
\end{itemize}
Now, from \cref{eq:two-q}, we observe that when we hit a $\Phi \Phi^\perp$ or $\Phi^\perp \Phi$ pair, it stays the same weight with probability $\frac{2}{5}$, and it increases weight with probability $\frac{3}{5}$. Similarly, when we hit a $\Phi^\perp \Phi^\perp$ pair, we observe that the weight cannot increase --- we either \emph{decrease} weight with probability $\frac{2}{5}$ or stay the same with probability $\frac{3}{5}$. When we hit $\Phi\Phi$, the weight remains unchanged. The claimed transition probabilities follow by combining these cases.
\end{proof}

\begin{remark}
    The transition matrix defined by \cref{eq:noiseless-transition} models a birth-death process; the stationary solutions for such Markov processes can be calculated in closed form (see, e.g., Ref.~\cite{karlin1957classification}). In our case, the stationary distribution is proportional to the ``truncated'' the binomial distribution:
    \begin{equation}
        \kett{x_{\infty}} \propto \sum_{w=1}^n \binom{n}{w} \cdot (3/4)^w \cdot (1/4)^{n-w} \kett{w}.
    \end{equation}
    The reason for this is simple: in the limit of infinite time, the transition matrix models the action of a uniformly random Clifford, which twirls non-identity Paulis into uniformly random Paulis, and the distribution of weights for a uniformly random Pauli is a binomial distribution with parameter $\frac{3}{4}$.
\end{remark}

The situation becomes more interesting when we introduce noise. Consider a two-qubit noise channel $\mathcal{N}$ that acts before every two-qubit gate. The following three parameters fully describe any noise channel for our analysis:
\begin{equation}\label{eq:fidelity_parameters}
\begin{aligned}
    f_{0} &= \Tr[(\Phi_{A_1B_1}\otimes\Phi_{A_2B_2})(\mathcal{N}_{A_1A_2}\otimes\mathcal{N}_{B_1B_2})(\Phi_{A_1B_1}\otimes\Phi_{A_2B_2})]\\
    &=\Tr[(\mathcal{N}_{A_1A_2}^{\dagger}\otimes\mathcal{N}_{B_1B_2}^{\dagger})(\Phi_{A_1A_2B_1B_2})(\Phi_{A_1B_1}\otimes\Phi_{A_2B_2})],\\
    f_{1} &= \Tr[(\Phi_{A_1B_1}\otimes\Phi_{A_2B_2})(\mathcal{N}_{A_1A_2}\otimes\mathcal{N}_{B_1B_2})(\Phi_{A_1B_1}\otimes\Phi_{A_2B_2}^{\perp})]\\
    &=\Tr[(\mathcal{N}_{A_1A_2}^{\dagger}\otimes\mathcal{N}_{B_1B_2}^{\dagger})(\Phi_{A_1A_2B_1B_2})(\Phi_{A_1B_1}\otimes\Phi_{A_2B_2}^{\perp})],\\
    f_{2} &= \Tr[(\Phi_{A_1B_1}\otimes\Phi_{A_2B_2})(\mathcal{N}_{A_1A_2}\otimes\mathcal{N}_{B_1B_2})(\Phi_{A_1B_1}^{\perp}\otimes\Phi_{A_2B_2}^{\perp})]\\
    &=\Tr[(\mathcal{N}_{A_1A_2}^{\dagger}\otimes\mathcal{N}_{B_1B_2}^{\dagger})(\Phi_{A_1A_2B_1B_2})(\Phi_{A_1B_1}^{\perp}\otimes\Phi_{A_2B_2}^{\perp})],
\end{aligned}
\end{equation}
where for the second equality in each line we have used the fact that $\Phi_{A_1B_1}\otimes\Phi_{A_2B_2}=\Phi_{A_1A_2B_1B_2}$ (up to a reordering of the subsystems). The transitions now read as follows:
\begin{equation}\label{eq:two-q-noisy}
    \begin{gathered}
        \Phi \Phi \mapsto f_0 \Phi \Phi + (1-f_0) \qty(\frac{1}{5} \Phi \Phi^\perp + \frac{1}{5} \Phi^\perp \Phi + \frac{3}{5} \Phi^\perp \Phi^\perp), \\
        \Phi \Phi^\perp, \Phi^\perp \Phi \mapsto f_1 \Phi \Phi + (1-f_1) \qty(\frac{1}{5} \Phi \Phi^\perp + \frac{1}{5} \Phi^\perp \Phi + \frac{3}{5} \Phi^\perp \Phi^\perp), \\
        \Phi^\perp \Phi^\perp \mapsto f_2 \Phi \Phi + (1-f_2) \qty(\frac{1}{5} \Phi \Phi^\perp + \frac{1}{5} \Phi^\perp \Phi + \frac{3}{5} \Phi^\perp \Phi^\perp).
    \end{gathered}
\end{equation}

For common noise models, the fidelity parameters in \eqref{eq:fidelity_parameters} take on specific values.
\begin{itemize}[noitemsep]
    \item Let $\mathcal{N}=\mathcal{D}_{\lambda}$ be the two-qubit depolarizing channel:
    \begin{equation}
        \mathcal{D}_{\lambda}(\rho)\coloneqq(1-\lambda)\rho+\frac{\lambda}{15}\sum_{\substack{P\in\mathcal{P}_2\\P\neq I\otimes I}}P\rho P,
    \end{equation}
    where $\rho$ is a two-qubit state and $\lambda\in[0,1]$ is the depolarizing strength. For this channel, we have $f_0=(1-\lambda)^2+\frac{\lambda^2}{15}$, and $f_1=f_2=\frac{\lambda}{15}(2-\frac{16}{15}\lambda)$.
    \item Let $\mathcal{N}=\mathcal{A}_{\gamma}\otimes\mathcal{A}_{\gamma}$, where $\mathcal{A}_{\gamma}$ is the single-qubit amplitude damping channel with damping parameter $\gamma\in[0,1]$, defined as follows:
    \begin{equation}
        \begin{gathered}
            \mathcal{A}_{\gamma}(\rho)\coloneqq K_1\rho K_1^{\dagger}+K_2\rho K_2^{\dagger},\\
            K_1=\begin{pmatrix} 1 & 0 \\ 0 & \sqrt{1-\gamma}\end{pmatrix},\quad K_2=\begin{pmatrix} 0 & \sqrt{\gamma} \\ 0 & 0 \end{pmatrix}.
        \end{gathered}
    \end{equation}
    For this channel, we have $f_0 = (\gamma^2/2-\gamma+1)^2 \approx 1-2\gamma$, $f_1=\frac{\gamma(\gamma^3-2\gamma+4)}{12} \approx \gamma/3$, and $f_2=\frac{\gamma^2(\gamma+2)^2}{36} \approx (\gamma/3)^2$.
\end{itemize}

\begin{lemma}[Noisy transitions]\label{lem:noisy}
Consider a system of $n\in\{2,3,\dotsc\}$ pairs of qubits. For a state with error weight $w$, under noisy two-qubit Clifford gates, the transition probabilities for going to an error weight $w'$ are:
\begin{equation}\label{eq:transition-noise}
    T_{w',w} = \begin{cases}
        \frac{3(1-f_0) (n-w)(n-w-1)}{5n(n-1)} &\qq{if $w'=w+2 \leq n$,} \\[1ex]
        \frac{2(1-f_0)(n-w)(n-w-1)+6(1-f_1)w(n-w)}{5n(n-1)} &\qq{if $w'=w+1 \leq n$,} \\[1ex]
        \frac{5f_0 (n-w)(n-w-1)+4(1-f_1)w(n-w)+3(1-f_2)w(w-1)}{5n(n-1)} &\qq{if $w'=w$,} \\[1ex]
        \frac{10f_1 w(n-w) + 2(1-f_2)w(w-1)}{5n(n-1)} &\qq{if $w'=w-1 \geq 0$,} \\[1ex]
        \frac{f_2 w(w-1)}{n(n-1)} &\qq{if $w'=w-2 \geq 0$,} \\[1ex]
        0 &\qq{otherwise.}
    \end{cases}
\end{equation}
\end{lemma}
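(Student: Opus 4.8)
The plan is to replay the combinatorial counting argument of \cref{lem:noiseless} verbatim at the level of \emph{selecting} a pair of qubits, and only swap in the noisy local transition rules \eqref{eq:two-q-noisy} in place of the noiseless ones \eqref{eq:two-q}. First I would condition on the local type of the uniformly random pair drawn from the $\binom{n}{2}$ available positions. As established in the proof of \cref{lem:noiseless}, within a weight-$w$ string this pair is of type $\Phi\Phi$ with probability $\frac{(n-w)(n-w-1)}{n(n-1)}$, of mixed type $\Phi\Phi^\perp$ or $\Phi^\perp\Phi$ with probability $\frac{2w(n-w)}{n(n-1)}$, and of type $\Phi^\perp\Phi^\perp$ with probability $\frac{w(w-1)}{n(n-1)}$.

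Next, for each local type I would read off from \eqref{eq:two-q-noisy} the conditional distribution of the net weight change $w'-w$. A $\Phi\Phi$ pair stays at local weight $0$ with probability $f_0$, rises to local weight $1$ with probability $(1-f_0)\tfrac{2}{5}$, and jumps to local weight $2$ with probability $(1-f_0)\tfrac{3}{5}$, contributing $w'-w \in \{0,+1,+2\}$; a mixed pair drops to weight $0$ with probability $f_1$, stays at weight $1$ with probability $(1-f_1)\tfrac{2}{5}$, and rises to weight $2$ with probability $(1-f_1)\tfrac{3}{5}$, contributing $w'-w \in \{-1,0,+1\}$; and a $\Phi^\perp\Phi^\perp$ pair collapses to weight $0$ with probability $f_2$, to weight $1$ with probability $(1-f_2)\tfrac{2}{5}$, and stays at weight $2$ with probability $(1-f_2)\tfrac{3}{5}$, contributing $w'-w \in \{-2,-1,0\}$. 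Multiplying each selection probability by the matching conditional probability and collecting terms by net change then yields \eqref{eq:transition-noise}: the extreme transitions are pinned immediately, since $w'=w+2$ arises only from a $\Phi\Phi$ pair jumping to $\Phi^\perp\Phi^\perp$ and $w'=w-2$ only from a $\Phi^\perp\Phi^\perp$ pair collapsing to $\Phi\Phi$, while the entries $w'=w\pm1$ and $w'=w$ each receive contributions from two distinct local types that I would add and simplify.

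The one genuinely new feature relative to \cref{lem:noiseless}---and the only real place for bookkeeping error---is that the noise can flip $\Phi\Phi$ directly into $\Phi^\perp\Phi^\perp$ (and conversely), so the chain becomes pentadiagonal rather than tridiagonal, with nonzero $w'=w\pm2$ entries. The care required is therefore purely in attributing each of the five net-change values to the correct combination of (selected local type, noisy local outcome), since a given net change can originate from more than one type. As consistency checks I would verify that for each fixed $w$ the five entries of \eqref{eq:transition-noise} sum to $1$, and that the noiseless specialization $f_0=1$, $f_1=f_2=0$ collapses the matrix back to \eqref{eq:noiseless-transition}.
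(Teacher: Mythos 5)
Your proposal is correct and is exactly the paper's route: the paper's proof consists of a single sentence stating that the argument of \cref{lem:noiseless} goes through unchanged with the noisy local transitions \eqref{eq:two-q-noisy} substituted for \eqref{eq:two-q}, which is precisely the conditioning-on-pair-type computation you spell out (and your entries, e.g. $T_{w+2,w}=\tfrac{(n-w)(n-w-1)}{n(n-1)}\cdot\tfrac{3}{5}(1-f_0)$, reproduce \eqref{eq:transition-noise} term by term). Your added consistency checks (rows summing to $1$, and recovery of \eqref{eq:noiseless-transition} at $f_0=1$, $f_1=f_2=0$) both hold and go beyond what the paper records.
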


\begin{proof}
The proof is analogous to the proof of \cref{lem:noiseless}, except that now we use the transitions in \cref{eq:two-q-noisy} instead of the noiseless transitions in \cref{eq:two-q}.
\end{proof}

After applying $G$ gates, the state distribution is given by $\kett{x_G} = T^G \kett{x_0}$, where $T^G$ denotes the $G$-fold matrix power of the transition matrix $T$ defined by \cref{eq:transition-noise} and the initial vector $\kett{x_0}$ is defined in \cref{eq:init-state}.

\begin{theorem}[Performance of finite-depth random bilocal Clifford protocols]\label{thm:performance_supp}
For every error weight distribution $\kett{x}$, 
\begin{equation}
    \begin{gathered}
        p_{\acc \land \Phi} = \E_{w \sim \kett{x}}\qty[\frac{3^{-w} \binom{m}{w}}{\binom{n}{w}}] \qq{and} \\
        p_\acc = \E_{w \sim \kett{x}}\qty[\frac{3^{-w}}{\binom{n}{w}} \sum_{j=\max\{0,w-m\}}^{\min\{w,k\}} \binom{k}{j} \binom{m}{w-j} 3^j]
    \end{gathered}
\end{equation}
\end{theorem}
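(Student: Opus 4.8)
The plan is to reduce the computation to a Pauli-by-Pauli counting argument, exploiting the fact that each basis vector $\kett{w}$ is, by construction, the uniform classical mixture over all $\binom{n}{w}3^w$ Bell-basis states $\ketbra{P}$ whose underlying Pauli $P$ has weight exactly $w$. Since both $p_\acc=\Tr[\Pi_{\mathrm{match}}\,\sigma]$ and $p_{\acc\land\Phi}=\Tr[(\Pi_{\mathrm{match}}\otimes\Phi^{\otimes k})\,\sigma]$ are linear in the (Bell-diagonal) state $\sigma$, and the $\ket{P}$ are mutually orthogonal, there are no cross terms: it suffices to evaluate each trace on a single $\ketbra{P}$ and then average over the induced distribution of the weight $w$.

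First I would record the per-pair measurement behaviour. Writing $\Pi_{\mathrm{match}}=\bigotimes_{i=1}^m\frac{\id+Z_{A_i}Z_{B_i}}{2}$, a direct check on the four Bell states $\ket{I},\ket{X},\ket{Y},\ket{Z}$ shows that $\bra{P}\Pi_{\mathrm{match}}\ket{P}\in\{0,1\}$, equalling $1$ precisely when each of the first $m$ characters of $P$ lies in $\{I,Z\}$ (i.e.\ has a vanishing syndrome bit, consistent with \cref{lem:syndrome-prob}). Factorizing $\ket{P}=\ket{P_m}\otimes\ket{P_k}$ across the measured and retained registers gives $\bra{P}(\Pi_{\mathrm{match}}\otimes\Phi^{\otimes k})\ket{P}=\bra{P_m}\Pi_{\mathrm{match}}\ket{P_m}\cdot\bra{P_k}\Phi^{\otimes k}\ket{P_k}$, and orthonormality of the Bell basis yields $\bra{P_k}\Phi^{\otimes k}\ket{P_k}=\delta_{P_k,I^{\otimes k}}$. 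Hence, for a fixed $P$, acceptance-and-target succeeds iff all non-identity characters of $P$ are $Z$'s confined to the first $m$ positions, whereas bare acceptance succeeds iff the first $m$ characters all lie in $\{I,Z\}$, the last $k$ being unconstrained.

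Next I would carry out the counting at fixed weight $w$. For $p_{\acc\land\Phi}$, the favourable Paulis are those with $w$ copies of $Z$ placed among the first $m$ positions, of which there are $\binom{m}{w}$; dividing by the total count $\binom{n}{w}3^w$ gives $3^{-w}\binom{m}{w}/\binom{n}{w}$ (correctly vanishing when $w>m$). For $p_\acc$, I would split according to $j$, the number of non-identity characters landing in the last $k$ positions: these contribute $\binom{k}{j}3^j$ configurations, while the remaining $w-j$ non-identity characters must be $Z$'s among the first $m$ positions, contributing $\binom{m}{w-j}$. Summing over the admissible range $\max\{0,w-m\}\le j\le\min\{w,k\}$ and normalizing by $\binom{n}{w}3^w$ reproduces the bracketed expression. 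Averaging each quantity over $w\sim\kett{x}$, where $\kett{x}=\sum_w c_w\kett{w}$ assigns probability $c_w$ to the value $w$, then produces the two stated expectations.

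The argument is essentially bookkeeping; the one place demanding care is confirming that there are genuinely no cross terms, which hinges on the structural fact---inherited from the Clifford twirl \eqref{eq:Clifford_twirl_2} and propagated through the transition rules \eqref{eq:two-q-noisy}---that $\sigma$ is diagonal in the Bell basis, so that $\kett{w}$ really is an incoherent uniform mixture of the $\ketbra{P}$ rather than a coherent superposition. Once this is secured, linearity of the trace reduces everything to the elementary combinatorial counts above.
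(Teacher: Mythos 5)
Your proof is correct and takes essentially the same approach as the paper's: condition on the error weight $w$, split the errors between the $m$ measured and $k$ unmeasured positions, and pick up a factor of $1/3$ for each error in a measured slot (only an undetected $Z$ passes the parity check), yielding exactly the stated sums. The only difference is presentational---you expand each $\Phi^\perp$ into explicit Pauli strings and count among $\binom{n}{w}3^w$ configurations, whereas the paper reasons directly with $\Phi^\perp$ placements and a probability-$1/3$ argument; the bookkeeping is identical.
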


\begin{proof}
Consider a state with error weight $w$. To analyze the protocol's performance, we need to understand both how measurement outcomes and fidelity contributions arise from the different possible arrangements of $\Phi$ and $\Phi^\perp$ states.

To calculate $p_\acc$, we sum over all ways the $w$ errors could be distributed between measured and unmeasured positions. Let $j$ denote the number of $\Phi^\perp$ states (i.e., errors) in unmeasured positions, leaving $w-j$ in measured positions. We will pick up a factor $1/3$ for errors that are in the measured positions (since with probability $1/3$, the error will be an undetectable $Z$ error). Therefore, the contribution to $p_\acc$ from a particular error weight $w$ is
\begin{equation}
\frac{1}{\binom{n}{w}} \sum_{j=\max\{0,w-m\}}^{\min\{w,k\}} \binom{k}{j} \binom{m}{w-j} 3^{-(w-j)}.
\end{equation}

For the numerator $p_{\acc \land \Phi}$, we can follow the same approach, considering all ways the $w$ errors could be distributed between measured and unmeasured positions. However, we see that any terms with errors on the unmeasured positions have zero contribution --- that is, we must have $j=0$. Therefore, the contribution to $p_{\acc \land \Phi}$ from an error weight $w$ is just $3^{-w} \frac{\binom{m}{w}}{\binom{n}{w}}$, since $\binom{m}{w}$ counts the number of ways the weight $w$ error can be distributed amongst the $m$ check qubits.
\end{proof}

\end{document}